\newtheorem{theorem}{Theorem}
\newtheorem{definition}[theorem]{Definition}
\newtheorem{example}[theorem]{Example}
\newtheorem{remark}[theorem]{Remark}
\newenvironment{proof}[1][Proof]{\noindent \textbf{#1.} }{\  \rule{0.5em}{0.5em}}
\begin{document}

\title{\textbf{New robust statistical procedures for polytomous logistic
regression models}}
\author{Castilla, E.$^{1}$; Ghosh, A$^{2}$; Mart\'{\i}n, N.$^{1}$ and Pardo, L.$%
^{1}$ \\
$^{1}${\small Complutense University of Madrid, 28040 Madrid, Spain} \\
$^{2}${\small Indian Statistical Institute, Kolkata, India}}
\date{\today }
\maketitle

\begin{abstract}
This paper derives a new family of estimators, namely the minimum density power divergence estimators, 
as a robust generalization of the maximum likelihood estimator for the polytomous logistic regression model. 
Based on these estimators, a family of Wald-type test statistics for linear hypotheses is introduced. 
Robustness properties of both the proposed estimators and the test statistics are theoretically studied 
through the classical influence function analysis. 
Appropriate real life  examples are presented to justify the requirement of  
suitable robust statistical procedures in place of the likelihood based inference for  
the polytomous logistic regression model. The validity of the theoretical results 
established in the paper are further confirmed empirically through suitable simulation studies.  
Finally, an approach for the data-driven selection of the robustness tuning parameter is  
proposed with empirical justifications.
\end{abstract}

\noindent \textbf{MSC}{\small : 62F35, 62F05 }

\noindent \textbf{Keywords}{\small : Influence function; Minimum density power divergence estimators; Polytomous logistic regression; Robustness;  Wald-type test statistics.}

\section{Introduction\label{S1}}

The polytomous logistic regression model (PLRM) is widely used in health and life sciences 
for analyzing nominal qualitative response variables
(e.g., Daniels and Gatsonis, 1997; Blizzard and Hosmer, 2007; Bull, Lewinger  and Lee, 2007; 
Dreassi, 2007; Biesheuvel et al., 2008; Bertens et al., 2015; Dey, Raheem and Lu, 2016; 
Ke, Fu and Zhang, 2016, and the references therein). 
Such examples occur frequently in medical studies where  
disease symptoms  may be classified as absent, mild or severe,
the invasiveness of a tumor may be classified as in situ, locally invasive, or metastatic, etc.
The qualitative response models specify the multinomial distribution for such a response variable with
individual category probabilities being modeled as a function of suitable explanatory variables. 
One such popular model is the PLRM, where the logit function is used 
to link the category probabilities with the explanatory variables.

Mathematically, let us assume that the nominal
outcome variable $\tilde{Y}$ has $d+1$ categories $C_{1},...,C_{d+1}$ and we
observe $\tilde{Y}$ together with $k$ explanatory variables with given values $x_{h}$, $%
h=1,...,k$. In addition, assume that $\boldsymbol{\beta }_{j}^{T}=\left(
\beta _{0j},\beta _{1j},...,\beta _{kj}\right) ,$ $j=1,...,d$, is a vector
of unknown parameters and $\boldsymbol{\beta }_{d+1}$ is a $(k+1)$%
-dimensional vector of zeros; i.e., the last category $C_{d+1}$ has been
chosen as the baseline category. Since the full parameter vector $%
\boldsymbol{\beta }^{T}=(\boldsymbol{\beta }_{1}^{T},...,\boldsymbol{\beta }%
_{d}^{T})$ is $\nu $-dimensional with $\nu =d(k+1)$, the parameter space is 
$\Theta =%
\mathbb{R}
^{d(k+1)}$. Let $\pi _{j}\left( \boldsymbol{x},\boldsymbol{\beta }\right) =P(%
\tilde{Y}\in C_{j}\mid \boldsymbol{x},\boldsymbol{\beta }{)}$ denote the
probability that $\tilde{Y}$ belongs to the category $C_{j}$ for $%
j=1,...,d+1,$ when the vector of explanatory variable takes the value $%
\boldsymbol{x}^{T}=(x_{0},x_{1},\ldots ,x_{k})$, 
with $x_{0}=1$ being associated with the intercept ${\beta }_{0j}$. 
Then, the PLRM is given by 
\begin{equation}
\pi _{j}\left( \boldsymbol{x},\boldsymbol{\beta }\right) =\exp (\boldsymbol{%
	x}^{T}\boldsymbol{\beta }_{j})\left/ \left\{ 1+\sum_{h=1}^{d}\exp (
\boldsymbol{x}^{T}\boldsymbol{\beta }_{h})\right\} ,\right. \text{ }%
j=1,...,d+1.  \label{1}
\end{equation}%

Now assume that we have observed the data on $N$ individuals having responses $\tilde{\boldsymbol{y}}_i$ 
with associated covariate values (including intercept)  $\boldsymbol{x}_{i}\in 
\mathbb{R}
^{k+1}$, $i=1,...,N$, respectively. 
For each individual, let us introduce the corresponding tabulated response 
$\boldsymbol{y}_{i}=\left( y_{i1},...,y_{i,d+1}\right) ^{T}$ with $y_{ir}$ $%
=1 $ and $y_{is}=0$ for $s\in \left\{ 1,...,d+1\right\} -\left\{ r\right\} $
if $\tilde{\boldsymbol{y}}_i\in C_{r}$. The most common estimator of $%
\boldsymbol{\beta }$ under the PLRM is the maximum likelihood estimator (MLE),
which is obtained by maximizing the loglikelihood function, $\log 
\mathcal{L}%
\left( \boldsymbol{\beta }\right) \equiv
\sum_{i=1}^{N}\sum_{j=1}^{d+1}y_{ij}\log \pi _{j}\left( \boldsymbol{x}_{i},\boldsymbol{\beta }\right)$. 
One can then develop all the subsequent inference procedures based on 
the MLE $\widehat{\boldsymbol{\beta }}$ of $\boldsymbol{\beta }$.
Although the MLE has optimal asymptotic properties in relation to the efficiency for most cases, 
its serious lack of robustness against the outlying observations  is also a well-known problem. 
However, in any practical dataset it is quite natural to have outlying observations 
which can lead to incorrect inference for the likelihood based approach 
and can be very dangerous specially in applications like medical sciences. 
The above formulation of the PLRM is not exclusive only for distinct covariate values; 
it can also be applied, with the same notation, if  multiple responses are observed for the same covariate values. 
Let us begin with the following motivating example. 

\begin{example}[Mammography experience data]
	The mammography experience data, which assess factors associated with
	women's knowledge, attitude and behavior towards mammography, was
	introduced in Hosmer and Lemeshow (2000); it is a subset of the original study
	by the University of Massachusetts Medical School and recently studied by Mart\'{\i}n (2015). 
	It involves $N=412$ individuals, $k=8$ explanatory variables and a nominal response with $d+1=3$
	categories (studied in detail in Section \ref{S5}). 
	Here, all individuals do not have distinct covariate values so that their plots (e.g., Figure \ref{fig:motivating_example})
	only distinguish $125$ indices in its x-axis, which corresponds to the grouped observations 
	for 125 distinct covariates values available in the data. 
	Following Mart\'{\i}n (2015), the grouped observations associated with seven such distinct indices 
	can be considered as outliers. A ``good"  robust statistical inference procedure should 
	not get highly affected by the presence  of the outliers. 
	So, we compute the MLE of $\boldsymbol{\beta }$ under the PLRM for the full dataset 
	and also for the outliers deleted dataset and plot the corresponding (estimated) 
	category probabilities for each available distinct covariate values. 
	The left panel of Figure \ref{fig:motivating_example} presents these category probabilities for the second category, 
	which clearly indicates the significant variation of the MLE in the presence or absence of the outliers. 
	In addition, the mean deviations of the estimated probabilities with respect to the relative frequencies, 
	shown in Figure \ref{fig:motivating_example} (right), are always seen to be lower for the outlier deleted estimators.
	This introductory example clearly illustrates the non-robust nature of the MLE 
	and motivates us to look for a robust inference procedure that will	generate correct results with
	high efficiency even without removing the outlying observations. \hfill {$\square $}
	
	\begin{figure}[h]
		\includegraphics[width=8.8cm, height=5.8cm]{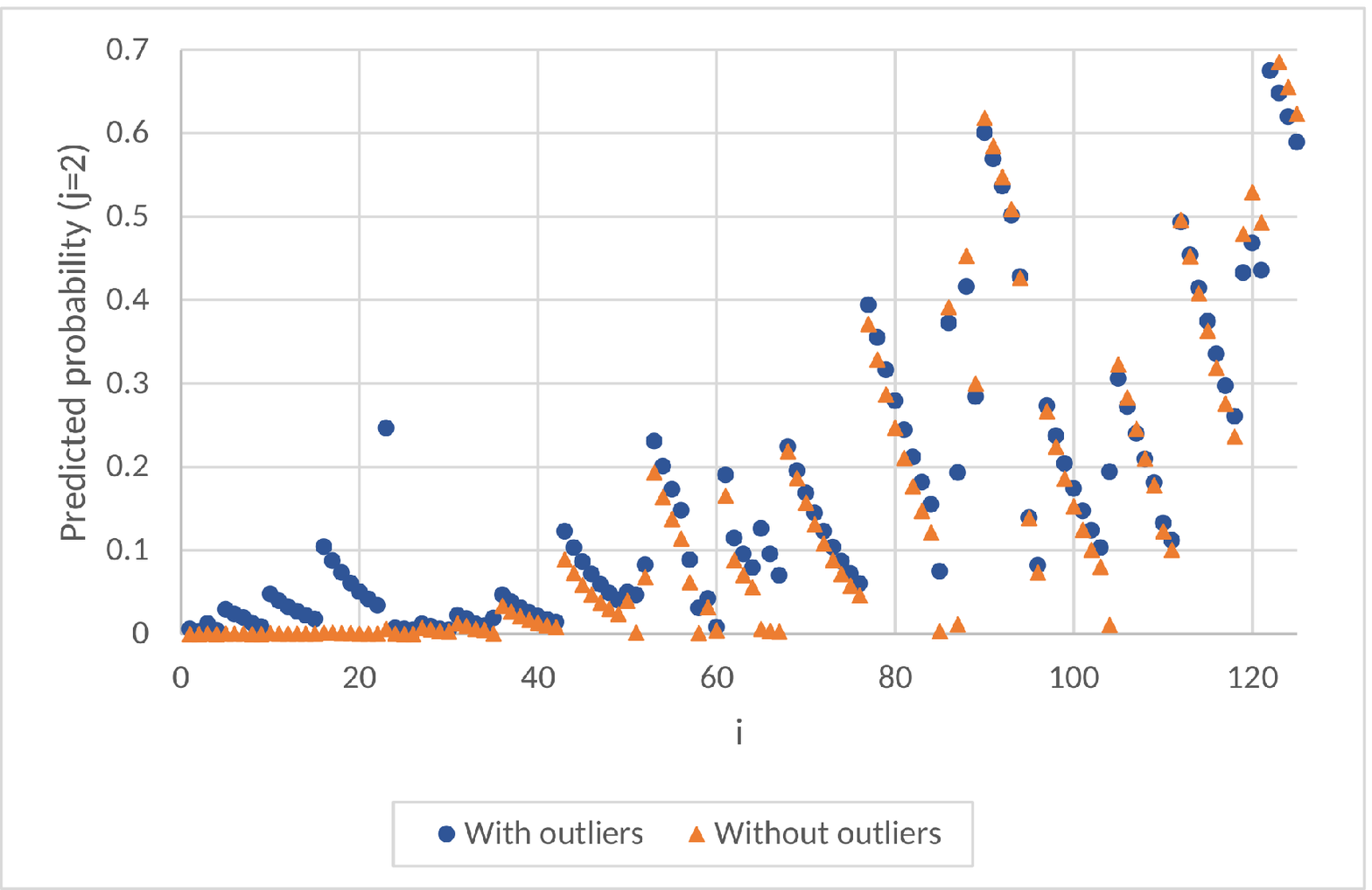} %
		\includegraphics[width=7.2cm, height=5.8cm]{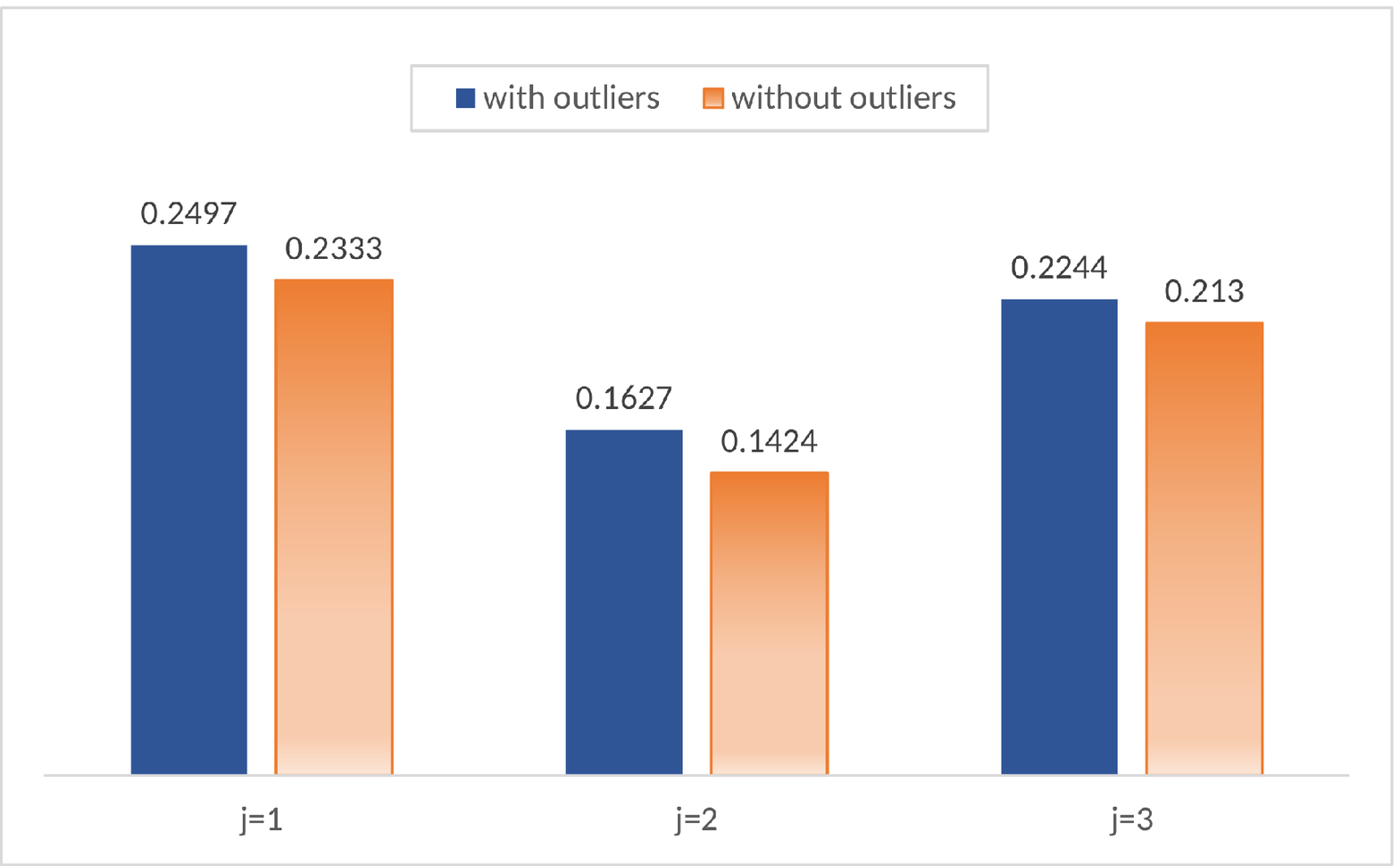}
		\caption{MLE based estimates of (a) the  response probabilities in category $j=2$ (left)
			and (b) the expected mean deviations of the estimated probabilities with respect to
			the corresponding relative frequencies for each category $j=1,2,3$ (right)
			in the Mammography experience data}
		\label{fig:motivating_example}
	\end{figure}
\end{example}

The term \textquotedblleft outlier\textquotedblright needs a clarification when referring Martín (2015). In logistic regression diagnostics, the Cook's distance is a tool for identifying influential observations associated with specific explanatory variables. In the current paper, the term  \textquotedblleft outlier\textquotedblright matches this notion of \textquotedblleft influency \textquotedblright but not directly the term \textquotedblleft outlying for having a large value of the residual\textquotedblright, as in Martín (2015).

There exist some alternative estimation procedures for the PLRM other
than the MLE, although they are used less often in practice; these include the estimators
proposed by Begg and Gray (1984) and its modification by Rom and Cohen (1995). 
The generalized method of moments (GMM) has also been considered (Hayashi, 2000)  
which has been shown to be consistent and fully efficient under suitable conditions.  
Gupta et al. (2006) considered the PLRM with only categorical covariates and discussed the
family of the minimum phi-divergence estimators (M$\phi $Es) that contains
the MLE as a particular case. The M$\phi $Es are BAN estimators and have
high efficiency for moderate and small sample sizes. However, the important
issue of robustness against outliers  was ignored in all these cited references. 
Ronchetti and Trojani (2001) pointed out the  non-robustness of the GMM estimators. 
Recently, Wang (2014) gave a robust modification of the GMM estimator, 
whereas earlier Victoria-Feser and Ronchetti (1997) had presented a robust estimator 
for grouped data with categorical covariates only.

In this paper, we present an alternative simple robust generalization of the
MLE for the general PLRM (\ref{1}) by using the density power divergence (DPD) measure of Basu et al.~(1998). 
The DPD based inference has become very popular in recent time specially because of its strong robustness
properties against outliers; see, e.g.,  Basu et al.\ (2016, 2017a,b), Ghosh et al.\ (2016), among others. 
In this paper, we first develop the estimator of $\boldsymbol{\beta }$ in the PLRM 
by minimizing a suitably defined DPD measure and derive its asymptotic distribution in Section \ref{S2}. 
In Section \ref{S3} the problem of testing linear hypotheses in the PLRM is considered using the newly proposed estimators. 
The robustness of both the proposed estimator and the test of hypothesis are studied theoretically 
through the influence function analysis in Section \ref{S4}. 
Section \ref{S5} presents some real data examples and Section \ref{S6} is devoted to  a simulation study. 
A method for the data-driven selection of the robustness tuning parameter is described in Section \ref{S7}. 
The paper ends with brief concluding remarks in Section \ref{S8}. 
Proofs of the results and further technical details are given in the online supplementary material.

\section{Minimum density power divergence estimator for the PLRM \label{S2}}

The MLE $\widehat{\boldsymbol{\beta }}$ of the parameter $\boldsymbol{%
	\beta }$ under the PLRM in (\ref{1}) can be equivalently defined as the
minimizer of the Kullback-Leibler divergence (KLD) between the probability vectors 
$\widehat{\boldsymbol{p}}=\frac{1}{N}(y_{11},...,y_{1,d+1},y_{21},...,$ $y_{2,d+1},...,y_{N1},...,y_{N,d+1})^{T}$ and $\boldsymbol{p}(\boldsymbol{\beta })=\frac{1}{N}(\boldsymbol{\pi }%
_{1}(\boldsymbol{\beta })^{T},...,\boldsymbol{\pi }_{N}(\boldsymbol{\beta }%
)^{T})^{T}$, where $\boldsymbol{\pi }_{i}\left( \boldsymbol{\beta }\right)
^{T}=(\pi _{i1}(\boldsymbol{\beta }),\ldots ,\pi _{i,d+1}(\boldsymbol{\beta }%
))$, with $\pi _{ij}(\boldsymbol{\beta })=\pi _{j}\left( \boldsymbol{x}_{i},%
\boldsymbol{\beta }\right) $. This follows from the expression of the
KLD between $\widehat{\boldsymbol{p}}$ and $%
\boldsymbol{p}\left( \boldsymbol{\beta }\right) $ given by 
\begin{equation}
D_{KL}\left( \widehat{\boldsymbol{p}},\boldsymbol{p}\left( \boldsymbol{\beta 
}\right) \right) =\sum_{i=1}^{N}\sum_{j=1}^{d+1}\dfrac{y_{ij}}{N}\log \frac{%
	y_{ij}}{\pi _{ij}\left( \boldsymbol{\beta }\right) }=c-\dfrac{1}{N}%
\sum_{i=1}^{N}\sum_{j=1}^{d+1}y_{ij}\log \pi _{ij}\left( \boldsymbol{\beta }%
\right) ,  \label{kull}
\end{equation}%
where $c$ does not depend on $\boldsymbol{\beta }$ and hence $\widehat{%
	\boldsymbol{\beta }}=\arg \min_{\boldsymbol{\beta }\mathbf{\in \Theta }%
}D_{KL}\left( \widehat{\boldsymbol{p}},\boldsymbol{p}\left( \boldsymbol{%
	\beta }\right) \right) $.

The KLD is a particular case of the general DPD measure $D_{\lambda}(\cdot, \cdot)$ 
between $\widehat{\boldsymbol{p}}$ and $\boldsymbol{p}\left(\boldsymbol{\beta }\right) $ 
\begin{equation*}
D_{\lambda }\left( \widehat{\boldsymbol{p}},\boldsymbol{p}\left( \boldsymbol{%
	\beta }\right) \right) =\frac{1}{N^{\lambda +1}}\sum\limits_{i=1}^{N}\left\{
\sum\limits_{j=1}^{d+1}\pi _{ij}^{\lambda +1}\left( \boldsymbol{\beta }%
\right) -\frac{\lambda +1}{\lambda }\sum\limits_{j=1}^{d+1}y_{il}\pi
_{ij}^{\lambda }\left( \boldsymbol{\beta }\right) +\frac{1}{\lambda }\right\}.
\end{equation*}%
Since the term $\frac{1}{\lambda }$ does not have any role in the minimization of 
$D_{\lambda }\left( \widehat{\boldsymbol{p}},\boldsymbol{p}\left( 
\boldsymbol{\beta }\right) \right) $ with respect to $\boldsymbol{\beta }$,
it is sufficient to consider 
$d_{\lambda }\left( \boldsymbol{y}_{i},\boldsymbol{\pi }%
_{i}(\boldsymbol{\beta })\right) =\sum_{j=1}^{d+1}\left\{ \pi _{ij}\left( 
\boldsymbol{\beta }\right) -\tfrac{\lambda +1}{\lambda }y_{ij}\right\} \pi
_{ij}\left( \boldsymbol{\beta }\right) ^{\lambda }$, 
and minimize
\begin{equation}
d_{\lambda }\left( \widehat{\boldsymbol{p}},\boldsymbol{p}\left( \boldsymbol{%
	\beta }\right) \right) =\frac{1}{N^{\lambda +1}}\sum\limits_{i=1}^{N}d_{%
	\lambda }\left( \boldsymbol{y}_{i},\boldsymbol{\pi }_{i}(\boldsymbol{\beta }%
)\right).  
\label{4.6b}
\end{equation}%

\begin{definition}
	The minimum DPD estimator (MDPDE) of $\boldsymbol{\beta }$ with tuning parameter $\lambda $ 
	in the PLRM (\ref{1}) is given by $\widehat{\boldsymbol{\beta }}_{\lambda }=\displaystyle\arg \min_{\boldsymbol{%
			\beta \in }\Theta }d_{\lambda }\left( \widehat{\boldsymbol{p}},\boldsymbol{p}%
	\left( \boldsymbol{\beta }\right) \right) $, where $d_{\lambda }\left( 
	\widehat{\boldsymbol{p}},\boldsymbol{p}\left( \boldsymbol{\beta }\right)
	\right) $ is given in (\ref{4.6b}).
\end{definition}

The DPD at $\lambda =0$ is defined by the limit of the expression in (\ref{4.6b}) as $\lambda \rightarrow 0$, 
which coincides with the KLD (\ref{kull}) up to an additive constant. 
Therefore, the MDPDE at $\lambda =0$ is nothing but the MLE. 
See Appendix for the detailed procedure to obtain the MDPDE.

Note that the random variables $\boldsymbol{Y}_{i}$ associated with the tabulated response 
$\boldsymbol{y}_{i}$,  given the covariate value $\boldsymbol{x}_{i}$ under the PLRM, 
are independent but non-homogeneous. So, we can apply the general theory from Ghosh and Basu (2013) 
to study the properties of the MDPDE under the PLRM. 
In fact, the minimization of the intuitive objective function (\ref{4.6b}) here 
is equivalent to the minimization of the average DPD measure
between the observed data and the model probability mass functions over each distributions (indexed by $i$), 
which is  proposed in Ghosh and Basu (2013) for the general non-homogeneous set-up. 
Hence, based on their general theory, we obtain the asymptotic
properties of our MDPDE under the PLRM which is presented in the following theorem.

\begin{theorem}
	\label{THM:Asymp_MDPDE} Consider the PLRM (\ref{1}) with the true parameter
	value being $\boldsymbol{\beta }_{0}$. Under Assumptions (A1)--(A7) of Ghosh
	and Basu (2013), there exists a consistent MDPDE $\widehat{%
		\boldsymbol{\beta }}_{\lambda }$ of $\boldsymbol{\beta}$ and $\sqrt{N}(\widehat{%
		\boldsymbol{\beta }}_{\lambda }-\boldsymbol{\beta }_{0})\overset{%
		\mathcal{L}%
	}{\underset{N\rightarrow \infty }{\rightarrow }}%
	\mathcal{N}%
	\left( \boldsymbol{0}_{d(k+1)},\boldsymbol{J}_{\lambda }^{-1}\left( 
	\boldsymbol{\beta }_{0}\right) \boldsymbol{V}_{\lambda }\left( \boldsymbol{%
		\beta }_{0}\right) \boldsymbol{J}_{\lambda }^{-1}\left( \boldsymbol{\beta }%
	_{0}\right) \right) $, where $\boldsymbol{J}_{\lambda }\left( \boldsymbol{%
		\beta }\right) =\lim\limits_{N\rightarrow \infty }\boldsymbol{\Psi }%
	_{N,\lambda }\left( \boldsymbol{\beta }\right) $ and $\boldsymbol{V}%
	_{\lambda }\left( \boldsymbol{\beta }\right) =\lim\limits_{N\rightarrow
		\infty }\boldsymbol{\Omega }_{N,\lambda }\left( \boldsymbol{\beta }\right) $
	with%
	\begin{equation}
	\boldsymbol{\Psi }_{N,\lambda }\left( \boldsymbol{\beta }\right) =\frac{1}{N}%
	\sum\limits_{i=1}^{N}\boldsymbol{\Delta }(\boldsymbol{\pi }_i^{\ast }\left( 
	\boldsymbol{\beta }\right) )\mathrm{diag}^{\lambda -1}\{\boldsymbol{\pi }_i^{\ast }\left( \boldsymbol{\beta }\right) \}
	\boldsymbol{\Delta }(\boldsymbol{\pi }_i^{\ast }\left( \boldsymbol{\beta }\right) )\otimes \boldsymbol{x}_{i}\boldsymbol{x}%
	_{i}^{T},  \label{EQ:Psi}
	\end{equation}%
	\begin{equation}
	\boldsymbol{\Omega }_{N,\lambda }\left( \boldsymbol{\beta }\right) =\frac{1%
	}{N}\sum\limits_{i=1}^{N}\boldsymbol{\Delta }(\boldsymbol{\pi }_i^{\ast
	}\left( \boldsymbol{\beta }\right) )\mathrm{diag}%
	^{\lambda -1}\{\boldsymbol{\pi }_i^{\ast }\left(
	\boldsymbol{\beta }\right) \}\boldsymbol{\Delta }(\boldsymbol{\pi }_i^{\ast
	}\left( \boldsymbol{\beta }\right) )
	\mathrm{diag}^{\lambda -1}\{\boldsymbol{\pi }_i^{\ast }\left( 
	\boldsymbol{\beta }\right) \}\boldsymbol{\Delta }(%
	\boldsymbol{\pi }_i^{\ast }\left( \boldsymbol{\beta }%
	\right) )\otimes \boldsymbol{x}_{i}\boldsymbol{x}_{i}^{T}.
	\label{EQ:Omega}
	\end{equation}%
	Here, $\boldsymbol{\Delta }(\boldsymbol{p})=\mathrm{%
		diag}(\boldsymbol{p})-\boldsymbol{pp}^{T}$, the vector with superscript $\ast $ denotes its subvector with the last element removed 
	and the vector with any other superscript takes the corresponding power for all its components.
\end{theorem}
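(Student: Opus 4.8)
The plan is to recognize the MDPDE for the PLRM as a special case of the minimum DPD estimator under the independent but non-homogeneous set-up of Ghosh and Basu (2013), and then to specialize their general asymptotic theorem by computing the relevant model-specific quantities explicitly. The observations $\boldsymbol{Y}_i$ are independent one-trial multinomials with cell probabilities $\boldsymbol{\pi}_i(\boldsymbol{\beta})$, so the $i$-th mass function is $f_i(\boldsymbol{y}_i;\boldsymbol{\beta})=\prod_{j=1}^{d+1}\pi_{ij}(\boldsymbol{\beta})^{y_{ij}}$, and the objective in \eqref{4.6b} is exactly the average of the per-observation DPD loss used in that reference. Their general theorem on the existence of a $\sqrt{N}$-consistent root of the MDPDE estimating equations and its asymptotic normality with sandwich covariance $\boldsymbol{J}_{\lambda}^{-1}\boldsymbol{V}_{\lambda}\boldsymbol{J}_{\lambda}^{-1}$ then applies once one verifies (A1)--(A7) and identifies $\boldsymbol{J}_{\lambda}$ and $\boldsymbol{V}_{\lambda}$.

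First I would derive the estimating equations by differentiating $d_{\lambda}(\boldsymbol{y}_i,\boldsymbol{\pi}_i(\boldsymbol{\beta}))$, which after dropping the common factor $(\lambda+1)$ gives $\sum_{i=1}^{N}\sum_{j=1}^{d+1}(\pi_{ij}-y_{ij})\pi_{ij}^{\lambda-1}\nabla_{\boldsymbol{\beta}}\pi_{ij}=\boldsymbol{0}$. The key model-specific computation is the gradient of the logit link: from \eqref{1} one obtains $\partial\pi_{ij}/\partial\boldsymbol{\beta}_m=\pi_{ij}(\delta_{jm}-\pi_{im})\boldsymbol{x}_i$, which assembled over $j,m\in\{1,\dots,d\}$ yields the Jacobian $\boldsymbol{\Delta}(\boldsymbol{\pi}_i^{\ast})\otimes\boldsymbol{x}_i$, with the baseline-category constraint $\boldsymbol{\beta}_{d+1}=\boldsymbol{0}$ accounting for the reduction to the subvector $\boldsymbol{\pi}_i^{\ast}$. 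This explains both the appearance of $\boldsymbol{\Delta}(\boldsymbol{p})=\mathrm{diag}(\boldsymbol{p})-\boldsymbol{pp}^T$ and the Kronecker factor $\boldsymbol{x}_i\boldsymbol{x}_i^T$ arising through the chain rule on the linear predictors $\boldsymbol{x}_i^T\boldsymbol{\beta}_j$.

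Next I would compute the two sandwich matrices. Writing the score contribution as a linear functional of $\boldsymbol{y}_i^{\ast}-\boldsymbol{\pi}_i^{\ast}$ with coefficient matrix $\boldsymbol{\Delta}(\boldsymbol{\pi}_i^{\ast})\,\mathrm{diag}^{\lambda-1}\{\boldsymbol{\pi}_i^{\ast}\}\otimes\boldsymbol{x}_i$, the matrix $\boldsymbol{J}_{\lambda}$ is the limit of the averaged negative expected Hessian; at $\boldsymbol{\beta}_0$ every term in which the derivative falls on $(\pi_{ij}-y_{ij})$ vanishes in expectation because $E_{\boldsymbol{\beta}_0}[y_{ij}]=\pi_{ij}$, leaving precisely $\boldsymbol{\Psi}_{N,\lambda}$ in \eqref{EQ:Psi}. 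For $\boldsymbol{V}_{\lambda}$ I would use that the multinomial covariance is $\mathrm{Var}(\boldsymbol{y}_i^{\ast})=\boldsymbol{\Delta}(\boldsymbol{\pi}_i^{\ast})$; sandwiching this between the score coefficient matrices produces the triple product $\boldsymbol{\Delta}\,\mathrm{diag}^{\lambda-1}\,\boldsymbol{\Delta}\,\mathrm{diag}^{\lambda-1}\,\boldsymbol{\Delta}\otimes\boldsymbol{x}_i\boldsymbol{x}_i^T$, which is exactly $\boldsymbol{\Omega}_{N,\lambda}$ in \eqref{EQ:Omega}. Consistency follows from the standard argument that the estimating function has mean zero at $\boldsymbol{\beta}_0$, and a central limit theorem for independent non-identically distributed summands applied to $\frac{1}{\sqrt{N}}\sum_i\boldsymbol{s}_i(\boldsymbol{\beta}_0)$, together with a Taylor expansion of the estimating equation about $\boldsymbol{\beta}_0$, delivers $\sqrt{N}(\widehat{\boldsymbol{\beta}}_{\lambda}-\boldsymbol{\beta}_0)\to\mathcal{N}(\boldsymbol{0},\boldsymbol{J}_{\lambda}^{-1}\boldsymbol{V}_{\lambda}\boldsymbol{J}_{\lambda}^{-1})$.

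I expect the main obstacle to be the verification of the Ghosh--Basu regularity conditions (A1)--(A7) for the specific covariate configuration, rather than the matrix algebra. In particular one must ensure that the Ces\`aro limits $\boldsymbol{J}_{\lambda}=\lim_N\boldsymbol{\Psi}_{N,\lambda}$ and $\boldsymbol{V}_{\lambda}=\lim_N\boldsymbol{\Omega}_{N,\lambda}$ exist and that $\boldsymbol{J}_{\lambda}$ is positive definite, which requires controlling the sequence $\{\boldsymbol{x}_i\}$ (boundedness together with a non-degeneracy condition preventing the averaged information from collapsing), and one must check the Lindeberg-type condition underlying the CLT for non-identical summands. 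The baseline-category identifiability, namely the careful bookkeeping that replaces the $(d+1)$-dimensional $\boldsymbol{\pi}_i$ by its $d$-dimensional subvector $\boldsymbol{\pi}_i^{\ast}$ throughout the derivatives, must also be handled consistently so that the reduced $d(k+1)\times d(k+1)$ matrices are the correct ones.
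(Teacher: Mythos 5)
Your proposal follows essentially the same route as the paper: both reduce the problem to the general asymptotic theorem of Ghosh and Basu (2013) for independent but non-homogeneous observations (after noting that the objective $d_{\lambda}(\widehat{\boldsymbol{p}},\boldsymbol{p}(\boldsymbol{\beta}))$ is the averaged per-observation DPD loss of that framework), and then identify the model-specific matrices. The only real difference is in how those matrices are computed. The paper evaluates $\boldsymbol{\Psi}_{N,\lambda}$ and $\boldsymbol{\Omega}_{N,\lambda}$ by explicitly summing $\boldsymbol{u}_i\boldsymbol{u}_i^{T}f_i^{\lambda+1}$ (and the analogous $1+2\lambda$ terms minus $\boldsymbol{\xi}_{i,\lambda}\boldsymbol{\xi}_{i,\lambda}^{T}$) over the $d+1$ points of the sample space $\mathcal{Y}^{\ast}$ and recombining the pieces into the $\boldsymbol{\Delta}$-products; you instead write the per-observation estimating function as a fixed linear map applied to $\boldsymbol{y}_i-\boldsymbol{\pi}_i$ and invoke the multinomial covariance $\mathrm{Var}(\boldsymbol{y}_i)=\boldsymbol{\Delta}(\boldsymbol{\pi}_i)$, which delivers the triple product for $\boldsymbol{\Omega}_{N,\lambda}$ in one line. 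Since expectation over a finite sample space is the same sum, the two computations are equivalent; yours is tidier for $\boldsymbol{\Omega}_{N,\lambda}$, while the paper's term-by-term expansion makes the reduction to the $d$-dimensional starred quantities fully explicit. One slip you should correct: in your Hessian argument the vanishing and surviving terms are interchanged. Differentiating $\sum_j(\pi_{ij}-y_{ij})\pi_{ij}^{\lambda-1}\nabla\pi_{ij}$, it is the terms in which the derivative falls on the coefficient $\pi_{ij}^{\lambda-1}\nabla\pi_{ij}$ (leaving the factor $\pi_{ij}-y_{ij}$ intact) that vanish in expectation at the model, whereas the term in which the derivative falls on $(\pi_{ij}-y_{ij})$ produces the deterministic quantity $\nabla\pi_{ij}\,\pi_{ij}^{\lambda-1}\,\nabla\pi_{ij}^{T}$ and is precisely what survives to give $\boldsymbol{\Psi}_{N,\lambda}$; as literally written, your statement would make $\boldsymbol{J}_{\lambda}$ vanish. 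Finally, like the paper, you leave the verification of conditions (A1)--(A7) as hypotheses on the covariate sequence rather than proving them; that is consistent with the theorem as stated, which assumes those conditions outright.
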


\section{Wald-type test statistics for testing linear hypotheses \label{S3}}

Most  testing problems for $\boldsymbol{\beta }$ in the PLRM belong to the class of 
linear hypotheses given by
\begin{equation}
H_{0}:\boldsymbol{L}^{T}\boldsymbol{\beta }=\boldsymbol{h}\text{ against }%
H_{1}:\boldsymbol{L}^{T}\boldsymbol{\beta }\neq \boldsymbol{h},  \label{3.1}
\end{equation}%
where $\boldsymbol{L}$ is $d(k+1)\times r$ full rank matrix with $r\leq
d(k+1)$ and $\boldsymbol{h}$ an $r$-vector. Two important particular cases
are the testing problems for $H_{0}:\boldsymbol{\beta }=\boldsymbol{\beta }_{0}$ or $%
H_{0}:\boldsymbol{\beta }_{(s)}=\boldsymbol{0}$ against their respective
omnibus alternatives, where $\boldsymbol{\beta }_{(s)}$ is a subvector of $%
\boldsymbol{\beta }$. 

\begin{definition}
	Let $\widehat{\boldsymbol{\beta }}_{\lambda }$ be the MDPDE of $\boldsymbol{\beta }$ in the PLRM (\ref{1})
	and denote $\boldsymbol{M}_{N,\lambda }(\widehat{%
		\boldsymbol{\beta }}_{\lambda })=$ $\boldsymbol{\Psi }_{N,\lambda }^{-1}(%
	\widehat{\boldsymbol{\beta }}_{\lambda })\boldsymbol{\Omega }_{N,\lambda }(%
	\widehat{\boldsymbol{\beta }}_{\lambda })$ $\boldsymbol{\Psi }_{N,\lambda
	}^{-1}(\widehat{\boldsymbol{\beta }}_{\lambda })$.  
	Then, the family of Wald-type test statistics for testing the null hypothesis given in (\ref{3.1}) is defined as 
	\begin{equation}
	W_{N}(\widehat{\boldsymbol{\beta }}_{\lambda })=N(\boldsymbol{L}^{T}\widehat{%
		\boldsymbol{\beta }}_{\lambda }-\boldsymbol{h})^{T}\left\{\boldsymbol{L}^{T}%
	\boldsymbol{M}_{N,\lambda }(\widehat{\boldsymbol{\beta }}_{\lambda })%
	\boldsymbol{L}\right\}^{-1}(\boldsymbol{L}^{T}\widehat{\boldsymbol{\beta }}%
	_{\lambda }-\boldsymbol{h}).  \label{3.2}
	\end{equation}
\end{definition}

In particular,  since $\widehat{\boldsymbol{\beta }}_{\lambda =0}=\widehat{\boldsymbol{\beta }}$, 
the MLE of $\boldsymbol{\beta }$, and 
$\boldsymbol{M}_{N,\lambda =0}(\widehat{\beta })=\boldsymbol{\Psi }_{N,\lambda=0 }^{-1}(\widehat{\boldsymbol{\beta }})$ 
with 
\begin{equation*}
\boldsymbol{\Psi }_{N,\lambda=0 }\left( \boldsymbol{\beta }\right) =%
\boldsymbol{\Omega }_{N,\lambda=0 }\left( \boldsymbol{\beta }\right) =\frac{1}{%
	N}\sum\limits_{i=1}^{N}\boldsymbol{\Delta }(\boldsymbol{\pi }_i^{\ast }\left( 
\boldsymbol{\beta }\right) )\otimes \boldsymbol{x}_{i}%
\boldsymbol{x}_{i}^{T},
\end{equation*}%
being the Fisher information matrix, $W_{N}(\widehat{\boldsymbol{\beta }}_{\lambda=0 })$ 
becomes the classical Wald test statistic.

\begin{theorem}
	\label{THM:null_wald} The asymptotic distribution of the Wald-type test
	statistics, $W_{N}(\widehat{\boldsymbol{\beta }}_{\lambda })$, under the
	null hypothesis in (\ref{3.1}), is a chi-square distribution ($\chi_r^2$) 
	with $r$ degrees of freedom.
\end{theorem}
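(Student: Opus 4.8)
The plan is to recognize $W_N(\widehat{\boldsymbol{\beta}}_\lambda)$ as a quadratic form in an asymptotically normal $r$-vector whose limiting covariance is consistently estimated by the sandwich matrix inside the statistic, and then invoke the classical fact that such a quadratic form is asymptotically $\chi^2_r$. Write $\boldsymbol{\Sigma}_\lambda(\boldsymbol{\beta}_0)=\boldsymbol{J}_\lambda^{-1}(\boldsymbol{\beta}_0)\boldsymbol{V}_\lambda(\boldsymbol{\beta}_0)\boldsymbol{J}_\lambda^{-1}(\boldsymbol{\beta}_0)$ for the asymptotic covariance supplied by Theorem~\ref{THM:Asymp_MDPDE}. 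First I would use the null hypothesis directly: under $H_0$ we have $\boldsymbol{L}^T\boldsymbol{\beta}_0=\boldsymbol{h}$, so that $\boldsymbol{L}^T\widehat{\boldsymbol{\beta}}_\lambda-\boldsymbol{h}=\boldsymbol{L}^T(\widehat{\boldsymbol{\beta}}_\lambda-\boldsymbol{\beta}_0)$. Applying the linear map $\boldsymbol{L}^T$ to the asymptotic normality of $\sqrt{N}(\widehat{\boldsymbol{\beta}}_\lambda-\boldsymbol{\beta}_0)$ then gives
\[
\sqrt{N}(\boldsymbol{L}^T\widehat{\boldsymbol{\beta}}_\lambda-\boldsymbol{h})\overset{\mathcal{L}}{\underset{N\rightarrow\infty}{\rightarrow}}\mathcal{N}\left(\boldsymbol{0}_r,\;\boldsymbol{L}^T\boldsymbol{\Sigma}_\lambda(\boldsymbol{\beta}_0)\boldsymbol{L}\right).
\]
Since $\boldsymbol{L}$ has full column rank $r$ and $\boldsymbol{\Sigma}_\lambda(\boldsymbol{\beta}_0)$ is positive definite, the $r\times r$ matrix $\boldsymbol{L}^T\boldsymbol{\Sigma}_\lambda(\boldsymbol{\beta}_0)\boldsymbol{L}$ is nonsingular.

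Next I would establish that the estimated middle matrix is consistent for this limiting covariance. Because $\widehat{\boldsymbol{\beta}}_\lambda$ is consistent for $\boldsymbol{\beta}_0$ by Theorem~\ref{THM:Asymp_MDPDE}, and the maps $\boldsymbol{\beta}\mapsto\boldsymbol{\Psi}_{N,\lambda}(\boldsymbol{\beta})$ and $\boldsymbol{\beta}\mapsto\boldsymbol{\Omega}_{N,\lambda}(\boldsymbol{\beta})$ in (\ref{EQ:Psi})--(\ref{EQ:Omega}) are continuous in $\boldsymbol{\beta}$ and converge to $\boldsymbol{J}_\lambda(\boldsymbol{\beta}_0)$ and $\boldsymbol{V}_\lambda(\boldsymbol{\beta}_0)$ respectively, a continuous-mapping argument yields $\boldsymbol{M}_{N,\lambda}(\widehat{\boldsymbol{\beta}}_\lambda)\overset{P}{\rightarrow}\boldsymbol{\Sigma}_\lambda(\boldsymbol{\beta}_0)$, and hence $\boldsymbol{L}^T\boldsymbol{M}_{N,\lambda}(\widehat{\boldsymbol{\beta}}_\lambda)\boldsymbol{L}\overset{P}{\rightarrow}\boldsymbol{L}^T\boldsymbol{\Sigma}_\lambda(\boldsymbol{\beta}_0)\boldsymbol{L}$, whose inverse also converges in probability by continuity of matrix inversion at a nonsingular limit.

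Finally I would assemble the pieces via Slutsky's theorem. Setting $\boldsymbol{Z}_N=\sqrt{N}(\boldsymbol{L}^T\widehat{\boldsymbol{\beta}}_\lambda-\boldsymbol{h})$, the statistic reads $W_N(\widehat{\boldsymbol{\beta}}_\lambda)=\boldsymbol{Z}_N^T\{\boldsymbol{L}^T\boldsymbol{M}_{N,\lambda}(\widehat{\boldsymbol{\beta}}_\lambda)\boldsymbol{L}\}^{-1}\boldsymbol{Z}_N$; replacing the random weight matrix by its deterministic limit leaves the limiting law unchanged, so $W_N(\widehat{\boldsymbol{\beta}}_\lambda)$ shares the limit of $\boldsymbol{Z}_N^T\{\boldsymbol{L}^T\boldsymbol{\Sigma}_\lambda(\boldsymbol{\beta}_0)\boldsymbol{L}\}^{-1}\boldsymbol{Z}_N$. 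The standard quadratic-form lemma---if $\boldsymbol{Z}\sim\mathcal{N}(\boldsymbol{0}_r,\boldsymbol{A})$ with $\boldsymbol{A}$ nonsingular then $\boldsymbol{Z}^T\boldsymbol{A}^{-1}\boldsymbol{Z}\sim\chi^2_r$, as seen by whitening so that $\boldsymbol{A}^{-1/2}\boldsymbol{Z}\sim\mathcal{N}(\boldsymbol{0}_r,\boldsymbol{I}_r)$---then delivers the asserted $\chi^2_r$ limit.

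I expect the only delicate point to be the consistency of the sandwich estimator $\boldsymbol{M}_{N,\lambda}(\widehat{\boldsymbol{\beta}}_\lambda)$: mere pointwise convergence of $\boldsymbol{\Psi}_{N,\lambda}$ and $\boldsymbol{\Omega}_{N,\lambda}$ is insufficient once the random argument $\widehat{\boldsymbol{\beta}}_\lambda$ is substituted, so I would need locally uniform convergence (equicontinuity) of these averages in a neighborhood of $\boldsymbol{\beta}_0$, which is guaranteed by the regularity Assumptions (A1)--(A7) of Ghosh and Basu (2013); the remaining steps are routine applications of the delta method, Slutsky's theorem, and the quadratic-form lemma.
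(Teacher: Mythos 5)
Your proposal is correct and follows essentially the same route as the paper: under $H_0$ write $\boldsymbol{L}^T\widehat{\boldsymbol{\beta}}_\lambda-\boldsymbol{h}=\boldsymbol{L}^T(\widehat{\boldsymbol{\beta}}_\lambda-\boldsymbol{\beta}_0)$, apply the asymptotic normality of Theorem \ref{THM:Asymp_MDPDE} to get $\sqrt{N}(\boldsymbol{L}^T\widehat{\boldsymbol{\beta}}_\lambda-\boldsymbol{h})\rightarrow\mathcal{N}(\boldsymbol{0}_r,\boldsymbol{L}^T\boldsymbol{\Sigma}_\lambda(\boldsymbol{\beta}_0)\boldsymbol{L})$, and conclude via the standard quadratic-form lemma after replacing the estimated weight matrix by its limit. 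The only difference is that you make explicit the consistency of $\boldsymbol{M}_{N,\lambda}(\widehat{\boldsymbol{\beta}}_\lambda)$ for $\boldsymbol{\Sigma}_\lambda(\boldsymbol{\beta}_0)$ (via locally uniform convergence and Slutsky), a step the paper asserts without elaboration.
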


Based on Theorem \ref{THM:null_wald}, the null hypothesis in (\ref{3.1})
will be rejected if $W_{N}(\widehat{\boldsymbol{\beta }}_{\lambda })>\chi_{r,\alpha }^{2}$,
the upper $\alpha$-th quantile of $\chi_r^2$. 
See Appendix for some results about its  power function.

\section{Robustness Properties\label{S4}}


We first study the robustness of the proposed MDPDE of $\boldsymbol{\beta }$\ under the PLRM
(\ref{1}) through the influence function analysis. 
For any estimator defined in terms of a statistical functional $\boldsymbol{U}(G)$ 
in the set-up of independent and identically distributed (IID) data from the true distribution function $G$, 
its influence function is defined as 
$\mathcal{IF}\left( \boldsymbol{t},\boldsymbol{U},G\right) =\lim\limits_{\epsilon
	\downarrow 0}\frac{\boldsymbol{U}(G_{\epsilon })-\boldsymbol{U}(G)}{\epsilon }
=\left. \frac{\partial \boldsymbol{U}(G_{\epsilon })}{\partial \epsilon }\right\vert _{\epsilon =0}$, 
where $G_{\epsilon }=(1-\epsilon )G+\epsilon \wedge _{\boldsymbol{t}}$ 
with $\epsilon $ being the contamination proportion 
and $\wedge _{\boldsymbol{t}}$ being the degenerate distribution at the contamination point $\boldsymbol{t}$. 
Thus, the (first order) influence function (IF), as a function of $\boldsymbol{t}$, measures the
standardized asymptotic bias (in its first order approximation) caused by the
infinitesimal contamination at the point $\boldsymbol{t}$. The maximum of
this IF over $\boldsymbol{t}$ indicates the extent of bias due to
contamination and hence lower its value, the  more robust the estimator is.

For the present case of PLRM, given the value of $\boldsymbol{x}_{i}$, $\boldsymbol{y}%
_{i} $s are independent but not-identically distributed. The extended
definition of IF for such non-homogeneous cases has been proposed in Huber
(1983) and Ghosh and Basu (2013, 2016). Following the approach of the later, let $G_{i}$
denote the true distribution function of $\boldsymbol{y}_{i}$ having joint mass function $%
g_{i}$ and $F_{i}\left( \boldsymbol{\beta }\right)$ denote the distribution
function under the assumption of PLRM having joint mass function 
$f_{i}(\boldsymbol{y}_{i},\boldsymbol{\beta })=\boldsymbol{y}_{i}^{T}\boldsymbol{\pi }_{i}\left(\boldsymbol{\beta }\right) 
=\sum_{j=1}^{d+1}y_{ij}\pi _{ij}({\boldsymbol{\beta }})$.
Denote $\boldsymbol{G}=(G_{1},\cdots ,G_{N})^{T}$, 
$\boldsymbol{F}(\boldsymbol{\beta }) =\left(F_{1}(\boldsymbol{\beta }) ,\cdots ,F_{N}(\boldsymbol{\beta })\right)^{T}$ 
and  $\mathcal{Y}=\{\boldsymbol{e}_{j,d+1}\}_{j=1}^{d+1}$, 
where $\boldsymbol{e}_{j,d+1}$ is the $j$-th column of the identity matrix of order $d+1$. 
Note that, $\mathcal{Y}$ is the sample space of the response variable $\boldsymbol{Y}_{i}$. 
Then, the statistical
functional $\boldsymbol{U}_{\lambda }(\boldsymbol{G})$ corresponding to the
proposed MDPDE, $\widehat{\boldsymbol{\beta }}_{\lambda }$,  of $\boldsymbol{\beta }$
is  defined as the minimizer of 
\begin{equation*}
H(\boldsymbol{\beta })=\sum\limits_{i=1}^{N}\left\{ \sum_{\boldsymbol{y}_{i}%
	\boldsymbol{\in }%
	\mathcal{Y}%
}f_{i}^{\lambda +1}(\boldsymbol{y}_{i},\boldsymbol{\beta })-\frac{\lambda +1%
}{\lambda }\sum_{\boldsymbol{y}_{i}\boldsymbol{\in }%
	\mathcal{Y}%
}f_{i}^{\lambda }(\boldsymbol{y}_{i},\boldsymbol{\beta })g_{i}(\boldsymbol{y}%
_{i})\right\} ,
\end{equation*}%
whenever it exists. 
When the assumption of PLRM holds with true parameter $\boldsymbol{\beta }_{0}$, 
we have $g_{i}(\boldsymbol{y}_{i})=f_{i}(\boldsymbol{y}_{i},\boldsymbol{\beta }_{0})$, 
and thus $H(\boldsymbol{\beta })=N^{\lambda +1}d_{\lambda }\left(\boldsymbol{p}\left(\boldsymbol{\beta }_{0}\right),
\boldsymbol{p}\left(\boldsymbol{\beta }\right) \right)$; 
this is minimized at $\boldsymbol{\beta}=\boldsymbol{\beta}_0$ 
implying the Fisher consistency of the MDPDE functional $\boldsymbol{U}_{\lambda}(\boldsymbol{G})$ at the PLRM.
Note that, in such non-homogeneous settings, outliers can be either in any one or more index $i\in
\{1,...,N\}$ for $g_{i}$. Following the general results
from Ghosh and Basu (2013), one can derive the IF of the MDPDE at $%
\boldsymbol{F}({\boldsymbol{\beta }}_{0})$ as given by%
\begin{equation*}
\mathcal{IF}_{i_{0}}(\boldsymbol{t}_{i_{0}};\boldsymbol{U}_{\lambda },\boldsymbol{F}_{{\boldsymbol{\beta }}_{0}})
=\boldsymbol{\Psi }_{N,\lambda}^{-1}({\boldsymbol{\beta }}_{0})\frac{1}{N}
\left( \boldsymbol{I}_{d},\boldsymbol{0}_{d}\right) 
\boldsymbol{\Delta }(\boldsymbol{\pi }_{i_{0}}\left( \boldsymbol{\beta }\right)\mathrm{diag}^{\lambda -1}\{\boldsymbol{\pi }_{i_{0}}^{\ast }\left( {\boldsymbol{\beta}}_{0}\right) \}
\left\{\boldsymbol{t}_{i_{0}}^{\ast }-\boldsymbol{\pi }^{\ast}\left({\boldsymbol{\beta }}_{0}\right) \right\}
\otimes \boldsymbol{x}_{i_{0}},
\end{equation*}%
when there is contamination in only one specific index $i_0$  at the point $\boldsymbol{t}_{i_{0}}\in \mathcal{Y}$ 
and 
\begin{equation*}
\mathcal{IF}%
(\underline{\boldsymbol{t}};\boldsymbol{U}_{\lambda },\boldsymbol{F}({\boldsymbol{\beta }}_{0}))
=\boldsymbol{\Psi }_{N,\tau }^{-1}(\boldsymbol{\beta }_{0})\frac{1}{N}\sum_{i=1}^{N}
\left( \boldsymbol{I}_{d},\boldsymbol{0}_{d}\right) 
\boldsymbol{\Delta }(\boldsymbol{\pi }_{i}\left( \boldsymbol{\beta }\right)\mathrm{diag}^{\lambda -1}\{\boldsymbol{\pi }_i^{\ast }\left( {\boldsymbol{\beta}}_{0}\right) \}
\left\{\boldsymbol{t}_{i}^{\ast }-\boldsymbol{\pi }^{\ast}\left({\boldsymbol{\beta }}_{0}\right) \right\} 
\otimes \boldsymbol{x}_{i},
\end{equation*}%
when the contamination is in all $N$ distributions $G_{i}$ at the points $\boldsymbol{t}_{i}\in\mathcal{Y}$, respectively,
for $i=1,\ldots ,N$. Here $\underline{\boldsymbol{t}} = (\boldsymbol{t}_1, \ldots, \boldsymbol{t}_N)$.

Note that, these IFs are bounded in large $\boldsymbol{x}_{i}$s (leverage
points) for all $\lambda >0$, but unbounded at $\lambda =0$ (the MLE). This
implies that the proposed MDPDEs with $\lambda >0$ are robust against leverage points,
but the classical MLE is clearly non-robust. However, we cannot
directly infer about the robustness against outliers in the response
variable which are, in fact, the misspecification errors. This is because, in
such cases, $\boldsymbol{t}_{i}$ changes its indicative category only (does
not go to infinity) and the IFs are bounded in $\boldsymbol{t}_{i}$s for all 
$\lambda \geq 0$. But, it is well studied that the MLE at $\lambda =0$ is
highly non-robust against these misspecification errors. 
In the next two sections, we will empirically illustrate 
the strong robustness of our proposed MDPDEs with $\lambda >0$
also against such misspecification errors.

Next, we study the robustness of the proposed Wald-type test statistics.
The IF of a testing procedure, as introduced by Rousseeuw and Ronchetti (1979) for IID data, 
is also defined as in the case of estimation but with the statistical functional corresponding to the test statistics
and  it is studied under the null hypothesis. 
This concept has been extended to the non-homogeneous data,
which is the case here, by Ghosh and Basu (2017), Aerts and Haesbroeck
(2017) and Basu et al.~(2017b); the last one considered the general Wald-type test
statistics. The associated statistical functional for our Wald-type test statistics (\ref{3.2}) can be defined as (ignoring the multiplier $N$) 
\begin{equation}
T_{\lambda }\left( \boldsymbol{G}\right) 
=\left\{ \boldsymbol{L}^{T}\boldsymbol{U}_{\lambda }(\boldsymbol{G})-\boldsymbol{h}\right\}^{T}
\left\{\boldsymbol{L}^{T}\boldsymbol{M}_{N, \lambda }(\boldsymbol{U}_{\lambda }(\boldsymbol{\boldsymbol{G}}))\boldsymbol{L}\right\}^{-1}
\left\{ \boldsymbol{L}^{T}\boldsymbol{U}_{\lambda }(\boldsymbol{G})-\boldsymbol{h}\right\},  \label{EQ:WT_func}
\end{equation}%
where $\boldsymbol{U}_{\lambda }(\boldsymbol{G})$ is the  MDPDE functional. 
Again we can have contamination in either one or all distributions as before 
and the corresponding IFs can be obtained from the general theory of Basu et al.\ (2017b). 
Letting $\boldsymbol{\beta }_{0}$ be the true null parameter value under (\ref{3.1}), 
the (first order) IFs in either case of contamination become identically zero 
at $\boldsymbol{G}=\boldsymbol{F}({\boldsymbol{\beta }}_{0})$.

Thus the first order bias approximation is not much informative for the Wald-type test statistics  
and we need to study their second order bias approximation, 
quantified through the second order IFs which we denote by $IF^{(2)}$. 
It is defined as the second order derivative of the functional value at $G_{\varepsilon}$ with respect to the
contamination proportion $\varepsilon\rightarrow0^+$. 
Following Basu et al.~(2017b), we can show that,
at the null distribution $\boldsymbol{G}=\boldsymbol{F}({\boldsymbol{\beta }}_{0})$, 
\begin{align*} 
\mathcal{IF}^{(2)}(\boldsymbol{t}_{i_{0}};T_{\lambda },\boldsymbol{F}({\boldsymbol{\beta }}_{0})) 
& =2%
\mathcal{IF}(\boldsymbol{t}_{i_{0}};\boldsymbol{U}_{\lambda },\boldsymbol{F}({\boldsymbol{\beta }}_{0}))^{T}
\boldsymbol{L}^{T}\left\{ \boldsymbol{L\boldsymbol{M}}_{N,\lambda }(\boldsymbol{\boldsymbol{\beta }}_{0})
\boldsymbol{L}^{T}\right\}^{-1}\boldsymbol{L}
\mathcal{IF}(\boldsymbol{t}_{i_{0}};\boldsymbol{U}_{\lambda },\boldsymbol{F}({\boldsymbol{\beta }}_{0})), \\
\mathcal{IF}^{(2)}(\underline{\boldsymbol{t}};T_{\lambda },\boldsymbol{F}({\boldsymbol{\beta }}_{0})) 
& =2%
\mathcal{IF}(\underline{\boldsymbol{t}};\boldsymbol{U}_{\lambda },\boldsymbol{F}({\boldsymbol{\beta }}_{0}))^{T}
\boldsymbol{L}^{T}\left\{\boldsymbol{L\boldsymbol{\boldsymbol{M}}}_{N, \lambda }(\boldsymbol{\beta}_{0})
\boldsymbol{L}^{T}\right\}^{-1}\boldsymbol{L}
\mathcal{IF}(\underline{\boldsymbol{t}};\boldsymbol{U}_{\lambda },\boldsymbol{F}({\boldsymbol{\beta }}_{0})),
\end{align*}%
for contamination only in $i_{0}$-th distribution or in all distributions, respectively. 
Clearly, the boundedness of these IFs directly depend on that of the MDPDE. 
Hence, the proposed Wald-type test statistics with $\lambda >0$ are expected to be robust,
whereas the classical Wald test statistic at $\lambda =0$ is non-robust against infinitesimal contamination.

\section{Numerical Examples\label{S5}}

\subsection{Mammography experience data (Hosmer and Lemeshow)}

Let us reconsider our motivating example, the mammography experience data, 
to study the performance of our proposed robust procedures. 
As noted in Section \ref{S1}, 
this dataset involves $k=8$ explanatory variables with all of them being dummy variables, except one. 
Among them, three dummy variables are associated with four
categories of the variable SYMPT (`You do not need a mammogram unless you
develop symptoms': 1, strongly agree; 2, agree; 3, disagree; 4, strongly
disagree), the fourth dummy variable represents two categories of variable
HIST (`Mother or sister with a history of breast cancer': 1, no; 2, yes),
the fifth dummy variable corresponds to two categories of variable BSE (`Has
anyone taught you how to examine your own breasts?': 1, no; 2, yes) and other
two dummy variables are associated with three categories of the variable DETC
(`How likely is it that a mammogram could find a new case of breast
cancer?': 1, not likely; 2, somewhat likely; 3, very likely). The final explanatory variable is a quantitative variable representing the PB score (\textquotedblleft Perceived
benefit of mammography\textquotedblright : values between $5$ and $20$, with
the lowest value representing the highest benefit perception). The response
variable ME (Mammography experience) is a categorical factor with three
levels: \textquotedblleft Never\textquotedblright , \textquotedblleft Within
a Year\textquotedblright\ and \textquotedblleft Over a
Year\textquotedblright\ ($d=2$). Hence, $\widehat{\boldsymbol{\beta }}%
_{\lambda }^{T}=(\widehat{\boldsymbol{\beta }}_{1,\lambda }^{T},\widehat{%
	\boldsymbol{\beta }}_{2,\lambda }^{T})$, with 
$ 
\widehat{\boldsymbol{\beta }}_{j,\lambda }^{T}=(\widehat{\beta }_{0j,\lambda
},\widehat{\beta }_{1j,\lambda },\dots,%
\widehat{\beta }_{8j,\lambda }),\quad j=1,2,
$
and $\widehat{\beta }_{1j,\lambda }=\widehat{\beta }_{SYMPT1,j,\lambda }$, $%
\widehat{\beta }_{2j,\lambda }=\widehat{\beta }_{SYMPT2,j,\lambda }$, $%
\widehat{\beta }_{3j,\lambda }=\widehat{\beta }_{SYMPT3,j,\lambda }$, $%
\widehat{\beta }_{4j,\lambda }=\widehat{\beta }_{HIST,j,\lambda }$, $%
\widehat{\beta }_{5j,\lambda }=\widehat{\beta }_{BSE,j,\lambda }$, $\widehat{%
	\beta }_{6j,\lambda }=\widehat{\beta }_{DETC1,j,\lambda }$, $\widehat{\beta }%
_{7j,\lambda }=\widehat{\beta }_{DETC2,j,\lambda }$, $\widehat{\beta }%
_{8j,\lambda }=\widehat{\beta }_{PB,j,\lambda }$.

As suggested by Mart\'{\i}n (2015), the groups of observations associated
with covariate values $\boldsymbol{x}_{i}\ $for $i$ equal to $1$, $3$%
, $17$, $35$, $75$, $81$ and $102$ can be treated as outliers; the MDPDEs of 
$\boldsymbol{\beta }$ obtained with and without these outliers are presented in Table 1--2 in Appendix \ref{tables}. 
One important difference is observed in $\widehat{\beta }_{52,\lambda }=\widehat{\beta }_{BSE,2,\lambda }$; 
when outliers are deleted $\widehat{\beta }_{BSE,2,\lambda }>0$ 
whereas $\widehat{\beta }_{BSE,2,\lambda }<0$ for the full data. 
Although it is observed for all values of  $\lambda $, 
MDPDEs with moderate $\lambda>0 $ are quite near to zero in both the cases, 
and hence they may not be significantly different. 
A deeper study of the MDPDEs in this example is presented below.

\textit{Study of the Efficiency:} 
For each category of the response variable, we have calculated the estimated mean deviation
(EMD) of the predicted probabilities with respect to the relative
frequencies of the response variable under presence or absence of the outliers.
These are defined as $\widehat{md}_{j}(\widehat{\boldsymbol{\beta }}_{\lambda }^{\star })
=\dfrac{1}{N}\sum_{i=1}^{N}\left\vert \pi _{ij}(\widehat{\boldsymbol{\beta }}_{\lambda }^{\star })
-\widehat{p}_{ij}\right\vert $, $j=1,2,3$, with $\star \in \{pres,abs\}$ and are reported  in Table 3 in Appendix \ref{tables}. 
We can see that the MLE yields the minimum average EMD, defined as 
$\overline{\widehat{md}}(\widehat{\boldsymbol{\beta }}_{\lambda}^{\star })
=\frac{1}{d+1}\sum_{j=1}^{d+1}\widehat{md}_{j}(\widehat{\boldsymbol{\beta }}_{\lambda }^{\star })$, 
and hence leads to the highest efficiency in the absence of outliers. 
This efficiency decreases as $\lambda$ increases, but the loss is not very significant. 
The simulation study, presented in the next section, indicates the same behavior for pure data.

\textit{Study of Robustness:} 
From Table 3 in Appendix \ref{tables}, the average  EMDs in the presence of outliers decrease significantly as $\lambda $ increases from $0$. 
This illustrates the increasing robustness of our proposed MDPDEs with increasing $\lambda >0$.
In order to further examine the robustness of the MDPDEs, 
we compute the average mean deviations between the predicted probabilities 
obtained in presence and absence of the outliers, as given by
$\overline{\widehat{md}}(\widehat{\boldsymbol{\beta }}_{\lambda }^{pres},\widehat{\boldsymbol{\beta }}_{\lambda }^{abs})
=\frac{1}{d+1}\sum_{j=1}^{d+1}
\widehat{md}_{j}(\widehat{\boldsymbol{\beta }}_{\lambda }^{pres},\widehat{\boldsymbol{\beta }}_{\lambda }^{abs})$ 
with $\widehat{md}_{j}(\widehat{\boldsymbol{\beta }}_{\lambda }^{pres},\widehat{\boldsymbol{\beta }}_{\lambda }^{abs})
=\dfrac{1}{N}\sum_{i=1}^{N}\left\vert \pi _{ij}(\widehat{\boldsymbol{\beta }}_{\lambda }^{pres})
-\pi_{ij}(\widehat{\boldsymbol{\beta }}_{\lambda}^{abs})\right\vert$, $j=1,2,3$. 
Their values, as presented in Table 4 of Appendix \ref{tables}, clearly show that the MDPDE becomes more robust as $\lambda $ increases. 
This behavior is also illustrated in Figure \ref{fig:mamo_points}, 
where predicted category probabilities at each observed covariate values are shown 
for $\lambda =0$ (MLE, left) and $\lambda =0.7$ (right). 
The differences between blue and orange points, corresponding to the outlier deleted data and the full data,
respectively, are quite significant for the classical MLE but much more stable for the MDPDE at $\lambda =0.7$.

\begin{figure}[th]
	\begin{tabular}{ll}
		\includegraphics[width=8cm, height=5cm]{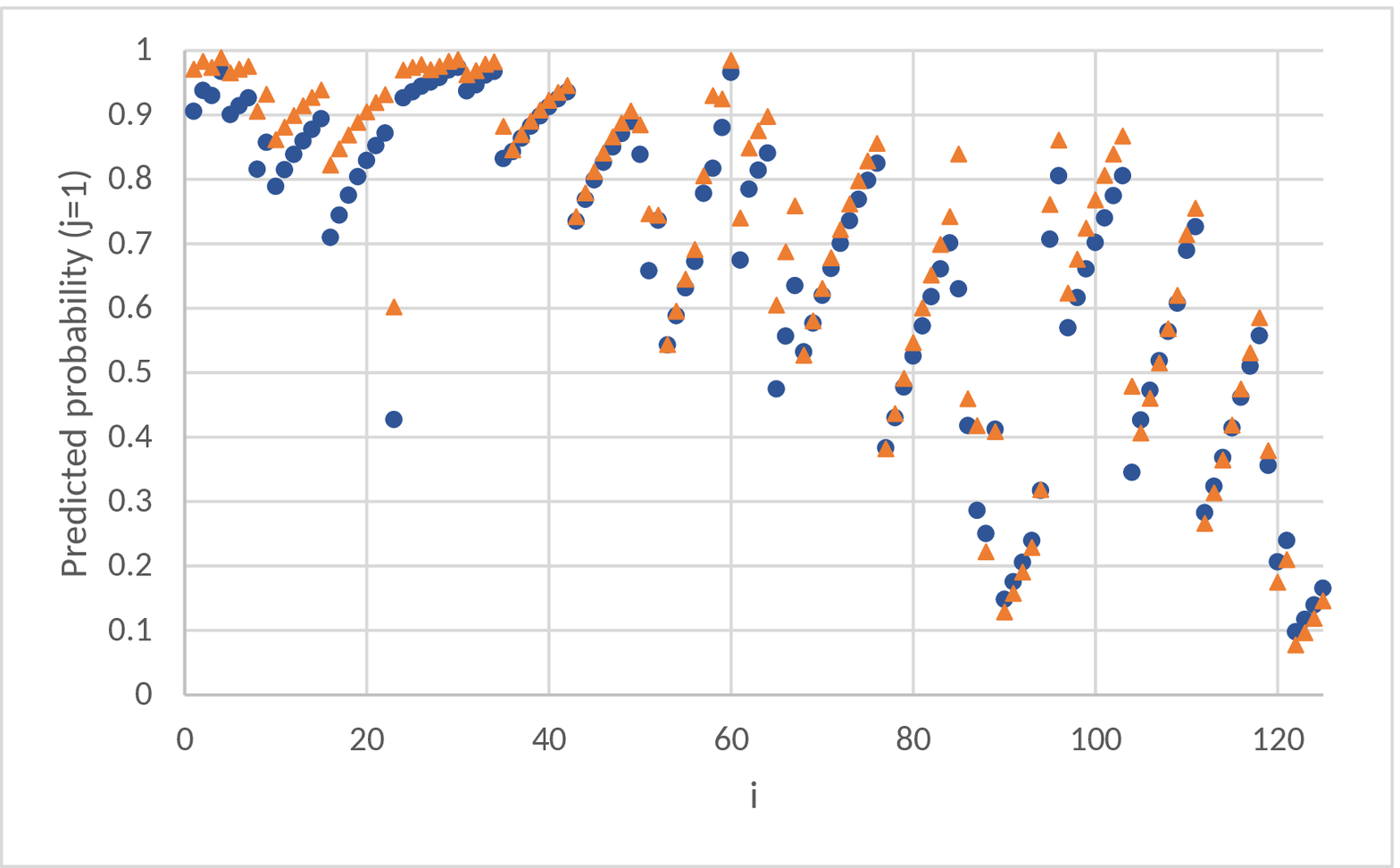}
		& \includegraphics[width=8cm, height=5cm]{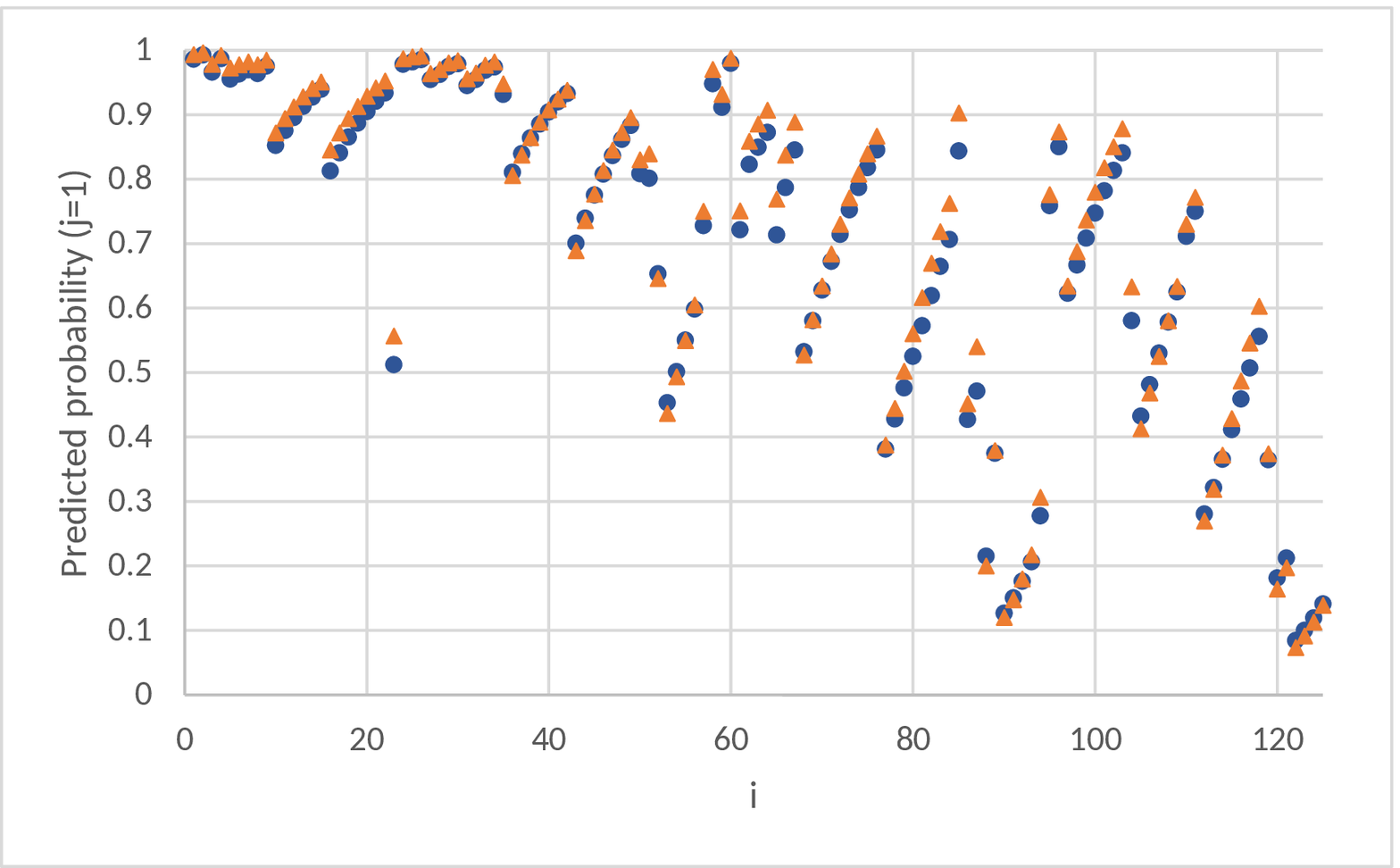}
		\\ 
		\includegraphics[width=8cm, height=5cm]{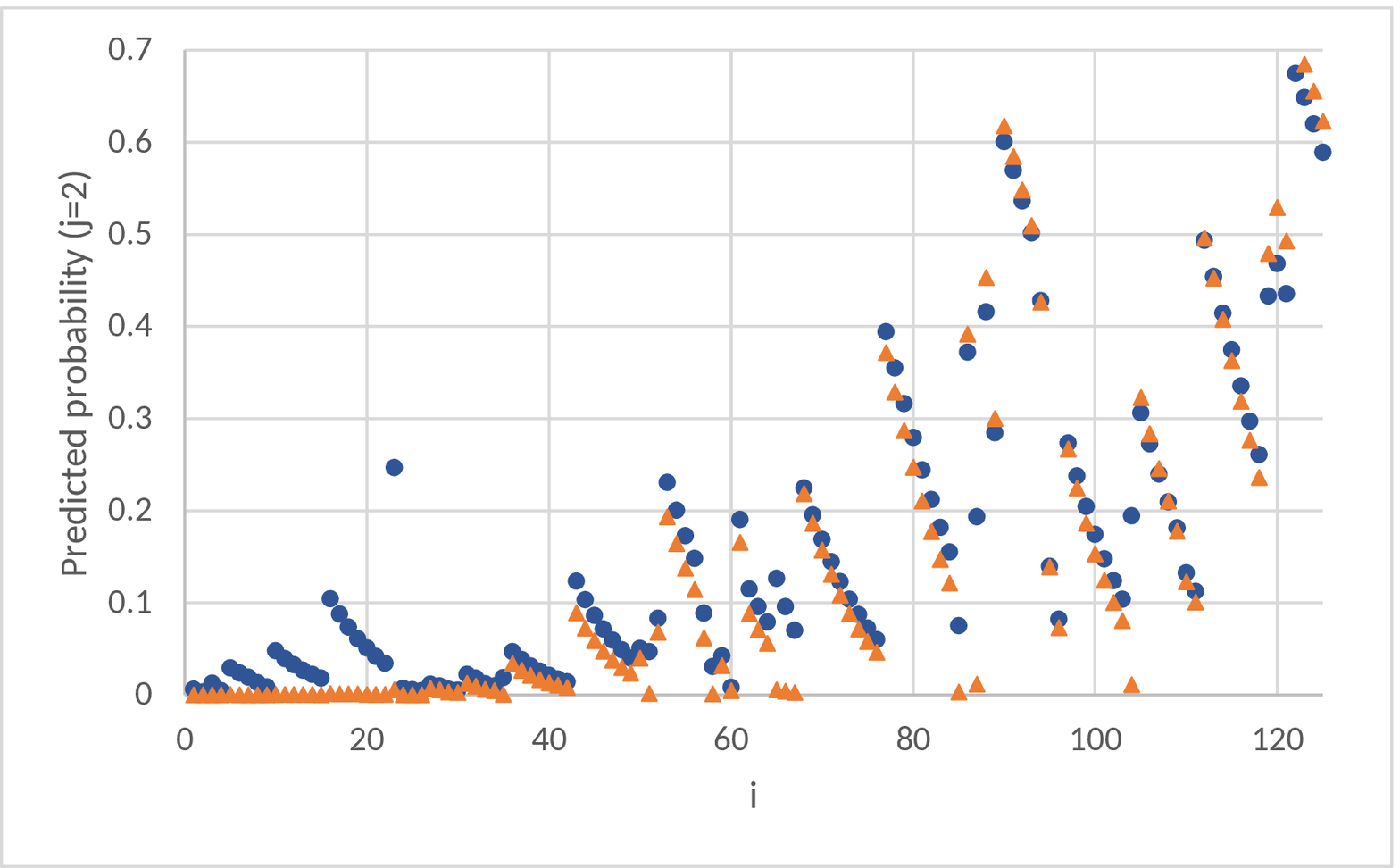}
		& \includegraphics[width=8cm, height=5cm]{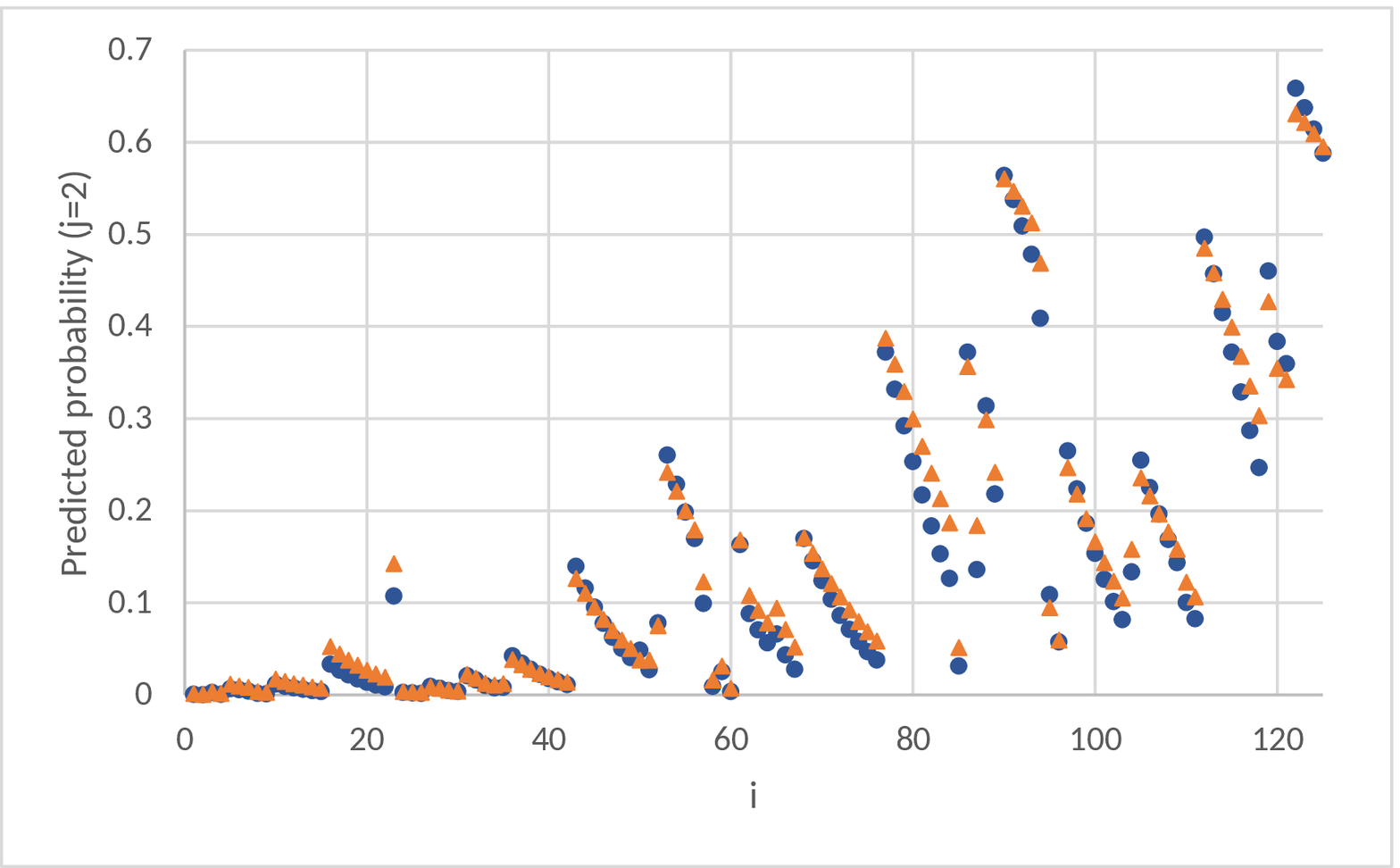}
		\\ 
		\includegraphics[width=8cm, height=5cm]{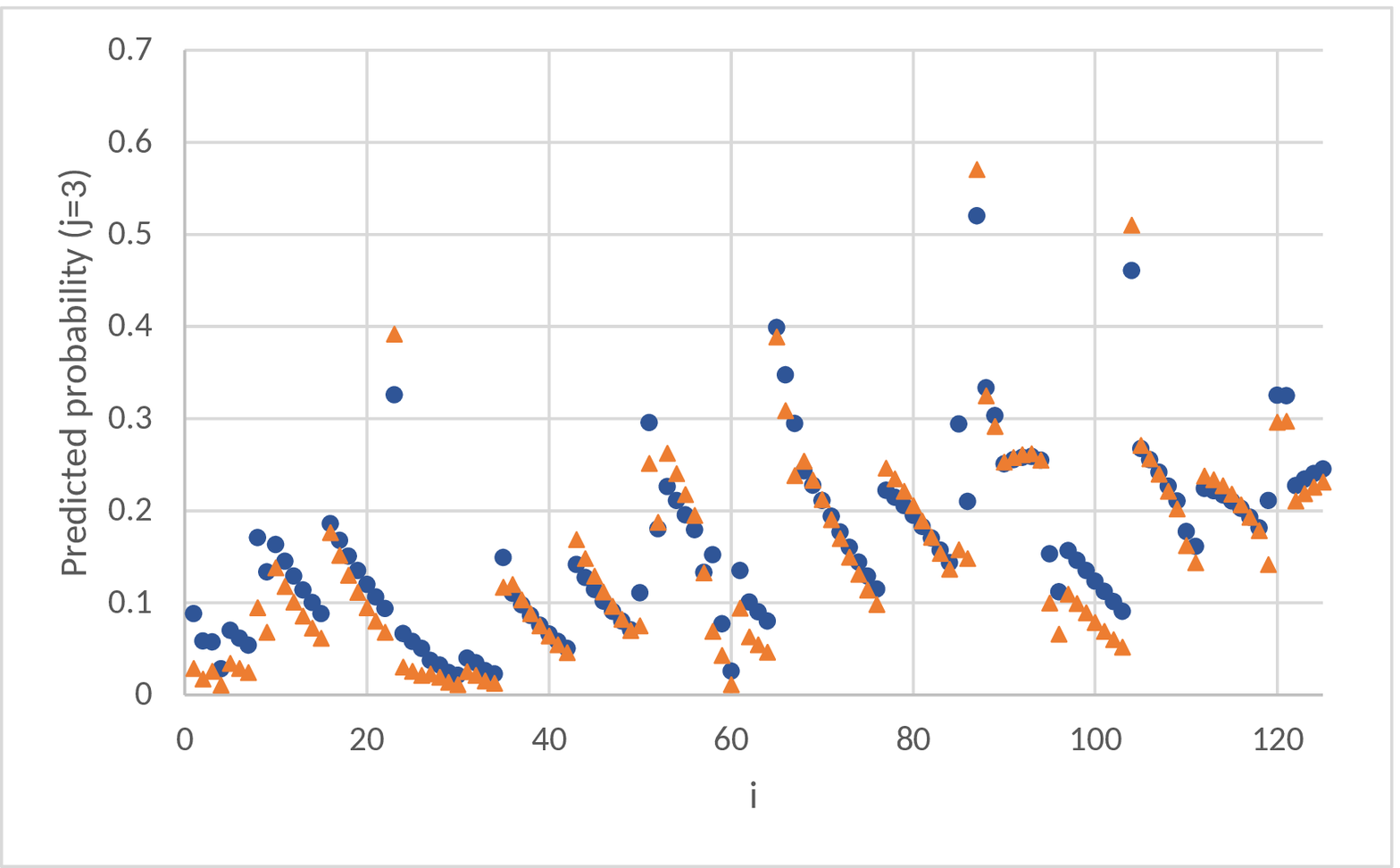}
		& \includegraphics[width=8cm, height=5cm]{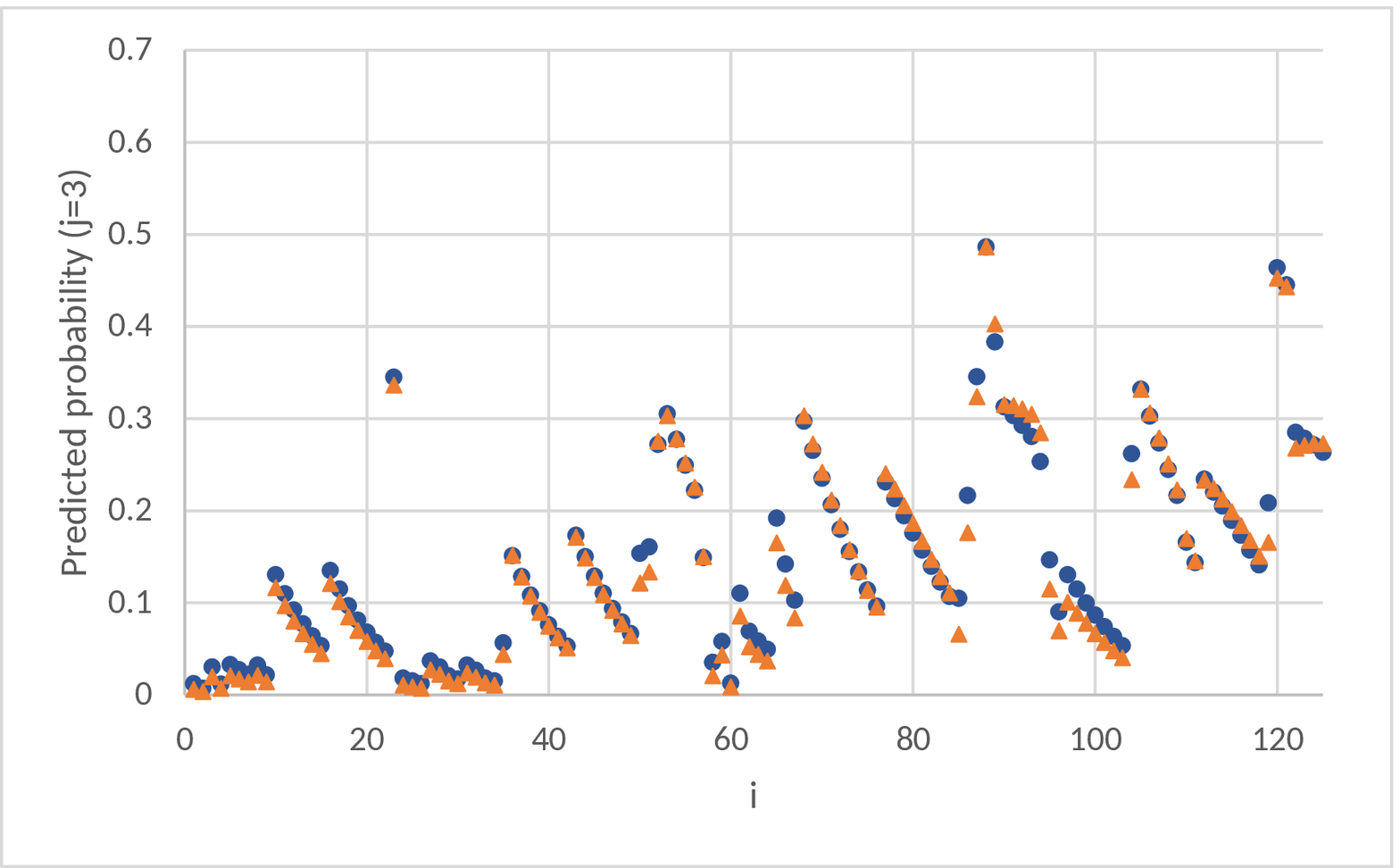}%
	\end{tabular}%
	\bigskip
	\caption{Predicted category probabilities of the response variable, 
		based on the MLE (left) and the MDPDE with $\protect\lambda =0.7$ (right), 
		for the three categories ($j$) of the Mammography experience data
		[Blue circles: Full data; Orange triangles: Outlier deleted data]. }
	\label{fig:mamo_points}
\end{figure}

As an illustration of the proposed Wald-type test for this dataset, 
we consider the problem of testing $H_{0}:\beta _{SYMPT1,1}=0$. 
The p-values obtained based on the proposed test are plotted over $\lambda$ in Figure \ref{fig:mamo_test} (left) 
for both the full and the outlier deleted data. 
Clearly  the test decision at the significance level $\alpha =0.05$ changes completely 
in the presence  of outliers for smaller $\lambda$ including the classical Wald test at $\lambda=0$.
But the inference becomes much stable for larger $\lambda\geq0.3$ implying strong robustness of our proposal.

\begin{figure}[th]
	\centering
	\includegraphics[width=8.1cm,
	height=5.5cm]{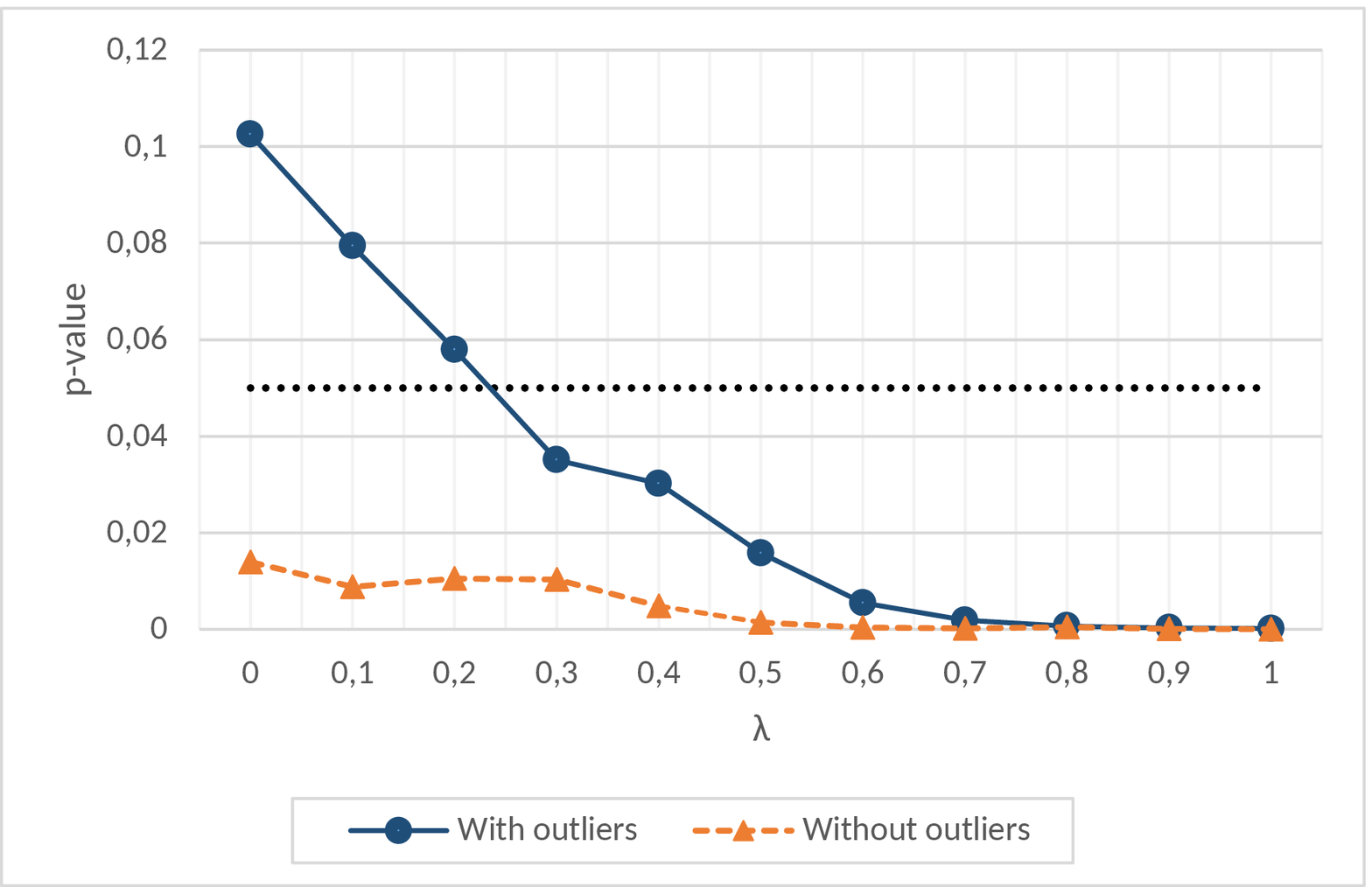} 
	\includegraphics[width=8.1cm,
	height=5.5cm]{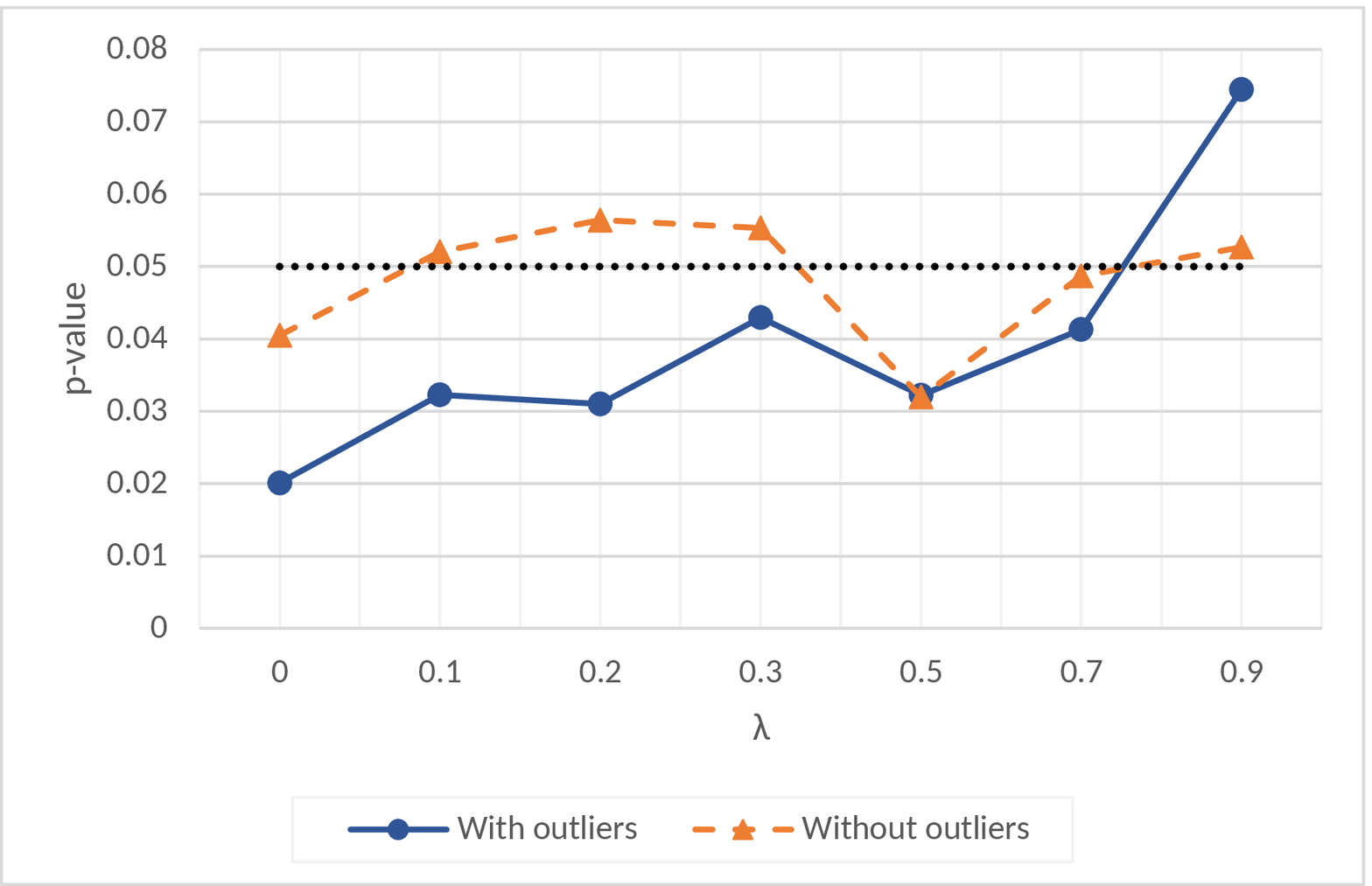}
	\caption{P-values of the proposed MDPDE based Wald-type tests for testing 
		(a) $H_{0}:\protect\beta_{SYMPT11}=0$ in the Mammography data (left) 
		and (b) $H_{0}:\protect\beta _{02}=0$ in the Liver enzyme data (right).}
	\label{fig:mamo_test}
\end{figure}

\subsection{Liver enzyme data (Plomteux)}

Plomteux (1980) showed that the four levels of hepatitis can be explained based
on three liver function tests: aspartate aminotransferase (AST), alanine
aminotransferase (ALT) and glutamate dehygrogenase (GIDH). The associated
dataset (Albert and Harris, 1987) consists of 218 patients with their
hepatitis level being categorized as 1 = acute viral hepatitis, 2 = persistent
chronic hepatitis, 3 = aggressive chronic hepatitis and 4 = post-necrotic cirrhosis ($d=3$). 
Their respective category frequencies are 57, 44, 40 and 77. 
This dataset has been studied in the literature by Mu\~{n}oz-Pichardo et al (2011) and Mart\'{\i}n (2015),
among others.

Here, we model these data with a PLRM where explanatory variables are taken to be 
the logarithms of the three liver function tests, namely 
$\boldsymbol{x}=(1,\mathrm{log}$ $AST,\mathrm{log}$ $ALT,\mathrm{log}$ $GLDH)^{T}$. 
As suggested in Mart\'{\i}n (2015), the observations associated with indices 93, 101, 108, 116, 131 and 136  
of the explanatory variables can be considered as outliers. 
Table 5 in Appendix \ref{tables} shows the MDPDEs of the model parameters in presence or absence of the outliers, 
while Table 6 in Appendix \ref{tables} shows the EMDs of the predicted probabilities with respect to the relative frequencies. 
In this case, all the MDPDEs present a more efficient behavior than MLE 
even after removing the aforementioned outliers; 
this indicates that perhaps there are still some masked outliers left in the data
which were unidentified by the previous studies. 
The advantage of the proposed MDPDEs under such cases is clearly evident from this analysis.

Table 7 in Appendix \ref{tables} shows the EMDs between the predicted probabilities obtained from the full data and the outlier deleted data. Again, moderate and large values of $\lambda>0$ yield lesser deviation than that for the
MLE, indicating their strong robustness against outliers. For testing $H_{0}:\beta _{02}=0$,
the p-values of the proposed Wald-type tests are plotted in Figure \ref{fig:mamo_test}. Note that, for $\lambda =0.5$ the
p-values coincide for both the cases with or without outliers.

\vspace{-0.05in}
\section{Simulation Study\label{S6}}

\subsection{Performance of the MDPDE\label{S6.1}}

We consider a nominal outcome variable $\boldsymbol{Y}$ with $d+1=3$
categories, depending on $k=2$ explanatory variables $x_{1}$ and $x_{2}$.
The true value of the parameter 
$\boldsymbol{\beta }=(\beta _{01},\beta_{11},\beta _{21},\beta _{02},\beta _{12},\beta _{22})^{T}$ 
is taken as  $\boldsymbol{\beta }_{0}=(0,-0.9,0.1,0.6,-1.2,0.8)^{T}$. 
We first simulate pure samples of size $N$ based on 
covariates $x_{i1}\sim\mathcal{N}(0,1)$, $x_{i2}\sim\mathcal{N}(0,1)$ and 
the multinomial responses 
$$\boldsymbol{y}_{i}\sim 
\mathcal{M}%
\left( 1;\pi _{i1}(\boldsymbol{\beta }_{0}),\pi _{i2}(\boldsymbol{\beta }_{0}),
1-\pi _{i1}(\boldsymbol{\beta }_{0})-\pi _{i2}(\boldsymbol{\beta }_{0})\right),
~~i=1,\dots ,N.
$$
Then, to study the robustness, we additionally change the last $p\%$ of responses according to
\begin{eqnarray*}
	\boldsymbol{y}_{i} &\sim &%
	\mathcal{M}%
	\left( 1;1-\pi _{i1}(\boldsymbol{\beta }_{0})-\pi _{i2}(\boldsymbol{\beta }%
	_{0}),\pi _{i1}(\boldsymbol{\beta }_{0}),\pi _{i2}(\boldsymbol{\beta }%
	_{0})\right) ,\quad i=\left[ \frac{p}{100}N\right] +1,\dots ,N.
\end{eqnarray*}%
Note that, although we have simulated the contaminated observations with a model-misspecification point-of-view, 
it indeed also covers the cases of category misspecification. This is because, the contaminated responses are
generated with permuted class probabilities, so that categories 1, 2, 3 in
the original data are now classified as category 2, 3, 1.

The mean square error (MSE) of the MDPDEs are computed based on $1000$ such simulated samples 
which are plotted in Figure \ref{fig:tests_partial} for different $N$, $\lambda$ and $p=5\%$ contaminations. 
In pure data, the MLE (at $\lambda=0$) presents the most efficient behavior
having the least MSE for each sample sizes, while MDPDEs
with larger $\lambda $ have slightly larger MSEs. For
contaminated data the behavior of the MDPDEs is almost the opposite; the
best behavior (least MSE) is obtained for moderate values of  $\lambda $. This becomes more clear with larger sample sizes.

\subsection{Performance of the MDPDE based Wald-type tests\label{S6.2}}

Let us now empirically study the robustness of the MDPDE based Wald-type
tests for the PLRM. The simulation is performed with the same model as in
Section \ref{S6.1}. We first study the observed level (measured as the
proportion of test statistics exceeding the corresponding chi-square
critical value) of the test under the true null hypothesis $H_{0}:\beta
_{02}=0.6$. The resulting p-values are plotted in Figure \ref{fig:tests_partial} for both the pure and the $5\%$ contaminated samples. In contaminated data, the level of the classical Wald test (at $\lambda=0$) as well as the
proposed Wald-type tests with small $\lambda $ break down, while the MDPDE
based Wald-type tests for moderate and large positive $\lambda $ yield greater stability in their levels.

To investigate the power robustness of these tests, we change the true data generating parameter value
to be $\beta _{02}=1.35$ and the resulting empirical powers are plotted in Figure \ref{fig:tests_partial}. Again, the classical Wald test (at $\lambda=0$)
presents the best behavior under pure data, while  the Wald-type
tests with larger $\lambda>0 $ lead to better stability in the contaminated samples.

\begin{figure}[th]
	\begin{tabular}{ll}
		\includegraphics[width=8.2cm,
		height=7.5cm]{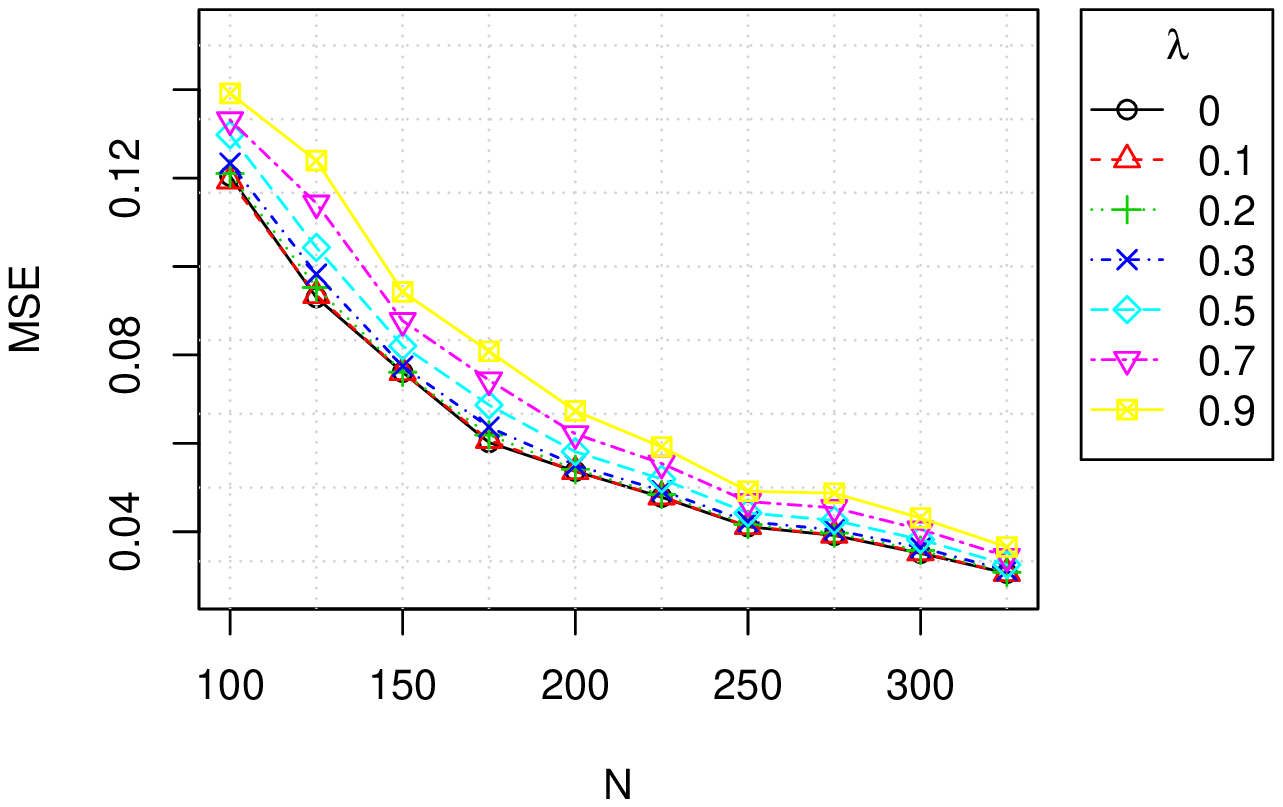} & 
		\includegraphics[width=8.2cm,
		height=7.5cm]{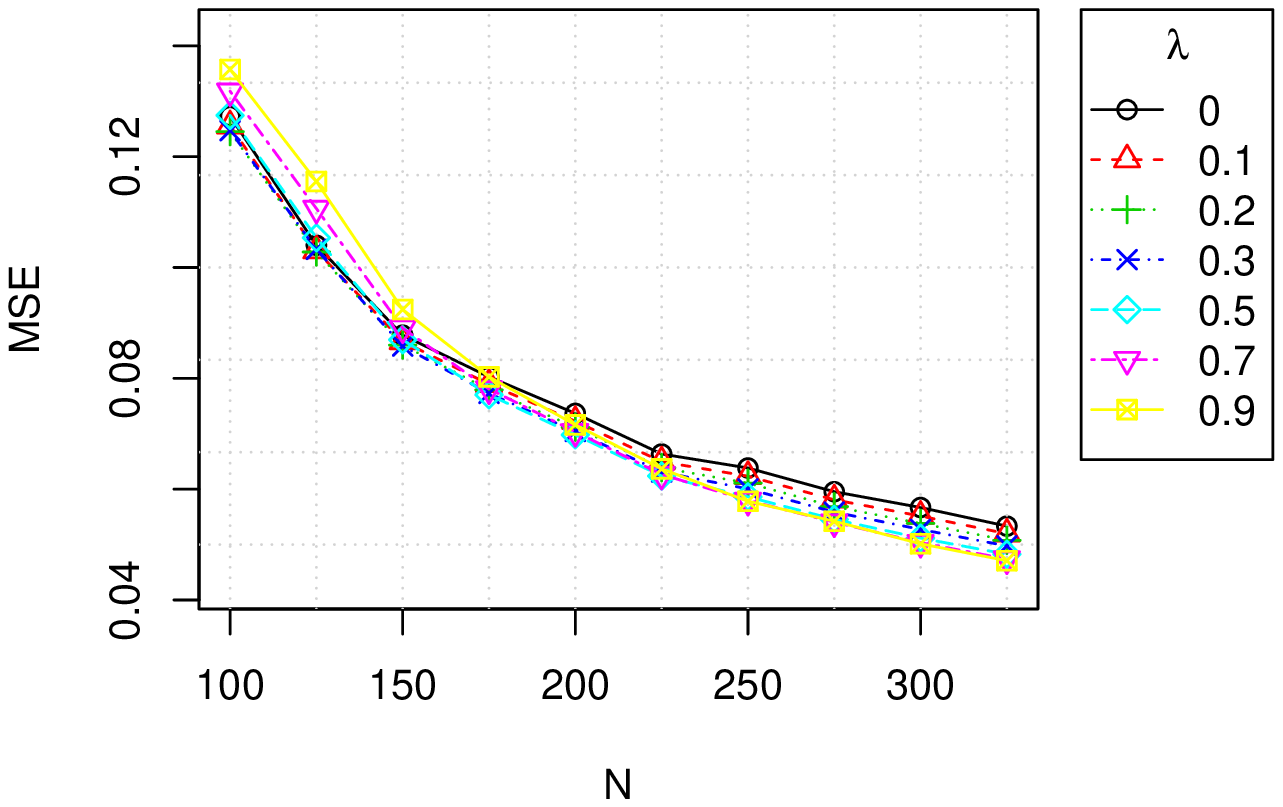} \\ 
		\includegraphics[width=8.3cm,
		height=7.5cm]{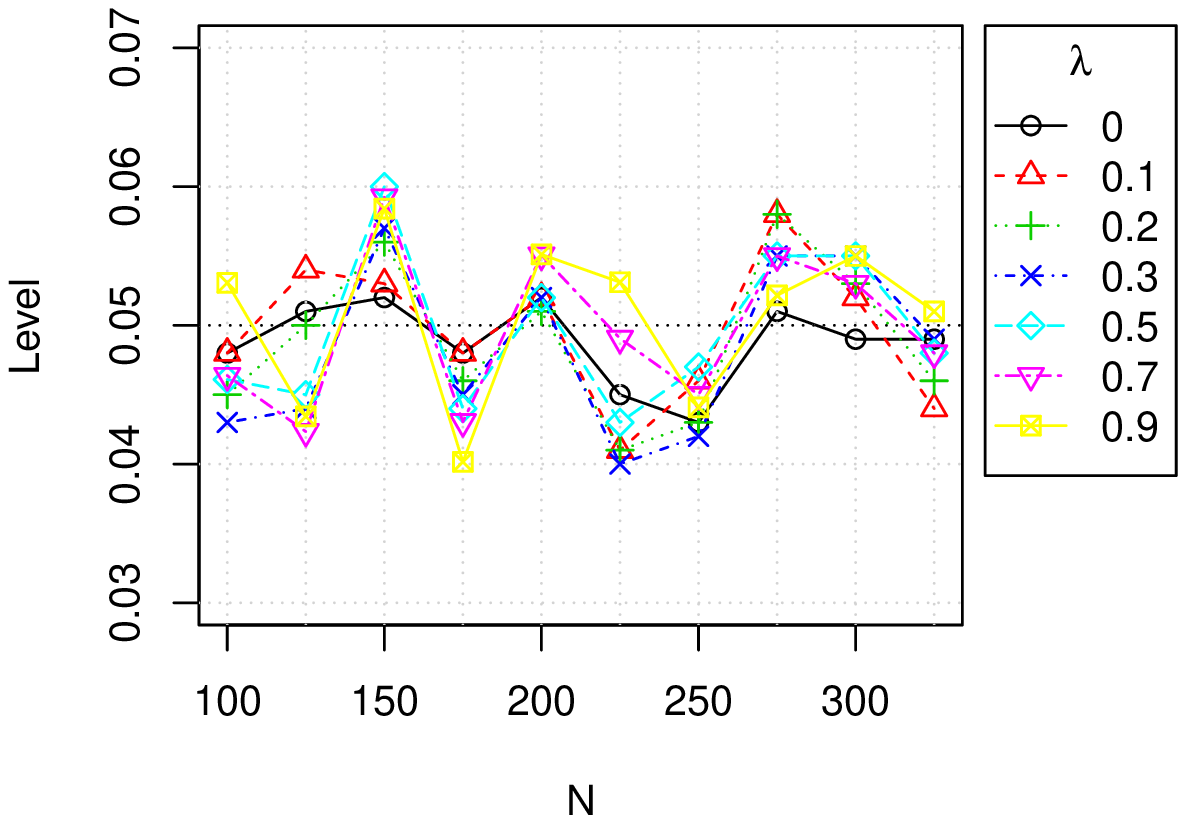} & 
		\includegraphics[width=8.3cm,
		height=7.5cm]{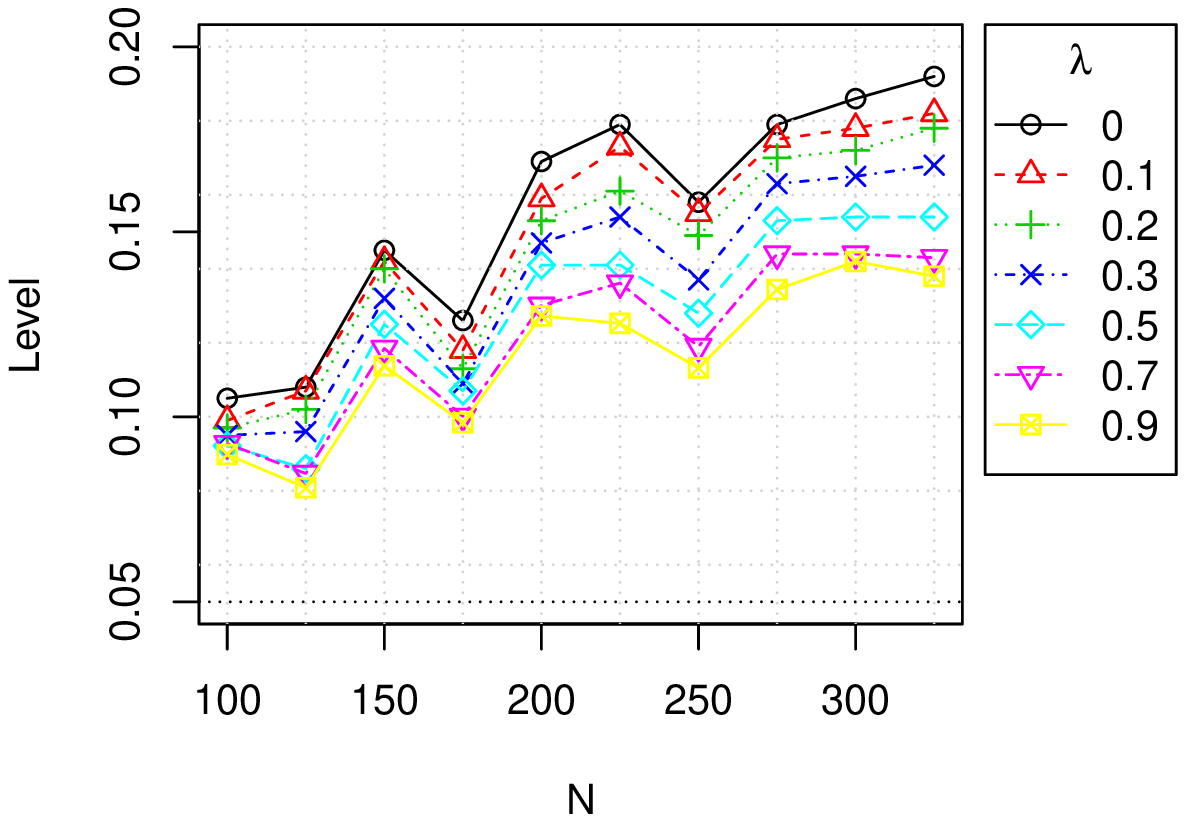} \\ 
		\includegraphics[width=8.3cm,
		height=7.5cm]{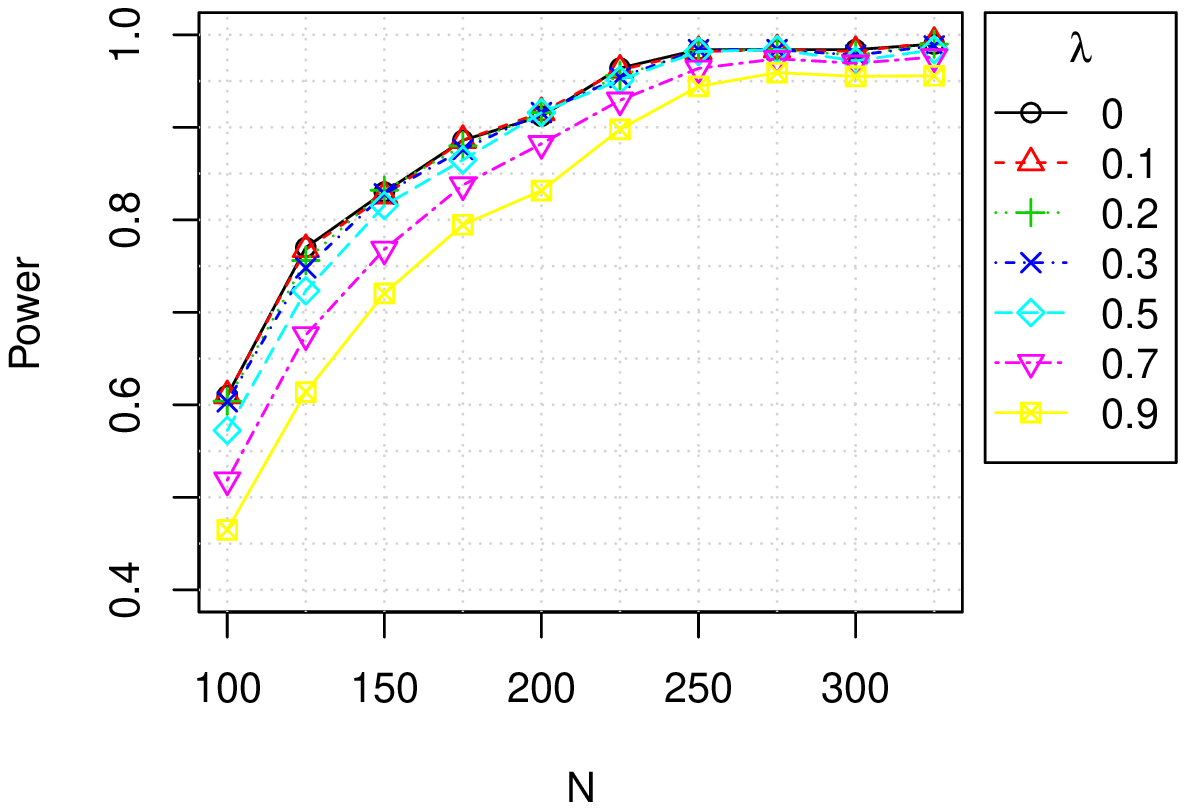} & 
		\includegraphics[width=8.3cm,
		height=7.5cm]{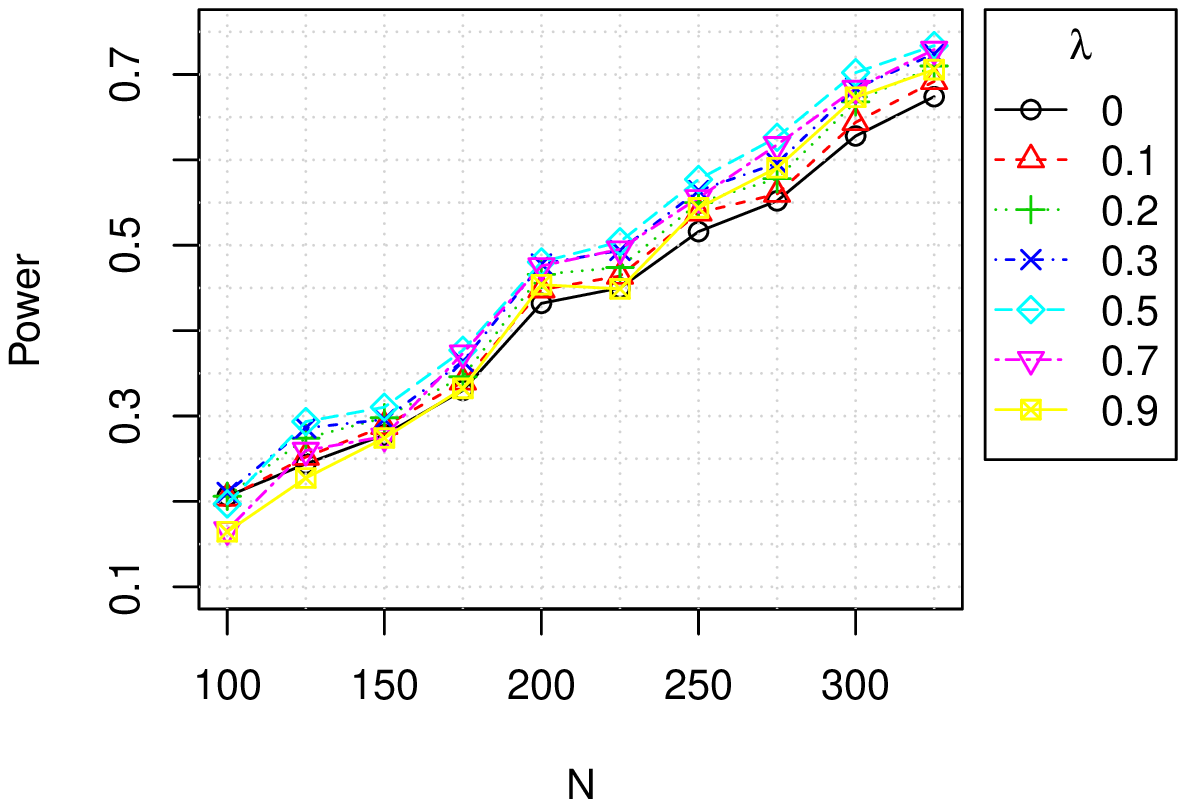}%
	\end{tabular}%
	\caption{MSEs (top panel) of the MDPDEs of  $\boldsymbol{\protect\beta }$ 
		and the simulated levels (middle panel) and powers (bottom panel)
		of the Wald-type tests under  the pure data	(left) and the contaminated data (right).}
	\label{fig:tests_partial}
\end{figure}

\section{On the choice of tuning parameter $\protect\lambda $\label{S7}}

Throughout the previous sections, we have noted that the robustness of both 
the proposed MDPDE and the associated Wald-type tests increase with increasing $\lambda$; 
but their pure data (asymptotic) efficiency and power decrease slightly. 
From our empirical analyses, it seems that a moderately large value of $\lambda$ 
is expected to provide the best trade-off for possibly contaminated data.
However, a data-driven choice of $\lambda $ would be more helpful in practice.

As noted in Section \ref{S4}, the robustness of the Wald-type test 
directly depends on that of the MDPDE used. 
A useful procedure of the data-based selection of $\lambda $ for the MDPDE was proposed by Warwick and Jones (2005) 
under IID data, which is recently extended for the non-homogeneous data 
by Ghosh and Basu (2013, 2015, 2016) and Basu et al. (2017a). 
We can adopt a similar approach to obtain a suitable data-driven $\lambda$ in our PLRM. 
In this approach, we minimize an estimate of the asymptotic MSE of the MDPDE  $\widehat{\boldsymbol{\beta }}_{\lambda }$, 
given by 
$
\mathrm{MSE}(\lambda )=\left( \boldsymbol{\beta }_{\lambda }-\boldsymbol{%
	\beta }^{\ast }\right) ^{T}\left( \boldsymbol{\beta }_{\lambda }-\boldsymbol{%
	\beta }^{\ast }\right) +\frac{1}{N}\mathrm{trace}\left\{\boldsymbol{J}%
_{\lambda }^{-1}\left( \boldsymbol{\beta }_{\lambda }\right) \boldsymbol{V}%
_{\lambda }\left( \boldsymbol{\beta }_{\lambda }\right) \boldsymbol{J}%
_{\lambda }^{-1}\left( \boldsymbol{\beta }_{\lambda }\right) \right\} 
$,
over $\lambda \in \lbrack 0,1]$, 
where $\boldsymbol{\beta }_{\lambda }$ is the asymptotic mean of  $\widehat{\boldsymbol{\beta }}_{\lambda }$ 
and $\boldsymbol{\beta }^{\ast }$ is the true target parameter value.
As pointed out by Basu et. al (2017a), the estimation of the variance component should not assume 
the model to be true for a better robustness trade-off. 
So, following the general formulation of Ghosh and Basu (2015),  
model robust estimates of $\boldsymbol{V}_{\lambda }$ and $\boldsymbol{J}_{\lambda}$  
can be obtained as 
$\widehat{\boldsymbol{V}}_{N,\lambda }=\boldsymbol{\Omega }_{N,\lambda }(\widehat{\boldsymbol{\beta }}_{\lambda })$ 
and 
\begin{align*}
\widehat{\boldsymbol{J}}_{N,\lambda }&=(\lambda +1)%
\boldsymbol{\Psi }_{N,\lambda }(\widehat{\boldsymbol{\beta }}_{\lambda })+%
\frac{1}{N}\sum_{i=1}^{N}\Delta \left(\boldsymbol{\pi }_{i}^{\ast }(%
\widehat{\boldsymbol{\beta }}_{\lambda })\right) \left\{ \sum_{j=1}^{d+1}\pi _{ij}( \widehat{\boldsymbol{\beta }}_{\lambda })
^{\lambda +1}\right\} \otimes 
\boldsymbol{x}_{i}\boldsymbol{x}_{i}^{T}\\
&-\frac{1}{N}\sum_{i=1}^{N}\left\{
\lambda \boldsymbol{u}_{i}(\boldsymbol{y}_{i}^{\ast },\widehat{\boldsymbol{\beta }}_{\lambda })\boldsymbol{u}_{i}(\boldsymbol{y}_{i}^{\ast },\widehat{\boldsymbol{\beta }}_{\lambda })^{T}+\Delta \left( \boldsymbol{\pi }_{i}^{\ast }(%
\widehat{\boldsymbol{\beta }}_{\lambda })\right) \otimes 
\boldsymbol{x}_{i}\boldsymbol{x}_{i}^{T}
\right\} f_{i}(\boldsymbol{y}_{i}^{\ast },\widehat{\boldsymbol{\beta }}_{\lambda })^{\lambda },
\end{align*}
where $\boldsymbol{u}_{i}$ and $f_{i}$ are given in Appendix.
Next, for the bias part, we can use the MDPDE $\widehat{\boldsymbol{\beta }}_{\lambda }$
to estimate  ${\boldsymbol{\beta }}_{\lambda }$ 
but there is no clear choice for estimation of $\boldsymbol{\beta }^{\ast }$;
Warwick and Jones (2005) suggested to estimate $\boldsymbol{\beta }^{\ast }$ 
by some appropriate pilot estimator $\boldsymbol{\beta }^{P}$. 
Note that, the overall performance of this procedure of selecting optimum $\lambda$ 
depends on the choice of $\boldsymbol{\beta }^{P}$, which we will explore through an empirical study.

Consider the same simulation study as in Section 6.1.
We now compute the optimal $\lambda$ value in each iteration 
following the proposed method with a given $\boldsymbol{\beta}^P$.
As potential choices of $\boldsymbol{\beta }^{P}$, 
we consider the MDPDEs with ``pilot" parameters $\lambda_p \in\{0,0.3,0.5,1 \}$. 
For example, when $\lambda_p=0.5$, we fix $\boldsymbol{\beta }^{P}=\widehat{\boldsymbol{\beta}}_{0.5}$
and minimize the estimated quantity
$
\widehat{\mathrm{MSE}}(\lambda )=
\left( \widehat{\boldsymbol{\beta }}_{\lambda }-\widehat{\boldsymbol{\beta}}_{0.5}\right) ^{T}
\left( \widehat{\boldsymbol{\beta }}_{\lambda }-\widehat{\boldsymbol{\beta}}_{0.5}\right) 
+\frac{1}{N}\mathrm{trace}\left(
\widehat{\boldsymbol{J}}_{N,\lambda }^{-1}\widehat{\boldsymbol{V}}_{N,\lambda }\widehat{\boldsymbol{J}}_{N\lambda }^{-1}
\right)$,
through a grid search over $\lambda\in [0,1]$, to obtain the optimum $\lambda$ value.
Note that, the bias term is not generally zero even though we are using MDPDEs as the pilot estimator. 
Figure \ref{fig:perf1} shows the empirical MSEs for the final MDPDEs with 
the resulting optimum $\lambda$ (in each iteration) for the pure and the contaminated data. 
Clearly, the best trade-off between the efficiency in pure data and the robustness under contaminated data 
is provided by the pilot choice $\lambda_p= 0.3$ and the corresponding MSEs are also satisfactorily small in both the cases. 
So, we suggest to use the pilot choice $\boldsymbol{\beta }^{P}=\widehat{\boldsymbol{\beta}}_{0.3}$
for the PLRM and the steps for the final algorithm are clearly mentioned below.

\smallskip
\fbox{\parbox{0.95\textwidth}{
		\noindent\textbf{\underline{Algorithm for the data-driven selection of optimum tuning parameter $\lambda$}~~~}
		\begin{itemize}
			\item \textbf{Aim:} Optimal fitting of the PLRM given a dataset
			\item \textbf{Fix:} Pilot estimate $\boldsymbol{\beta }^{P}=\widehat{\boldsymbol{\beta}}_{0.3}$.
			~~~~~~~~~~~~~~~~~~~~~~~~~~~~~~(Empirical suggestion)
			\item \textbf{For} each $\lambda$ in a grid of $[0,1]$, 
			\textbf{do} the following. ~~~~~~~~~(E.g., $\lambda$ \textbf{in} $0:0.01:1$)
			\begin{itemize}
				\item Compute the total estimated squared bias 
				$B(\lambda)=\left( \widehat{\boldsymbol{\beta }}_{\lambda }-\widehat{\boldsymbol{\beta}}_{0.3}\right) ^{T}
				\left( \widehat{\boldsymbol{\beta }}_{\lambda }-\widehat{\boldsymbol{\beta}}_{0.3}\right)$.
				\item Compute the total estimated variance $V(\lambda) =\frac{1}{N}\mathrm{trace}\left(
				\widehat{\boldsymbol{J}}_{N,\lambda }^{-1}\widehat{\boldsymbol{V}}_{N,\lambda }\widehat{\boldsymbol{J}}_{N\lambda }^{-1}
				\right)$.
				\item Compute the total estimated MSE as $\widehat{\mathrm{MSE}}(\lambda )=B(\lambda)+V(\lambda)$.
			\end{itemize}
			\item \textbf{Find:} Minimum of $\widehat{\mathrm{MSE}}(\lambda)$ and the corresponding $\lambda$. 
			Let $\lambda_{opt}=\arg\min \widehat{\mathrm{MSE}}(\lambda)$.
			\item \textbf{Return:} $\lambda_{opt}$ as the final optimum value of the tuning parameter.
			\item Compute $\widehat{\boldsymbol{\beta}}_{\lambda_{opt}}$ 
			as your final estimate with optimally chosen tuning parameter.
		\end{itemize}
}}

\begin{figure}[ht!]
	\begin{tabular}{ll}
		\includegraphics[width=8.3cm,
		height=7.5cm]{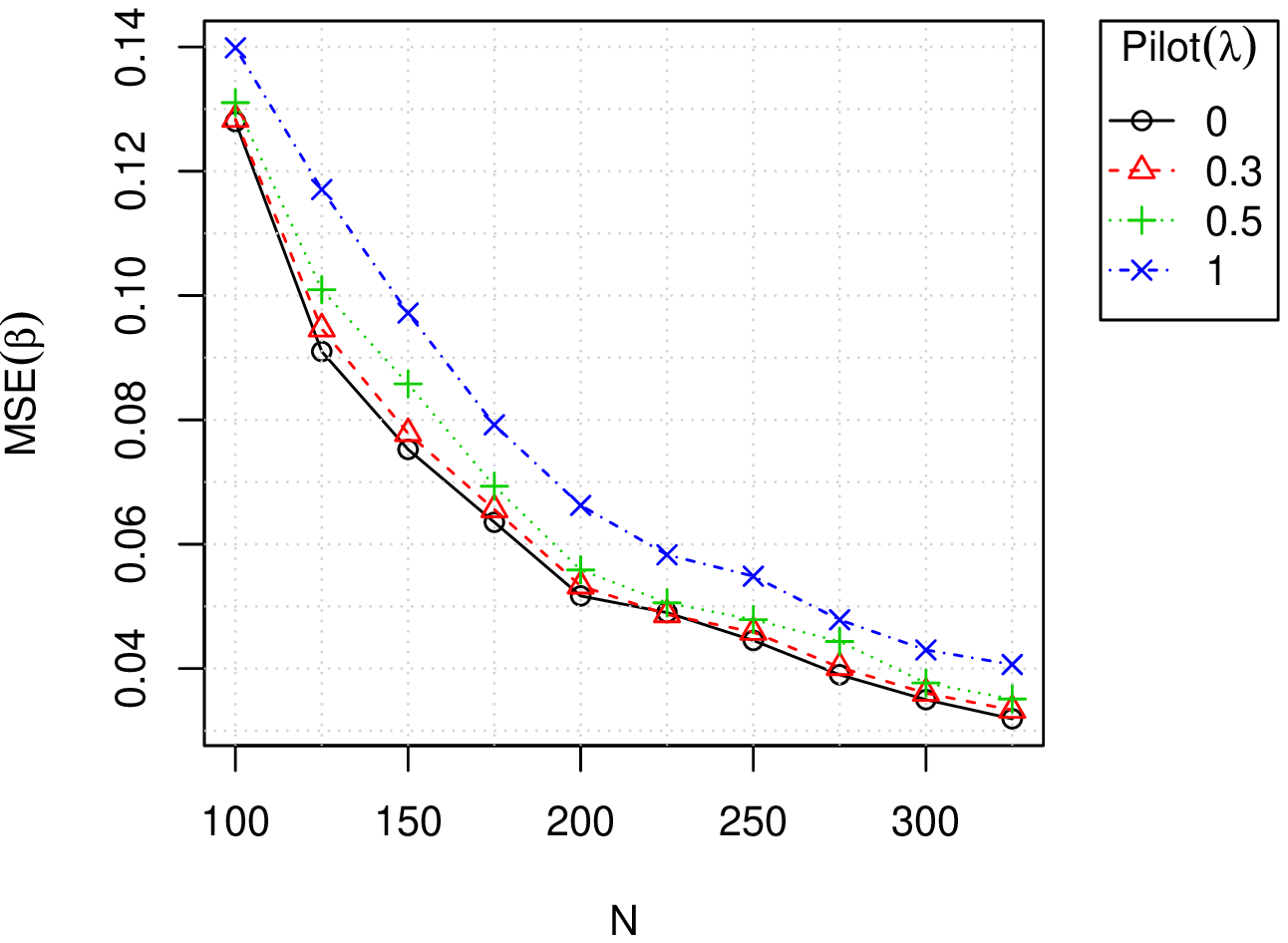} & 
		\includegraphics[width=8.3cm,
		height=7.5cm]{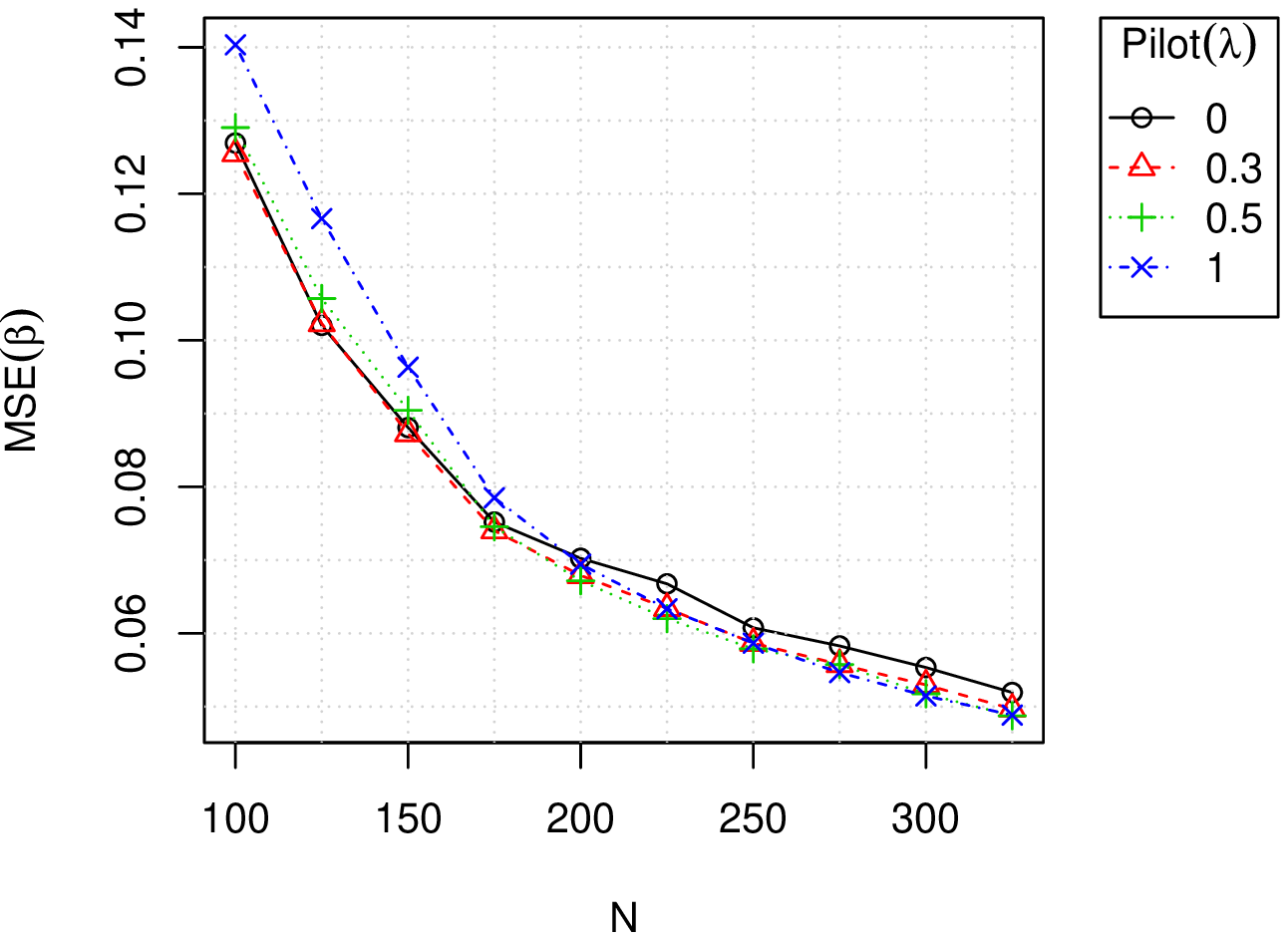} 
	\end{tabular}%
	\caption{Simulated MSEs of the MDPDEs at the optimally chosen $\protect\lambda$,
		starting from different pilot estimators under the pure data (left) and the contaminated data (right). }
	\label{fig:perf1}
\end{figure}

We now apply this proposed algorithm (with a grid of spacing 0.05) to our real datasets.
The optimum $\lambda$ turns out to be $1$ and $0.75$, respectively, 
for the Mammography experience data and the Liver enzyme data in presence of the outliers. 
After deleting the aforementioned outliers from the data, 
these optimum $\lambda$ values become $0.05$ and $0.35$, respectively. 
These optimum $\lambda$ values generate quite stable MDPDEs for both datasets, as we have seen in Section \ref{S5}. 
They indeed yield the automatic data-driven choices of $\lambda$ which are 
also consistent with the fact that we should use larger $\lambda>0$ for contaminated data 
and smaller $\lambda$ close to zero for clean data.
Note that, as discussed in Section \ref{S5}, 
the liver enzyme data might contain some masked outlying observations 
even after removing the aforementioned outliers from previous studies
which leads to the slightly larger optimum $\lambda$ value of 0.35. 
These evidences clearly justify the appropriateness and usefulness of our proposed algorithm 
of selecting optimum data-driven $\lambda$ for the MDPDEs in case of the PLRM.

\section{Concluding Remarks\label{S8}}

The PLRM is an extensively applied statistical tool 
which is widely used in many different areas, including health and life sciences.  
Although the classical inference procedures based on the MLE have asymptotic optimal properties, 
they are highly non-robust against outliers in data. 
So, there is a strong need for robust procedures in practical applications of the PLRM 
due to the presence of potential outlying observations in many real datasets. 
Here we derive a new family of estimators, MDPDEs, as a robust generalization of the MLE for the PLRM, 
exploring the \textquotedblleft nice\textquotedblright\ robustness properties of the DPD measure. 
A family of Wald-type test statistics based on the MDPDEs is also introduced 
for testing linear hypotheses under the PLRM. 
The study of two real data examples from medical sciences as well as the simulation results 
illustrate the advantages of our proposed inference procedures.

\section*{Acknowledgements}
We would like to thank the Editor, the Associate Editor and the Referees for
their helpful comments and suggestions. This research is supported by Grant
MTM2015-67057-P, from Ministerio de Economia y Competitividad (Spain) and by
the INSPIRE Faculty Research Grant from Department of Science \& Technology,
Govt. of India.

\newpage
\appendix
\section{Supplementary Materials: Appendices and Tables
	referenced in Sections \ref{S3}, \ref{S5}  \ref{S6} and \ref{S7}}
\subsection{\protect\normalsize How to obtain the MDPDEs? }

Consider the set-up and notations of Section 2 of the main paper. The
following theorem presents the estimating equation for the MDPDEs in the
PLRM, which can be solved numerically to obtain the estimates.

\begin{theorem}
	\label{THM:Est_Eqn} \label{Th0}The MDPDE, $\widehat{\boldsymbol{\beta }}%
	_{\lambda }$, of $\boldsymbol{\beta }$ can be obtained by solving the system
	of equations 
	\begin{equation*}
	\boldsymbol{u}_{\lambda }\left( \boldsymbol{\beta }\right) =\boldsymbol{0}%
	_{d(k+1)},
	\end{equation*}%
	where 
	\begin{equation*}
	\boldsymbol{u}_{\lambda }\left( \boldsymbol{\beta }\right) =	\sum_{i=1}^{N}\left[ \left( \boldsymbol{I}_{d},\boldsymbol{0}_{d}\right) 
	\boldsymbol{\Delta }(\boldsymbol{\pi }_{i}\left( \boldsymbol{\beta }\right) )%
	\mathrm{diag}^{\lambda -1}\{\boldsymbol{\pi }_{i}\left( \boldsymbol{\beta }%
	\right) \}\{\boldsymbol{y}_{i}-\boldsymbol{\pi }_{i}\left( \boldsymbol{\beta }%
	\right) \}\right] \otimes \boldsymbol{x}_{i}
	\end{equation*}
\end{theorem}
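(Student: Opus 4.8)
The plan is to derive the estimating equation as the first-order stationarity condition for the MDPDE objective. Since the prefactor $N^{-(\lambda+1)}$ in $d_\lambda(\widehat{\boldsymbol{p}},\boldsymbol{p}(\boldsymbol{\beta}))$ is a fixed positive constant, the minimizer is unchanged if I instead work with $\sum_{i=1}^N d_\lambda(\boldsymbol{y}_i,\boldsymbol{\pi}_i(\boldsymbol{\beta}))$. Because the parameter space $\Theta=\mathbb{R}^{d(k+1)}$ is open and the objective is smooth in $\boldsymbol{\beta}$, any minimizer is an interior stationary point and must satisfy $\partial/\partial\boldsymbol{\beta}\sum_i d_\lambda(\boldsymbol{y}_i,\boldsymbol{\pi}_i(\boldsymbol{\beta}))=\boldsymbol{0}$. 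The whole argument therefore reduces to computing this gradient and recognizing it in the claimed matrix form. Since each summand depends on $\boldsymbol{\beta}$ only through $\boldsymbol{\pi}_i(\boldsymbol{\beta})$, I would proceed by the chain rule, handling the scalar derivative with respect to the probabilities and the softmax Jacobian separately.

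First I would differentiate $d_\lambda(\boldsymbol{y}_i,\boldsymbol{\pi}_i)=\sum_{j=1}^{d+1}\{\pi_{ij}-\frac{\lambda+1}{\lambda}y_{ij}\}\pi_{ij}^\lambda$ with respect to the coordinate $\pi_{ij}$; the two terms give $(\lambda+1)\pi_{ij}^\lambda$ and $-(\lambda+1)y_{ij}\pi_{ij}^{\lambda-1}$ (the $\lambda$ from the power cancels the $1/\lambda$), so that $\partial d_\lambda/\partial\pi_{ij}=(\lambda+1)\pi_{ij}^{\lambda-1}(\pi_{ij}-y_{ij})$. Next I would record the softmax derivative for the PLRM (\ref{1}), namely $\partial\pi_{ij}/\partial\boldsymbol{\beta}_m=\pi_{ij}(\delta_{jm}-\pi_{im})\boldsymbol{x}_i$ for $m=1,\dots,d$, verifying that it holds for every $j=1,\dots,d+1$, in particular for the baseline category $j=d+1$ where $\delta_{d+1,m}=0$ reproduces $-\pi_{i,d+1}\pi_{im}\boldsymbol{x}_i$. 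The \emph{crucial} algebraic observation is that $\pi_{ij}(\delta_{jm}-\pi_{im})$ is exactly the $(m,j)$ entry of $\boldsymbol{\Delta}(\boldsymbol{\pi}_i)=\mathrm{diag}(\boldsymbol{\pi}_i)-\boldsymbol{\pi}_i\boldsymbol{\pi}_i^T$.

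Combining the two factors by the chain rule, the block of the gradient indexed by $m$ is $(\lambda+1)\boldsymbol{x}_i\sum_{j=1}^{d+1}[\boldsymbol{\Delta}(\boldsymbol{\pi}_i)]_{mj}\,\pi_{ij}^{\lambda-1}(\pi_{ij}-y_{ij})$, where I have written $\pi_{ij}^\lambda=\pi_{ij}\cdot\pi_{ij}^{\lambda-1}$ to peel the $\boldsymbol{\Delta}$-entry off the diagonal power. Recognizing $\pi_{ij}^{\lambda-1}(\pi_{ij}-y_{ij})$ as the $j$-th entry of $-\mathrm{diag}^{\lambda-1}\{\boldsymbol{\pi}_i\}(\boldsymbol{y}_i-\boldsymbol{\pi}_i)$ collapses the $j$-sum into a matrix-vector product, so the $m$-block equals $-(\lambda+1)\boldsymbol{x}_i[\boldsymbol{\Delta}(\boldsymbol{\pi}_i)\mathrm{diag}^{\lambda-1}\{\boldsymbol{\pi}_i\}(\boldsymbol{y}_i-\boldsymbol{\pi}_i)]_m$. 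Stacking $m=1,\dots,d$ (the only free blocks, since $\boldsymbol{\beta}_{d+1}=\boldsymbol{0}$), I select the first $d$ of the $d+1$ rows via $(\boldsymbol{I}_d,\boldsymbol{0}_d)$ and factor the common $\boldsymbol{x}_i$ into a Kronecker product; summing over $i$ yields $\partial/\partial\boldsymbol{\beta}\sum_i d_\lambda=-(\lambda+1)\boldsymbol{u}_\lambda(\boldsymbol{\beta})$. Dividing by the nonzero constant $-(\lambda+1)$ turns the stationarity condition into $\boldsymbol{u}_\lambda(\boldsymbol{\beta})=\boldsymbol{0}_{d(k+1)}$, as claimed.

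The main obstacle is the bookkeeping of the block/Kronecker structure rather than any deep idea: I must keep the response index $j$ running over all $d+1$ categories inside $\boldsymbol{\Delta}(\boldsymbol{\pi}_i)$ (so the baseline column is retained) while the parameter-block index $m$ runs only over $1,\dots,d$, and reconcile the mismatched dimensions through the projection $(\boldsymbol{I}_d,\boldsymbol{0}_d)$. I also want to double-check the single sign flip, which arises because the chain rule naturally produces $(\pi_{ij}-y_{ij})$ whereas the stated score carries $(\boldsymbol{y}_i-\boldsymbol{\pi}_i)$; tracking it correctly is what makes the constant $-(\lambda+1)$ cancel cleanly and leaves the estimating equation in the displayed form.
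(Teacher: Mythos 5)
Your proposal is correct and follows essentially the same route as the paper's own proof: reduce to the unnormalized objective, differentiate via the chain rule using $\partial d_\lambda/\partial\pi_{ij}=(\lambda+1)\pi_{ij}^{\lambda-1}(\pi_{ij}-y_{ij})$ together with the softmax Jacobian $\frac{\partial \boldsymbol{\pi }_{i}(\boldsymbol{\beta })^{T}}{\partial \boldsymbol{\beta }}=(\boldsymbol{I}_{d},\boldsymbol{0}_{d})\boldsymbol{\Delta }(\boldsymbol{\pi }_{i}(\boldsymbol{\beta }))\otimes \boldsymbol{x}_{i}$, and cancel the factor $-(\lambda+1)$. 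Your sign and exponent bookkeeping is accurate (indeed, it quietly corrects a typographical $\pi_{ij}^{\lambda+1}$ that appears in an intermediate display of the paper's proof), so nothing further is needed.
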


\begin{proof}
	The MDPDE of $\boldsymbol{\beta }$, is defined as
	\begin{align*}
	\widehat{\boldsymbol{\beta }}_{\lambda }& =\underset{\boldsymbol{\beta }\in 
		\mathbb{R}
		^{d(k+1)}}{\arg \min }\sum_{i=1}^{N}\left\{ \boldsymbol{\pi }_{i}\left( 
	\boldsymbol{\beta }\right) -\frac{\lambda +1}{\lambda }\boldsymbol{y}%
	_{i}\right\} ^{T}\boldsymbol{\pi }_{i}^{(\lambda )}\left( \boldsymbol{\beta }%
	\right) \\
	& =\underset{\boldsymbol{\beta }\in 
		\mathbb{R}
		^{d(k+1)}}{\arg \min }\sum_{i=1}^{N}\sum_{j=1}^{d+1}\left\{ \pi
	_{ij}^{\lambda +1}(\boldsymbol{\beta })-\frac{\lambda +1}{\lambda }y_{ij}\pi
	_{ij}^{\lambda }(\boldsymbol{\beta })\right\},
	\end{align*}%
	which can also be obtained by solving the system of equations $\boldsymbol{u}_{\lambda
	}\left( \boldsymbol{\beta }\right) =\boldsymbol{0}_{d(k+1)}$ where 
	\begin{align*}
	-\frac{1}{\lambda +1}\frac{\partial }{\partial \boldsymbol{\beta }}%
	\sum_{i=1}^{N}\left\{\boldsymbol{\pi }_{i}\left( \boldsymbol{\beta }\right) -%
	\frac{\lambda +1}{\lambda }\boldsymbol{y}_{i}\right\} ^{T}\boldsymbol{\pi }%
	_{i}^{(\lambda )}\left( \boldsymbol{\beta }\right) & =\boldsymbol{0}%
	_{d(k+1)}, \\
	\sum_{i=1}^{N}\sum_{j=1}^{d+1}\{y_{ij}-\pi _{ij}(\boldsymbol{\beta })\}\pi
	_{ij}^{\lambda +1}(\boldsymbol{\beta })\frac{\partial }{\partial \boldsymbol{%
			\beta }}\pi _{ij}(\boldsymbol{\beta })& =\boldsymbol{0}_{d(k+1)}, \\
	\sum_{i=1}^{N}\frac{\partial }{\partial \boldsymbol{\beta }}\boldsymbol{\pi }%
	_{i}\left( \boldsymbol{\beta }\right) ^{T}\mathrm{diag}^{\lambda -1}\{
	\boldsymbol{\pi }_{i}\left( \boldsymbol{\beta }\right) \}\{\boldsymbol{y}_{i}-%
	\boldsymbol{\pi }_{i}\left( \boldsymbol{\beta }\right)  \}& =\boldsymbol{0}%
	_{d(k+1)}.
	\end{align*}%
	Now,  taking into account that 
	\begin{equation*}
	\frac{\partial \pi _{ij}(\boldsymbol{\beta })}{\partial \beta _{uv}}%
	=x_{iu}\pi _{ij}(\boldsymbol{\beta })\left\{ \delta _{jv}-\pi _{iv}(%
	\boldsymbol{\beta })\right\} ,\qquad u=1,...,k,v=1,...,d,
	\end{equation*}%
	we get 
	\begin{equation}
	\frac{\partial \boldsymbol{\pi }_{i}(\boldsymbol{\beta })^{T}}{\partial 
		\boldsymbol{\beta }}=\left( \boldsymbol{I}_{d},\boldsymbol{0}_{d}\right) 
	\boldsymbol{\Delta }(\boldsymbol{\pi }_{i}\left( \boldsymbol{\beta }\right)
	)\otimes \boldsymbol{x}_{i},  \label{der}
	\end{equation}%
	and hence the system of equations becomes%
	\begin{equation*}
	\sum_{i=1}^{N}\left[ \left( \boldsymbol{I}_{d},\boldsymbol{0}_{d}\right) 
	\boldsymbol{\Delta }(\boldsymbol{\pi }_{i}\left( \boldsymbol{\beta }\right) )%
	\mathrm{diag}^{\lambda -1}\{\boldsymbol{\pi }_{i}\left( \boldsymbol{\beta }%
	\right) \}\{\boldsymbol{y}_{i}-\boldsymbol{\pi }_{i}\left( \boldsymbol{\beta }%
	\right) \}\right] \otimes \boldsymbol{x}_{i}=\boldsymbol{0}_{d(k+1)}.
	\end{equation*}%
\end{proof}

Note that, under the PLRM, the  tabulated response variables $%
\boldsymbol{Y}_{i}$, $i=1,...,N$, are independent but not identically
distributed since the  covariates  $\boldsymbol{x}_{i}$ are generally assumed
to be pre-fixed (and hence different over $i$). In particular, for each $%
i=1,\ldots ,N$, given $\boldsymbol{x}_{i}$, $\boldsymbol{Y}_{i}$ has a
multinomial distribution having joint probability mass function 
\begin{equation}
f_{i}(\boldsymbol{y}_{i},\boldsymbol{\beta })=\prod_{j=1}^{d+1}\pi
_{ij}\left( \boldsymbol{\beta }\right) ^{y_{ij}}=\sum_{j=1}^{d+1}{y_{ij}}\pi
_{ij}\left( \boldsymbol{\beta }\right) ,\quad y_{ij}\in \{0,1\}\text{ for }%
j=1,\ldots ,d+1\text{, with }\sum_{j=1}^{d+1}y_{ij}=1.  \label{EQ:pmf1}
\end{equation}%
This distribution indeed belongs to the family of $d$-dimensional
Generalized Linear Models (GLM), where the distribution of $\boldsymbol{Y}%
_{i}^{\ast }=(Y_{i1},\dots ,Y_{id})^{T}$ can be written as 
\begin{align*}
f_{i}(\boldsymbol{y}_{i}^{\ast },\boldsymbol{\beta })& =\exp \left\{ 
\boldsymbol{\eta }_{i}^{T}\boldsymbol{y}_{i}^{\ast }-b(\boldsymbol{\eta }%
_{i})\right\} , \\
\boldsymbol{\eta }_{i}& =(\eta _{i1},...,\eta _{id})^{T},\quad \eta _{ij}=%
\boldsymbol{x}_{i}^{T}\boldsymbol{\beta }_{j},\text{ }j=1,...,d, \\
b(\boldsymbol{\eta }_{i})& =\log \left\{ 1+\sum_{j=1}^{d}\exp (\eta
_{ij})\right\} .
\end{align*}%
The general estimating equation of Ghosh and Basu (2013), based on
these forms of the model densities, can be seen to coincide with the one given
in Theorem 1.

\newpage

\subsection{\protect\normalsize Power function of the Wald-type tests}

Consider the set-up and notations of Section 3 of the main paper. We
consider $\boldsymbol{\beta }_{\ast }$ $\in \Theta $ such that $\boldsymbol{L%
}^{T}\boldsymbol{\beta }_{\ast }\neq \boldsymbol{h,}$ i.e., $\boldsymbol{%
	\beta }_{\ast }$ does not satisfy the null hypothesis given in Equation (6) of the main paper. Let us denote
\begin{equation}
q_{\boldsymbol{\beta }_{1}}({\boldsymbol{\beta }_{2}})=\left( \boldsymbol{L}%
^{T}\boldsymbol{\beta }_{1}-\boldsymbol{h}\right) ^{T}\left\{\boldsymbol{L}%
^{T}\boldsymbol{\Sigma }_{\lambda }\left( \boldsymbol{\beta }_{2}\right) 
\boldsymbol{L}\right\} ^{-1}\left( \boldsymbol{L}^{T}\boldsymbol{\beta }_{1}-%
\boldsymbol{h}\right)  \label{3.03}
\end{equation}%
and derive an approximation to the power function for the MDPDE based Wald-type test
with the rejection rule given by 
\begin{equation}
W_{N}(\widehat{\boldsymbol{\beta }}_{\lambda })>\chi _{r,\alpha }^{2}.
\label{3.3}
\end{equation}

\begin{theorem}
	\label{THM:PowerApprox_Wald} Let $\boldsymbol{\beta }_{\ast }\in \Theta$, 
	with $\boldsymbol{L}^{T}\boldsymbol{\beta }_{\ast }\neq \boldsymbol{h}$, be the
	true value of the parameter such that $\widehat{\boldsymbol{\beta }}_{\lambda }%
	\underset{N\rightarrow \infty }{\overset{P}{\longrightarrow }}\boldsymbol{%
		\beta }_{\ast }$. The power function of the Wald-type test  given
	in (\ref{3.3}),  is given by 
	\begin{equation}
	\beta _{N,\lambda }\left( \boldsymbol{\beta }_{\ast }\right) =1-\Phi
	_{N}\left( \frac{1}{\sigma _{\lambda }\left( \boldsymbol{\beta }_{\ast
		}\right) }\left( \frac{\chi _{s,\alpha }^{2}}{\sqrt{N}}-\sqrt{N}q_{%
		\boldsymbol{\beta }_{\ast }}(\boldsymbol{\beta }_{\ast })\right) \right)
	\label{3.4}
	\end{equation}%
	where $\Phi _{N}\left( x\right) $ uniformly tends to the standard normal
	distribution function $\Phi(x)$  as $N\rightarrow \infty $  and
	\begin{equation*}
	\sigma _{\lambda }\left( \boldsymbol{\beta }_{\ast }\right)=2 \sqrt{\left( 
		\boldsymbol{L}^{T}\boldsymbol{\beta }_{\ast }-\boldsymbol{h}\right)
		^{T}\left\{ \boldsymbol{L}^{T}\boldsymbol{\Sigma }_{\lambda }\left( 
		\boldsymbol{\beta }_{\ast }\right) \boldsymbol{L}\right\} ^{-1}\left( 
		\boldsymbol{L}^{T}\boldsymbol{\beta }_{\ast }-\boldsymbol{h}\right)} .
	\end{equation*}
\end{theorem}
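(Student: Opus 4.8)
The plan is to write the test statistic as a smooth function of the estimator, linearise it around the limiting value $\boldsymbol{\beta}_{\ast}$, and then push the asymptotic normality of the MDPDE from Theorem~\ref{THM:Asymp_MDPDE} through a delta-method and standardisation argument. First I would observe that, exactly, $W_{N}(\widehat{\boldsymbol{\beta}}_{\lambda})=N\,q_{\widehat{\boldsymbol{\beta}}_{\lambda}}(\widehat{\boldsymbol{\beta}}_{\lambda})$ with $q$ as in (\ref{3.03}) but with $\boldsymbol{\Sigma}_{\lambda}$ replaced by the finite-sample sandwich $\boldsymbol{M}_{N,\lambda}$. Since $\widehat{\boldsymbol{\beta}}_{\lambda}\overset{P}{\longrightarrow}\boldsymbol{\beta}_{\ast}$ and $\boldsymbol{M}_{N,\lambda}(\boldsymbol{\beta})\rightarrow\boldsymbol{\Sigma}_{\lambda}(\boldsymbol{\beta})=\boldsymbol{J}_{\lambda}^{-1}(\boldsymbol{\beta})\boldsymbol{V}_{\lambda}(\boldsymbol{\beta})\boldsymbol{J}_{\lambda}^{-1}(\boldsymbol{\beta})$ is continuous and positive definite, I would freeze the dispersion matrix at $\boldsymbol{\beta}_{\ast}$, i.e. approximate $q_{\widehat{\boldsymbol{\beta}}_{\lambda}}(\widehat{\boldsymbol{\beta}}_{\lambda})$ by $\widetilde{g}(\widehat{\boldsymbol{\beta}}_{\lambda})$, where $\widetilde{g}(\boldsymbol{\beta})=q_{\boldsymbol{\beta}}(\boldsymbol{\beta}_{\ast})$ holds the covariance fixed at $\boldsymbol{\Sigma}_{\lambda}(\boldsymbol{\beta}_{\ast})$ and varies only the linear contrast $\boldsymbol{L}^{T}\boldsymbol{\beta}-\boldsymbol{h}$. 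This separation is precisely the role of the two-argument notation $q_{\boldsymbol{\beta}_{1}}(\boldsymbol{\beta}_{2})$.

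Next I would Taylor expand the fixed smooth map $\widetilde{g}$ about $\boldsymbol{\beta}_{\ast}$. Because $\widetilde{g}$ is a quadratic form in $\boldsymbol{L}^{T}\boldsymbol{\beta}-\boldsymbol{h}$ with constant matrix, one has $\widetilde{g}(\boldsymbol{\beta}_{\ast})=q_{\boldsymbol{\beta}_{\ast}}(\boldsymbol{\beta}_{\ast})$ and the clean gradient $\nabla\widetilde{g}(\boldsymbol{\beta}_{\ast})=2\boldsymbol{L}\{\boldsymbol{L}^{T}\boldsymbol{\Sigma}_{\lambda}(\boldsymbol{\beta}_{\ast})\boldsymbol{L}\}^{-1}(\boldsymbol{L}^{T}\boldsymbol{\beta}_{\ast}-\boldsymbol{h})$, so that $\sqrt{N}\{\widetilde{g}(\widehat{\boldsymbol{\beta}}_{\lambda})-q_{\boldsymbol{\beta}_{\ast}}(\boldsymbol{\beta}_{\ast})\}=\nabla\widetilde{g}(\boldsymbol{\beta}_{\ast})^{T}\sqrt{N}(\widehat{\boldsymbol{\beta}}_{\lambda}-\boldsymbol{\beta}_{\ast})+o_{P}(1)$. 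Invoking Theorem~\ref{THM:Asymp_MDPDE} at the data-generating value $\boldsymbol{\beta}_{\ast}$, namely $\sqrt{N}(\widehat{\boldsymbol{\beta}}_{\lambda}-\boldsymbol{\beta}_{\ast})\overset{\mathcal{L}}{\longrightarrow}\mathcal{N}(\boldsymbol{0},\boldsymbol{\Sigma}_{\lambda}(\boldsymbol{\beta}_{\ast}))$, together with Slutsky's theorem then gives a univariate normal limit with variance $\nabla\widetilde{g}^{T}\boldsymbol{\Sigma}_{\lambda}\nabla\widetilde{g}$. The cancellation $\{\boldsymbol{L}^{T}\boldsymbol{\Sigma}_{\lambda}\boldsymbol{L}\}^{-1}(\boldsymbol{L}^{T}\boldsymbol{\Sigma}_{\lambda}\boldsymbol{L})\{\boldsymbol{L}^{T}\boldsymbol{\Sigma}_{\lambda}\boldsymbol{L}\}^{-1}=\{\boldsymbol{L}^{T}\boldsymbol{\Sigma}_{\lambda}\boldsymbol{L}\}^{-1}$ collapses this variance to $4\,q_{\boldsymbol{\beta}_{\ast}}(\boldsymbol{\beta}_{\ast})=\sigma_{\lambda}(\boldsymbol{\beta}_{\ast})^{2}$, exactly the stated dispersion.

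To finish, I would rewrite the rejection event in standardised form: $\beta_{N,\lambda}(\boldsymbol{\beta}_{\ast})=P(W_{N}>\chi_{r,\alpha}^{2})=P(\widetilde{g}(\widehat{\boldsymbol{\beta}}_{\lambda})>\chi_{r,\alpha}^{2}/N)$, and subtracting $q_{\boldsymbol{\beta}_{\ast}}(\boldsymbol{\beta}_{\ast})$ and dividing by $\sigma_{\lambda}(\boldsymbol{\beta}_{\ast})/\sqrt{N}$ inside the probability yields $\beta_{N,\lambda}(\boldsymbol{\beta}_{\ast})=1-\Phi_{N}\big(\sigma_{\lambda}(\boldsymbol{\beta}_{\ast})^{-1}(\chi_{r,\alpha}^{2}/\sqrt{N}-\sqrt{N}\,q_{\boldsymbol{\beta}_{\ast}}(\boldsymbol{\beta}_{\ast}))\big)$, where $\Phi_{N}$ denotes the exact distribution function of the standardised statistic, which by the central limit theorem together with a Polya-type uniform-convergence argument tends uniformly to $\Phi$. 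Since $q_{\boldsymbol{\beta}_{\ast}}(\boldsymbol{\beta}_{\ast})>0$ the argument diverges to $-\infty$, so as a by-product $\beta_{N,\lambda}(\boldsymbol{\beta}_{\ast})\rightarrow1$ and the test is consistent against every fixed alternative.

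The delicate point, and the step I would treat most carefully, is the freezing of the dispersion matrix. Replacing $\boldsymbol{M}_{N,\lambda}(\widehat{\boldsymbol{\beta}}_{\lambda})$ by $\boldsymbol{\Sigma}_{\lambda}(\boldsymbol{\beta}_{\ast})$ in $W_{N}/N=q_{\widehat{\boldsymbol{\beta}}_{\lambda}}(\widehat{\boldsymbol{\beta}}_{\lambda})$ introduces a term of order $O_{P}(N^{-1/2})$ (the matrix fluctuation multiplied by the non-vanishing contrast $\boldsymbol{L}^{T}\boldsymbol{\beta}_{\ast}-\boldsymbol{h}$), which formally sits at the same $\sqrt{N}$ scale as the leading term; consequently $\sigma_{\lambda}$ is the dispersion of the leading, contrast-only linearisation rather than the full delta-method variance, and the displayed expression is an \emph{approximation} to the power, as the theorem asserts. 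The argument is made rigorous by treating this matrix-estimation error, together with the Taylor remainder, as absorbed into the finite-sample law $\Phi_{N}$, whose only required property is uniform convergence to $\Phi$; this is the standard device used for such Wald-type power approximations in Basu et al.\ (2016, 2017b).
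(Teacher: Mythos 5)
Your proposal is correct and follows essentially the same route as the paper's proof: identify $W_N/N$ with $q_{\widehat{\boldsymbol{\beta}}_{\lambda}}(\widehat{\boldsymbol{\beta}}_{\lambda})$, argue that freezing the covariance argument at $\boldsymbol{\beta}_{\ast}$ does not change the asymptotic law, Taylor-expand the contrast-only quadratic form $q_{\boldsymbol{\beta}}(\boldsymbol{\beta}_{\ast})$ about $\boldsymbol{\beta}_{\ast}$, and push the asymptotic normality of the MDPDE through the delta method to obtain the variance $4\,q_{\boldsymbol{\beta}_{\ast}}(\boldsymbol{\beta}_{\ast})=\sigma_{\lambda}(\boldsymbol{\beta}_{\ast})^{2}$ before standardising the rejection event. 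Your closing discussion of the matrix-freezing step is in fact more explicit than the paper, which simply asserts that $q_{\widehat{\boldsymbol{\beta}}_{\lambda}}(\widehat{\boldsymbol{\beta}}_{\lambda})$ and $q_{\widehat{\boldsymbol{\beta}}_{\lambda}}(\boldsymbol{\beta}_{\ast})$ share the same asymptotic distribution.
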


\begin{proof}
	We have 
	\begin{align*}
	\beta _{N,\lambda }\left( \boldsymbol{\beta }_{\ast }\right) & =\Pr \left(
	W_{N}(\widehat{\boldsymbol{\beta }}_{\lambda })>\chi _{r,\alpha }^{2}\right)
	=\Pr \left( N\left( q_{\widehat{\boldsymbol{\beta }}_{\lambda }}(\widehat{%
		\boldsymbol{\beta }}_{\lambda })-q_{\boldsymbol{\beta }_{\ast }}(\boldsymbol{%
		\beta }_{\ast })\right) >\chi _{r,\alpha }^{2}-Nq_{\boldsymbol{\beta }%
		_{\ast }}(\boldsymbol{\beta }_{\ast })\right) \\
	& =\Pr \left( \sqrt{N}\left( q_{\widehat{\boldsymbol{\beta }}_{\lambda }}(%
	\widehat{\boldsymbol{\beta }}_{\lambda })-q_{\boldsymbol{\beta }_{\ast }}(%
	\boldsymbol{\beta }_{\ast })\right) >\frac{\chi _{r,\alpha }^{2}}{\sqrt{%
			N}}-\sqrt{N}q_{\boldsymbol{\beta }_{\ast }}(\boldsymbol{\beta }_{\ast
	})\right) .
	\end{align*}%
	
	Note that, since $\widehat{\boldsymbol{\beta }}_{\lambda }\underset{N\rightarrow \infty 
	}{\overset{P}{\longrightarrow }}\boldsymbol{\beta }_{\ast }$, $q_{\widehat{%
			\boldsymbol{\beta }}_{\lambda }}(\widehat{\boldsymbol{\beta }}_{\lambda })$
	and $q_{\widehat{\boldsymbol{\beta }}_{\lambda }}(\boldsymbol{\beta }_{\ast
	})$ have the same asymptotic distribution. But, a first Taylor expansion of $q_{%
		\beta}(\boldsymbol{\beta }_{\ast })$ 
	around $\boldsymbol{\beta}^\ast$ 
	at $\widehat{\boldsymbol{\beta}}_\lambda$
	gives 
	\begin{equation*}
	q_{\widehat{\boldsymbol{\beta }}_{\lambda }}(\boldsymbol{\beta }_{\ast })-q_{%
		\boldsymbol{\beta }_{\ast }}(\boldsymbol{\beta }_{\ast })=\left. \frac{%
		\partial q_{\boldsymbol{\beta }}(\boldsymbol{\beta }_{\ast })}{\partial 
		\boldsymbol{\beta }^{T}}\right\vert _{\boldsymbol{\beta }=\boldsymbol{\beta }%
		_{\ast }}(\widehat{\boldsymbol{\beta }}_{\lambda }-\boldsymbol{\beta }_{\ast
	})+o_{p}\left( \left\Vert \widehat{\boldsymbol{\beta }}_{\lambda }-%
	\boldsymbol{\beta }_{\ast }\right\Vert \right) .
	\end{equation*}%
	Therefore, 
	\begin{equation*}
	\sqrt{N}(q_{\widehat{\boldsymbol{\beta }}_{\lambda }}(\widehat{\boldsymbol{\beta }}_{\ast })-q_{\boldsymbol{\beta }_{\ast }}(\boldsymbol{\beta }_{\ast }))\overset{\mathcal{L}}{\underset{N\rightarrow \infty }{\longrightarrow }}%
	\mathcal{N}\left( 0,\sigma _{\lambda }\left( \boldsymbol{\beta }_{\ast }\right)^{2}\right)
	\end{equation*}%
	where
	\begin{equation*}
	\sigma _{\lambda }\left( \boldsymbol{\beta }_{\ast }\right) ^{2}=\left. 
	\frac{\partial q_{\boldsymbol{\beta }}(\boldsymbol{\beta }_{\ast })}{%
		\partial \boldsymbol{\beta }^{T}}\right\vert _{\boldsymbol{\beta }=%
		\boldsymbol{\beta }_{\ast }}\boldsymbol{\Sigma }_{\lambda }\left( 
	\boldsymbol{\beta }_{\ast }\right) \left. \frac{\partial q_{\boldsymbol{%
				\beta }}(\boldsymbol{\beta }_{\ast })}{\partial \boldsymbol{\beta }}%
	\right\vert _{\boldsymbol{\beta }=\boldsymbol{\beta }_{\ast }}.
	\end{equation*}%
	But 
	\begin{equation*}
	\left. \frac{\partial q_{\boldsymbol{\beta }}(\boldsymbol{\beta }_{\ast })}{%
		\partial \boldsymbol{\beta } ^{T}}\right\vert _{\boldsymbol{\beta }=\boldsymbol{%
			\beta }_{\ast }}^{T}=2\left( \boldsymbol{L}^{T}\boldsymbol{\beta }_{\ast }-%
	\boldsymbol{h}\right)\left\{ \boldsymbol{L}^{T}\boldsymbol{\Sigma }%
	_{\lambda }\left( \boldsymbol{\beta }_{\ast }\right) \boldsymbol{L}\right\}
	^{-1}\boldsymbol{L}^{T},
	\end{equation*}%
	and hence the result follows.
\end{proof}

\begin{remark}
	Based on the previous theorem we can obtain the minimum sample size which is necessary to achieve
	a fix power, say $\beta _{N,\lambda }\left( \boldsymbol{\beta }_{\ast
	}\right) =\overline{\beta }$. Based in equation (\ref{3.4}), we must solve
	the equation 
	\begin{equation*}
	1-\overline{\beta }=\Phi \left( \frac{1}{\sigma _{\lambda }\left( 
		\boldsymbol{\beta }_{\ast }\right) }(\tfrac{\chi _{r,\alpha }^{2}}{\sqrt{N}}-%
	\sqrt{N}q_{\boldsymbol{\beta }_{\ast }}(\boldsymbol{\beta }_{\ast }))\right)
	\end{equation*}%
	and we get that $N=\left[ N^{\ast }\right] +1$ with 
	\begin{equation*}
	N^{\ast }=\frac{A+B+\sqrt{A(A+2B)}}{2q_{\boldsymbol{\beta }_{\ast }}({%
			\boldsymbol{\beta }_{\ast }})^{2}}
	\end{equation*}%
	being $A=\sigma _{\lambda }\left( \boldsymbol{\beta }_{\ast }\right)
	^{2}\left( \Phi ^{-1}\left( 1-\overline{\beta }\right) \right) ^{2}$ and $B=%
	\frac{1}{2}\chi _{r,\alpha }^{2}q_{\boldsymbol{\beta }_{\ast }}(${$%
		\boldsymbol{\beta }_{\ast }$}$)$ and $q_{\boldsymbol{\beta }_{\ast }}(${$%
		\boldsymbol{\beta }_{\ast }$}$)$ is given in (\ref{3.03})$.$ 
\end{remark}

\begin{remark}
	It also follows from Theorem \ref{THM:PowerApprox_Wald} that $\beta_{N,\lambda}(\beta^\ast) \rightarrow 1$, as $N\rightarrow \infty$, for all $\lambda\geq 0$. Therefore the proposed MDPDE based Wald-type tests are consistent at any fixed alternative.
\end{remark}

We now derive the asymptotic power function for the Wald-type test  with the rejection rule given in (\ref{3.3}) at an
alternative hypothesis close to the null hypothesis. Consider the parameter value $\boldsymbol{\beta }%
_{N} \in \Theta$ with $\boldsymbol{L}^{T}\boldsymbol{\beta }$ $\boldsymbol{_{N}}\neq 
\boldsymbol{h}$, and the alternative hypothesis   given by $\boldsymbol{\beta}=\boldsymbol{\beta}_N$. Suppose $\boldsymbol{\beta 
}_{0}$ be the closest  parameter value to $\boldsymbol{\beta}_N$ in the Euclidean distance such that $\boldsymbol{L}^{T}\boldsymbol{\beta }_{0}=%
\boldsymbol{h}$. A first possibility to introduce such contiguous alternative
hypotheses is to consider a fixed $\boldsymbol{d}\in 
\mathbb{R}
^{kd}$ and to permit $\boldsymbol{\beta }_{N}$ moving towards $\boldsymbol{%
	\beta }_{0}$ as $N$ increases in the following way 
\begin{equation}
H_{1,N}:\beta = \boldsymbol{\beta }_{N}=\boldsymbol{\beta }_{0}+N^{-1/2}\boldsymbol{d%
}.  \label{3.5}
\end{equation}%
A second approach is to relax the existence of a closest element $\boldsymbol{\beta}_0$ in the null parameter space and consider the sequence $\boldsymbol{\beta}_N$ such that for  $\boldsymbol{%
	\delta }\in \mathbb{R}^{kd}$, it satisfies
\begin{equation}
H_{1,N}^{\ast }:\boldsymbol{L}^{T}\boldsymbol{\beta }_{N}=\boldsymbol{h}%
-N^{-1/2}\boldsymbol{\delta }.  \label{3.6}
\end{equation}%
Note that, whenever the closest null parameter $\boldsymbol{\beta}_0$ exists, we have 
\begin{equation}
\boldsymbol{L}^{T}\boldsymbol{\beta }_{N}-\boldsymbol{h=L}^{T}\boldsymbol{%
	\boldsymbol{\beta }}_{0}\boldsymbol{+L}^{T}N^{-1/2}\boldsymbol{\boldsymbol{d}%
}-\boldsymbol{h}=N^{-1/2}\boldsymbol{L}^{T}\boldsymbol{\boldsymbol{d}}.
\label{3.7}
\end{equation}%
Then the equivalence between the two hypotheses (\ref{3.5}) and (\ref{3.6}) is given by 
\begin{equation}
\boldsymbol{L}^{T}\boldsymbol{\boldsymbol{d}}=\boldsymbol{\delta }.
\label{3.8}
\end{equation}

If we denote by $\chi _{r}^{2}(\Delta )$ the non central chi-square
distribution with $r$ degrees of freedom and non-centrality parameter $%
\Delta $, we can state the following theorem.

\begin{theorem}
	\label{THM:PowerCont_Wald} We have
	
	\begin{enumerate}
		\item[i)] $W_{N}(\widehat{\boldsymbol{\beta }}_{\lambda })\underset{%
			N\rightarrow \infty }{\overset{L}{\longrightarrow }}\chi _{r}^{2}\left(
		\Delta _{1}\right) $ under $H_{1,N}$ given in (\ref{3.5}), with 
		$\Delta _{1}=\boldsymbol{d}^{T}\boldsymbol{L}\left\{ \boldsymbol{L}^{T}%
		\boldsymbol{\Sigma }_{\lambda }\left( \boldsymbol{\beta }_{0}\right) 
		\boldsymbol{L}\right\} ^{-1}\boldsymbol{L}^{T}\boldsymbol{d}$. 
		
		\item[ii)] $W_{N}(\widehat{\boldsymbol{\beta }}_{\lambda })\underset{%
			N\rightarrow \infty }{\overset{L}{\longrightarrow }}\chi _{r}^{2}\left(
		\Delta _{2}\right) $ under $H_{1,N}^{\ast }$ given in (\ref{3.6}), with 
		$\Delta _{2}=\boldsymbol{\delta }^{T}\left\{\boldsymbol{L}^{T}\boldsymbol{%
			\Sigma }_{\lambda }\left( \boldsymbol{\beta }_{0}\right) \boldsymbol{L}%
		\right\} ^{-1}\boldsymbol{\delta }$. 
	\end{enumerate}
\end{theorem}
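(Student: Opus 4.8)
The plan is to reduce both parts to a single statement about the asymptotic behaviour of $\sqrt{N}(\boldsymbol{L}^{T}\widehat{\boldsymbol{\beta}}_{\lambda}-\boldsymbol{h})$ under the respective drifting sequences, and then to recognize the Wald-type statistic as a quadratic form in an asymptotically Gaussian vector. First I would use the asymptotically linear representation of the MDPDE: since it solves the estimating equation $\boldsymbol{u}_{\lambda}(\boldsymbol{\beta})=\boldsymbol{0}_{d(k+1)}$ of Theorem \ref{THM:Est_Eqn}, a one-step Taylor expansion about the data-generating value $\boldsymbol{\beta}_{N}$ yields a representation of the form $\sqrt{N}(\widehat{\boldsymbol{\beta}}_{\lambda}-\boldsymbol{\beta}_{N})=\boldsymbol{J}_{\lambda}^{-1}(\boldsymbol{\beta}_{0})\,N^{-1/2}\boldsymbol{u}_{\lambda}(\boldsymbol{\beta}_{N})+o_{p}(1)$, valid because $\boldsymbol{\beta}_{N}\to\boldsymbol{\beta}_{0}$ and $\boldsymbol{\Psi}_{N,\lambda}(\cdot),\boldsymbol{\Omega}_{N,\lambda}(\cdot)$ are continuous with limits $\boldsymbol{J}_{\lambda}(\boldsymbol{\beta}_{0}),\boldsymbol{V}_{\lambda}(\boldsymbol{\beta}_{0})$. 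Combining Theorem \ref{THM:Asymp_MDPDE} with a contiguity argument (the drifting alternatives are mutually contiguous with the null since $\boldsymbol{\beta}_{N}-\boldsymbol{\beta}_{0}=O(N^{-1/2})$), the leading term remains asymptotically Gaussian, so that
\[
\sqrt{N}(\widehat{\boldsymbol{\beta}}_{\lambda}-\boldsymbol{\beta}_{N})\overset{\mathcal{L}}{\underset{N\rightarrow\infty}{\longrightarrow}}\mathcal{N}\big(\boldsymbol{0}_{d(k+1)},\boldsymbol{\Sigma}_{\lambda}(\boldsymbol{\beta}_{0})\big).
\]

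With this in hand the two cases differ only in a deterministic drift. For part (i), under $H_{1,N}$ in (\ref{3.5}) one has $\boldsymbol{L}^{T}\boldsymbol{\beta}_{N}-\boldsymbol{h}=N^{-1/2}\boldsymbol{L}^{T}\boldsymbol{d}$ by (\ref{3.7}), so that
\[
\sqrt{N}(\boldsymbol{L}^{T}\widehat{\boldsymbol{\beta}}_{\lambda}-\boldsymbol{h})=\boldsymbol{L}^{T}\sqrt{N}(\widehat{\boldsymbol{\beta}}_{\lambda}-\boldsymbol{\beta}_{N})+\boldsymbol{L}^{T}\boldsymbol{d}\overset{\mathcal{L}}{\longrightarrow}\mathcal{N}\big(\boldsymbol{L}^{T}\boldsymbol{d},\,\boldsymbol{L}^{T}\boldsymbol{\Sigma}_{\lambda}(\boldsymbol{\beta}_{0})\boldsymbol{L}\big).
\]
For part (ii), under $H_{1,N}^{\ast}$ in (\ref{3.6}) the drift is $\sqrt{N}(\boldsymbol{L}^{T}\boldsymbol{\beta}_{N}-\boldsymbol{h})=-\boldsymbol{\delta}$, giving the identical limit but with mean $-\boldsymbol{\delta}$ in place of $\boldsymbol{L}^{T}\boldsymbol{d}$; the two descriptions are reconciled through the identity (\ref{3.8}).

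To finish, I would establish the consistency $\boldsymbol{L}^{T}\boldsymbol{M}_{N,\lambda}(\widehat{\boldsymbol{\beta}}_{\lambda})\boldsymbol{L}\overset{P}{\longrightarrow}\boldsymbol{L}^{T}\boldsymbol{\Sigma}_{\lambda}(\boldsymbol{\beta}_{0})\boldsymbol{L}$, which follows from $\widehat{\boldsymbol{\beta}}_{\lambda}\overset{P}{\longrightarrow}\boldsymbol{\beta}_{0}$ together with the continuity of $\boldsymbol{\Psi}_{N,\lambda},\boldsymbol{\Omega}_{N,\lambda}$ and their convergence to $\boldsymbol{J}_{\lambda},\boldsymbol{V}_{\lambda}$. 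Writing $\boldsymbol{Z}_{N}=\sqrt{N}(\boldsymbol{L}^{T}\widehat{\boldsymbol{\beta}}_{\lambda}-\boldsymbol{h})$ and $\boldsymbol{A}_{N}=\boldsymbol{L}^{T}\boldsymbol{M}_{N,\lambda}(\widehat{\boldsymbol{\beta}}_{\lambda})\boldsymbol{L}$, the statistic is exactly $W_{N}(\widehat{\boldsymbol{\beta}}_{\lambda})=\boldsymbol{Z}_{N}^{T}\boldsymbol{A}_{N}^{-1}\boldsymbol{Z}_{N}$, so by Slutsky it has the same limit as $\boldsymbol{Z}^{T}\boldsymbol{A}^{-1}\boldsymbol{Z}$ with $\boldsymbol{Z}\sim\mathcal{N}(\boldsymbol{\mu},\boldsymbol{A})$, $\boldsymbol{A}=\boldsymbol{L}^{T}\boldsymbol{\Sigma}_{\lambda}(\boldsymbol{\beta}_{0})\boldsymbol{L}$, and $\boldsymbol{\mu}=\boldsymbol{L}^{T}\boldsymbol{d}$ or $\boldsymbol{\mu}=-\boldsymbol{\delta}$. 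Invoking the standard fact that such a quadratic form follows $\chi_{r}^{2}\big(\boldsymbol{\mu}^{T}\boldsymbol{A}^{-1}\boldsymbol{\mu}\big)$ then delivers the two non-centrality parameters $\Delta_{1}=\boldsymbol{d}^{T}\boldsymbol{L}\{\boldsymbol{L}^{T}\boldsymbol{\Sigma}_{\lambda}(\boldsymbol{\beta}_{0})\boldsymbol{L}\}^{-1}\boldsymbol{L}^{T}\boldsymbol{d}$ and $\Delta_{2}=\boldsymbol{\delta}^{T}\{\boldsymbol{L}^{T}\boldsymbol{\Sigma}_{\lambda}(\boldsymbol{\beta}_{0})\boldsymbol{L}\}^{-1}\boldsymbol{\delta}$ exactly as stated.

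The hard part will be the very first step: transferring the asymptotic normality of the MDPDE from the fixed true value of Theorem \ref{THM:Asymp_MDPDE} to the drifting parameter $\boldsymbol{\beta}_{N}$. This demands either a locally uniform version of the expansion of the estimating function $\boldsymbol{u}_{\lambda}$ over $N^{-1/2}$-neighbourhoods of $\boldsymbol{\beta}_{0}$, or a clean contiguity argument guaranteeing that the $o_{p}(1)$ remainder, negligible under the null, stays negligible under the contiguous alternatives; checking the conditions of Le Cam's lemmas for the non-homogeneous PLRM model is the delicate point, whereas the drift bookkeeping and the quadratic-form limit downstream are entirely routine.
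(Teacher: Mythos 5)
Your proposal is correct and follows essentially the same route as the paper: decompose $\boldsymbol{L}^{T}\widehat{\boldsymbol{\beta}}_{\lambda}-\boldsymbol{h}$ into the deterministic drift plus $\boldsymbol{L}^{T}(\widehat{\boldsymbol{\beta}}_{\lambda}-\boldsymbol{\beta}_{N})$, invoke asymptotic normality of $\sqrt{N}(\widehat{\boldsymbol{\beta}}_{\lambda}-\boldsymbol{\beta}_{N})$ with limiting covariance $\boldsymbol{\Sigma}_{\lambda}(\boldsymbol{\beta}_{0})$, and conclude with the standard noncentral chi-square result for a Gaussian quadratic form, reconciling parts (i) and (ii) via (\ref{3.8}). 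The only difference is that the paper simply asserts the asymptotic normality under the drifting sequence $\boldsymbol{\beta}_{N}$ without the local-uniformity/contiguity justification you rightly identify as the delicate step, so your write-up is if anything more careful than the published argument.
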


\begin{proof}
	We have 
	\begin{align*}
	\boldsymbol{L}^{T}\widehat{\boldsymbol{\beta }}_{\lambda }-\boldsymbol{h}& =%
	\boldsymbol{L}^{T}\boldsymbol{\beta }_{N}-\boldsymbol{h+L}^{T}(\widehat{%
		\boldsymbol{\beta }}_{\lambda }-\boldsymbol{\beta }_{N}) \\
	& =\boldsymbol{L}^{T}\boldsymbol{\beta }_{0}+\boldsymbol{L}^{T}N^{-1/2}%
	\boldsymbol{\boldsymbol{d}}-\boldsymbol{h+L}^{T}(\widehat{\boldsymbol{\beta }%
	}_{\lambda }-\boldsymbol{\beta }_{N}) \\
	& =\boldsymbol{L}^{T}N^{-1/2}\boldsymbol{\boldsymbol{d+L}}^{T}(\widehat{%
		\boldsymbol{\beta }}_{\lambda }-\boldsymbol{\beta }_{N}).
	\end{align*}%
	Therefore, 
	\begin{equation*}
	\boldsymbol{L}\widehat{\boldsymbol{\beta }}_{\lambda }-\boldsymbol{h}=%
	\boldsymbol{L}^{T}N^{-1/2}\boldsymbol{\boldsymbol{d+L}}^{T}(\widehat{%
		\boldsymbol{\beta }}_{\lambda }-\boldsymbol{\beta }_{N}).
	\end{equation*}%
	We know, under $H_{1,N}$, that 
	\begin{equation*}
	\sqrt{N}(\widehat{\boldsymbol{\beta }}_{\lambda }-\boldsymbol{\beta }_{N})%
	\overset{%
		\mathcal{L}%
	}{\underset{N\rightarrow \infty }{\longrightarrow }}%
	\mathcal{N}%
	\left( \mathbf{0}_{d(k+1)},\boldsymbol{\Sigma }_{\lambda }\left( \boldsymbol{%
		\beta }_{N}\right) \right)
	\end{equation*}%
	and $\boldsymbol{\beta }_{N}\underset{N\rightarrow \infty }{\longrightarrow }%
	\boldsymbol{\beta }_{0}$. Therefore
	\begin{equation*}
	\sqrt{N}(\widehat{\boldsymbol{\beta }}_{\lambda }-\boldsymbol{\beta }_{N})%
	\overset{%
		\mathcal{L}%
	}{\underset{N\rightarrow \infty }{\longrightarrow }}%
	\mathcal{N}%
	\left( \mathbf{0}_{d(k+1)},\boldsymbol{\Sigma }_{\lambda }(\boldsymbol{\beta 
	}_{0})\right)
	\end{equation*}%
	and 
	\begin{equation*}
	\sqrt{N}(\boldsymbol{L}^{T}\widehat{\boldsymbol{\beta }}_{\lambda }-%
	\boldsymbol{h})\overset{%
		\mathcal{L}%
	}{\underset{N\rightarrow \infty }{\longrightarrow }}%
	\mathcal{N}%
	\left( \boldsymbol{L}^{T}\boldsymbol{\boldsymbol{d}},\boldsymbol{L}^{T}%
	\boldsymbol{\Sigma }_{\lambda }(\boldsymbol{\beta }_{0})\boldsymbol{L}%
	\right) .
	\end{equation*}%
	But we know: \textquotedblleft If $\boldsymbol{Z}\in N\left( \boldsymbol{\mu
		,\Sigma }\right) $, $\boldsymbol{\Sigma }$ is a symmetric projection of rank 
	$k$ and $\boldsymbol{\Sigma \mu }=\boldsymbol{\mu }$, then $\boldsymbol{Z}%
	^{T}\boldsymbol{Z}$ is a chi-square distribution with $k$ degrees of freedom
	and non-centrality parameter $\boldsymbol{\mu }^{T}\boldsymbol{\mu }$%
	\textquotedblright .\ The quadratic form is 
	\begin{equation*}
	W_{N}(\widehat{\boldsymbol{\beta }}_{\lambda })=\boldsymbol{Z}^{T}%
	\boldsymbol{Z}
	\end{equation*}%
	with 
	\begin{equation*}
	\boldsymbol{Z}=\sqrt{N}\left\{ \boldsymbol{L}^{T}\boldsymbol{M}_{N,\lambda }(%
	\widehat{\beta }_{\lambda })\boldsymbol{L}\right\} ^{-1/2}(\boldsymbol{L}^{T}%
	\widehat{\boldsymbol{\beta }}_{\lambda }-\boldsymbol{h})
	\end{equation*}%
	and 
	\begin{equation*}
	\boldsymbol{Z}\overset{%
		\mathcal{L}%
	}{\underset{N\rightarrow \infty }{\longrightarrow }}%
	\mathcal{N}%
	\left( \left( \boldsymbol{L}^{T}\boldsymbol{\Sigma }_{\lambda }(\boldsymbol{%
		\beta }_{0})\boldsymbol{L}\right) ^{-1/2}\boldsymbol{L}^{T}\boldsymbol{d},%
	\boldsymbol{I}_{r}\right) .
	\end{equation*}%
	Hence, the result of i) is immediately verified  and the non-centrality
	parameter takes the form
	\begin{equation*}
	\boldsymbol{d}^{T}\boldsymbol{L}\left\{ \boldsymbol{L}^{T}\boldsymbol{\Sigma }%
	_{\lambda }(\boldsymbol{\beta }_{0})\boldsymbol{L}\right\} ^{-1}\boldsymbol{L}%
	^{T}\boldsymbol{d}.
	\end{equation*}%
	For ii), we can follow (\ref{3.8}).
\end{proof}

\newpage
\subsection{\protect\normalsize Proofs of the Theorems in the main paper}

\noindent \textit{Proof of Theorem 1 of the main paper: }

\textit{Let }$\boldsymbol{y}_{i}^{\ast }=(y_{i1},\ldots ,y_{id})^{T}$ be the
reduced version of $\boldsymbol{y}_{i}$\ and%
\begin{equation*}
\mathcal{Y}%
^{\ast }=\boldsymbol{0}_{d}\cup \{\boldsymbol{e}_{j,d}\}_{j=1}^{d},
\end{equation*}%
the sample space of the (reduced)  response vector, where $\boldsymbol{e}_{j,d}$ is
the $j$-th column of the identity matrix of order $d$. From Theorem 3.1 of
Ghosh and Basu (2013), we get the first part on consistency as well as 
\begin{equation*}
\boldsymbol{\Omega }_{N,\lambda }^{-1/2}\left( \boldsymbol{\beta }%
_{0}\right) \boldsymbol{\Psi }_{N,\lambda }\left( \boldsymbol{\beta }%
_{0}\right) \sqrt{N}(\widehat{\boldsymbol{\beta }}_{\lambda }-\boldsymbol{%
	\beta }_{0})\overset{%
	\mathcal{L}%
}{\underset{N\rightarrow \infty }{\longrightarrow }}%
\mathcal{N}%
\left( \boldsymbol{0}_{d(k+1)},\boldsymbol{I}_{d(k+1)}\right) ,
\end{equation*}%
where 
\begin{equation*}
\boldsymbol{\Psi }_{N,\lambda }\left( \boldsymbol{\beta }\right) =\frac{1}{N}%
\sum\limits_{i=1}^{N}\sum_{\boldsymbol{y}_{i}^{\ast }\in 
	\mathcal{Y}%
	^{\ast }}\boldsymbol{u}_{i}(\boldsymbol{y}_{i}^{\ast },\boldsymbol{\beta })%
\boldsymbol{u}_{i}(\boldsymbol{y}_{i}^{\ast },\boldsymbol{\beta }%
)^{T}f_{i}^{\lambda +1}(\boldsymbol{y}_{i}^{\ast },\boldsymbol{\beta }),
\end{equation*}
\begin{equation*}
\boldsymbol{\Omega }_{N,\lambda }\left( \boldsymbol{\beta }\right) =\frac{1}{%
	N}\sum\limits_{i=1}^{N}\left\{ \sum_{\boldsymbol{y}_{i}^{\ast }\in 
	\mathcal{Y}%
	^{\ast }}\boldsymbol{u}_{i}(\boldsymbol{y}_{i}^{\ast },\boldsymbol{\beta })%
\boldsymbol{u}_{i}(\boldsymbol{y}_{i}^{\ast },\boldsymbol{\beta })^{T}f_{i}(%
\boldsymbol{y}_{i}^{\ast },\boldsymbol{\beta })^{1+2\lambda }-\boldsymbol{%
	\xi }_{i,\lambda }\left( \boldsymbol{\beta }_{0}\right) \boldsymbol{\xi }%
_{i,\lambda }\left( \boldsymbol{\beta }_{0}\right) ^{T}\right\} ,
\end{equation*}%
and 
\begin{equation*}
\boldsymbol{\xi }_{i,\lambda }\left( \boldsymbol{\beta }\right) =\sum_{%
	\boldsymbol{y}_{i}^{\ast }\in 
	\mathcal{Y}%
	^{\ast }}\boldsymbol{u}_{i}(\boldsymbol{y}_{i}^{\ast },\boldsymbol{\beta }%
)f_{i}(\boldsymbol{y}_{i}^{\ast },\boldsymbol{\beta })^{\lambda +1},
\end{equation*}%
with $\boldsymbol{u}_{i}(\boldsymbol{y}_{i}^{\ast },\boldsymbol{\beta })=%
\frac{\partial }{\partial \boldsymbol{\beta }}\ln f_{i}(\boldsymbol{y}%
_{i}^{\ast },\boldsymbol{\beta })=\left( \boldsymbol{y}_{i}^{\ast }-%
\boldsymbol{\pi }_{i}^{\ast }\left( \boldsymbol{\beta }\right) \right)
\otimes \boldsymbol{x}_{i}$. The calculations of the matrix $\boldsymbol{\Psi }_{N,\lambda }\left( \boldsymbol{\beta }\right)$ are as follows%
\allowdisplaybreaks
\begin{align*}
\boldsymbol{\Psi }_{N,\lambda }\left( \boldsymbol{\beta }\right) & =\frac{1}{%
	N}\sum\limits_{i=1}^{N}\left\{ \sum_{j=1}^{d}\boldsymbol{u}_{i}(\boldsymbol{e}%
_{j,d},\boldsymbol{\beta })\boldsymbol{u}_{i}(\boldsymbol{e}_{j,d},%
\boldsymbol{\beta })^{T}f_{i}^{\lambda +1}(\boldsymbol{e}_{j,d},\boldsymbol{%
	\beta })+\boldsymbol{u}_{i}(\boldsymbol{0}_{d},\boldsymbol{\beta })%
\boldsymbol{u}_{i}(\boldsymbol{0}_{d},\boldsymbol{\beta })^{T}f_{i}^{\lambda
	+1}(\boldsymbol{0}_{d},\boldsymbol{\beta })\right\} \\
& =\frac{1}{N}\sum\limits_{i=1}^{N}\left[ \sum_{j=1}^{d}\left\{ (\boldsymbol{%
	e}_{j,d}-\boldsymbol{\pi }_{i}^{\ast }\left( \boldsymbol{\beta }\right)
)\otimes \boldsymbol{x}_{i}\right\} \left\{ (\boldsymbol{e}_{j,d}-\boldsymbol{%
	\pi }_{i}^{\ast }\left( \boldsymbol{\beta }\right) )\otimes \boldsymbol{x}%
_{i}\right\} ^{T}\pi _{ij}^{\lambda +1}\left( \boldsymbol{\beta }\right)
\right. \\
& \left. +\left\{ \boldsymbol{\pi }_{i}^{\ast }\left( \boldsymbol{\beta }%
\right) \otimes \boldsymbol{x}_{i}\right\} \left\{ \boldsymbol{\pi }_{i}^{\ast
}\left( \boldsymbol{\beta }\right) \otimes \boldsymbol{x}_{i}\right\} ^{T}\pi
_{i,d+1}^{\lambda +1}\left( \boldsymbol{\beta }\right) \right] \\
& =\frac{1}{N}\sum\limits_{i=1}^{N}\left[ \sum_{j=1}^{d}\left\{ \pi
_{ij}^{\lambda +1}\left( \boldsymbol{\beta }\right) (\boldsymbol{e}_{j,d}-%
\boldsymbol{\pi }_{i}^{\ast }\left( \boldsymbol{\beta }\right) )(\boldsymbol{%
	e}_{j,d}-\boldsymbol{\pi }_{i}^{\ast }\left( \boldsymbol{\beta }\right) )^{T}%
\right\} \otimes \boldsymbol{x}_{i}\boldsymbol{x}_{i}^{T}\right. \\
& +\left. \left\{ \pi _{i,d+1}^{\lambda +1}\left( \boldsymbol{\beta }\right) 
\boldsymbol{\pi }_{i}^{\ast }\left( \boldsymbol{\beta }\right) \boldsymbol{%
	\pi }_{i}^{\ast }\left( \boldsymbol{\beta }\right) ^{T}\right\} \otimes 
\boldsymbol{x}_{i}\boldsymbol{x}_{i}^{T}\right]\\
& =\frac{1}{N}\sum\limits_{i=1}^{N}\left[ \sum_{j=1}^{d}\left[ \pi
_{ij}^{\lambda +1}\left( \boldsymbol{\beta }\right) \left\{\boldsymbol{e}_{j,d}-%
\boldsymbol{\pi }_{i}^{\ast }\left( \boldsymbol{\beta }\right) \right\}\left\{\boldsymbol{%
	e}_{j,d}-\boldsymbol{\pi }_{i}^{\ast }\left( \boldsymbol{\beta }\right) \right\}^{T}%
\right] \right. \\
& +\left. \pi _{i,d+1}^{\lambda +1}\left( \boldsymbol{\beta }\right) 
\boldsymbol{\pi }_{i}^{\ast }\left( \boldsymbol{\beta }\right) \boldsymbol{%
	\pi }_{i}^{\ast }\left( \boldsymbol{\beta }\right) ^{T}\right] \otimes 
\boldsymbol{x}_{i}\boldsymbol{x}_{i}^{T} \\
& =\frac{1}{N}\sum\limits_{i=1}^{N}\left\{ \sum_{j=1}^{d}\pi _{ij}^{\lambda
	+1}\left( \boldsymbol{\beta }\right) \boldsymbol{e}_{j,d}\boldsymbol{e}%
_{j,d}^{T}-\left( \sum_{j=1}^{d}\pi _{ij}^{\lambda +1}\left( \boldsymbol{%
	\beta }\right) \boldsymbol{e}_{j,d}\right) \boldsymbol{\pi }_{i}^{\ast
}\left( \boldsymbol{\beta }\right) ^{T}\right. \\
& -\boldsymbol{\pi }_{i}^{\ast }\left( \boldsymbol{\beta }\right) \left(
\sum_{j=1}^{d}\pi _{ij}^{\lambda +1}\left( \boldsymbol{\beta }\right) 
\boldsymbol{e}_{j,d}\right) ^{T}+\left( \sum_{j=1}^{d}\pi _{ij}^{\lambda
	+1}\left( \boldsymbol{\beta }\right) \right) \boldsymbol{\pi }_{i}^{\ast
}\left( \boldsymbol{\beta }\right) \boldsymbol{\pi }_{i}^{\ast }\left( 
\boldsymbol{\beta }\right) ^{T} \\
& \left. +\pi _{i,d+1}^{\lambda +1}\left( \boldsymbol{\beta }\right) 
\boldsymbol{\pi }_{i}^{\ast }\left( \boldsymbol{\beta }\right) \boldsymbol{%
	\pi }_{i}^{\ast }\left( \boldsymbol{\beta }\right) ^{T}\right\} \otimes 
\boldsymbol{x}_{i}\boldsymbol{x}_{i}^{T} \\
& =\frac{1}{N}\sum\limits_{i=1}^{N}\left[ \mathrm{diag}^{\lambda +1}\{
\boldsymbol{\pi }_{i}^{\ast }\left( \boldsymbol{\beta }\right) \}-\boldsymbol{%
	\pi }_{i}^{\ast (\lambda +1)}\left( \boldsymbol{\beta }\right) \boldsymbol{%
	\pi }_{i}^{\ast }\left( \boldsymbol{\beta }\right) ^{T}-\boldsymbol{\pi }%
_{i}^{\ast }\left( \boldsymbol{\beta }\right) \boldsymbol{\pi }_{i}^{\ast
	(\lambda +1)}\left( \boldsymbol{\beta }\right) ^{T}\right. \\
& \left. +\left( \sum_{j=1}^{d+1}\pi _{ij}^{\lambda +1}\left( \boldsymbol{%
	\beta }\right) \right) \boldsymbol{\pi }_{i}^{\ast }\left( \boldsymbol{\beta 
}\right) \boldsymbol{\pi }_{i}^{\ast }\left( \boldsymbol{\beta }\right) ^{T}%
\right] \otimes \boldsymbol{x}_{i}\boldsymbol{x}_{i}^{T}.
\end{align*}%
For the matrix $\Omega_{N, \lambda}$, we note that
\begin{align*}
\boldsymbol{\xi }_{i,\lambda }\left( \boldsymbol{\beta }\right) & =\sum_{%
	\boldsymbol{y}_{i}^{\ast }\in 
	\mathcal{Y}%
	^{\ast }}\boldsymbol{u}_{i}(\boldsymbol{y}_{i}^{\ast },\boldsymbol{\beta }%
)f_{i}^{\lambda +1}(\boldsymbol{y}_{i}^{\ast },\boldsymbol{\beta }) \\
& =\sum_{j=1}^{d}\boldsymbol{u}_{i}(\boldsymbol{e}_{j,d},\boldsymbol{\beta }%
)f_{i}^{\lambda +1}(\boldsymbol{e}_{j,d},\boldsymbol{\beta })+\boldsymbol{u}%
_{i}(\boldsymbol{0}_{d},\boldsymbol{\beta })f_{i}^{\lambda +1}(\boldsymbol{0}%
_{d},\boldsymbol{\beta }) \\
& =\sum_{j=1}^{d}\pi _{ij}^{\lambda +1}\left( \boldsymbol{\beta }\right) \{
\boldsymbol{e}_{j,d}-\boldsymbol{\pi }_{i}^{\ast }\left( \boldsymbol{\beta }%
\right) \}\otimes \boldsymbol{x}_{i}-\pi _{i,d+1}^{\lambda +1}\left( 
\boldsymbol{\beta }\right) \boldsymbol{\pi }_{i}^{\ast }\left( \boldsymbol{%
	\beta }\right) \otimes \boldsymbol{x}_{i} \\
& =\left[ \boldsymbol{\pi }_{i}^{\ast (\lambda +1)}\left( \boldsymbol{\beta }%
\right) -\left\{ \sum_{j=1}^{d+1}\pi _{ij}^{\lambda +1}\left( \boldsymbol{%
	\beta }\right) \right\} \boldsymbol{\pi }_{i}^{\ast }\left( \boldsymbol{\beta 
}\right) \right] \otimes \boldsymbol{x}_{i}.
\end{align*}%
So in the expression of $\boldsymbol{\Omega }_{N,\lambda }\left( \boldsymbol{%
	\beta }\right)$, we obtain%
\begin{align*}
& \boldsymbol{\xi }_{i,\lambda }\left( \boldsymbol{\beta }\right) 
\boldsymbol{\xi }_{i,\lambda }\left( \boldsymbol{\beta }\right) ^{T} \\
& =\left[ \left\{ \boldsymbol{\pi }_{i}^{\ast (\lambda +1)}\left( 
\boldsymbol{\beta }\right) -\left( \sum_{j=1}^{d+1}\pi _{ij}^{\lambda
	+1}\left( \boldsymbol{\beta }\right) \right) \boldsymbol{\pi }_{i}^{\ast
}\left( \boldsymbol{\beta }\right) \right\} \left\{ \boldsymbol{\pi }%
_{i}^{\ast (\lambda +1)}\left( \boldsymbol{\beta }\right) -\left(
\sum_{j=1}^{d+1}\pi _{ij}^{\lambda +1}\left( \boldsymbol{\beta }\right)
\right) \boldsymbol{\pi }_{i}^{\ast }\left( \boldsymbol{\beta }\right) %
\right\} ^{T}\right] \\
& \otimes \boldsymbol{x}_{i}\boldsymbol{x}_{i}^{T} \\
& =\left\{ \boldsymbol{\pi }_{i}^{\ast (\lambda +1)}\left( \boldsymbol{\beta 
}\right) \boldsymbol{\pi }_{i}^{\ast (\lambda +1)}\left( \boldsymbol{\beta }%
\right) ^{T}\mathrm{-}\left( \sum_{j=1}^{d+1}\pi _{ij}^{\lambda +1}\left( 
\boldsymbol{\beta }\right) \right) \boldsymbol{\pi }_{i}^{\ast }\left( 
\boldsymbol{\beta }\right) \boldsymbol{\pi }_{i}^{\ast (\lambda +1)}\left( 
\boldsymbol{\beta }\right) ^{T}\right. \\
& \left. -\left( \sum_{j=1}^{d+1}\pi _{ij}^{\lambda +1}\left( \boldsymbol{%
	\beta }\right) \right) \boldsymbol{\pi }_{i}^{\ast (\lambda +1)}\left( 
\boldsymbol{\beta }\right) \boldsymbol{\pi }_{i}^{\ast }\left( \boldsymbol{%
	\beta }\right) ^{T}+\left( \sum_{j=1}^{d+1}\pi _{ij}^{\lambda +1}\left( 
\boldsymbol{\beta }\right) \right) ^{2}\boldsymbol{\pi }_{i}^{\ast }\left( 
\boldsymbol{\beta }\right) \boldsymbol{\pi }_{i}^{\ast }\left( \boldsymbol{%
	\beta }\right) ^{T}\right\} \otimes \boldsymbol{x}_{i}\boldsymbol{x}_{i}^{T},
\end{align*}%
from which the desired expression  for $\boldsymbol{\Omega }_{N,\lambda }\left( 
\boldsymbol{\beta }\right) $  can be obtained in a straightforward manner. 

\newpage

\noindent \textit{Proof of Theorem 2 of the main paper:}

We have $\boldsymbol{L}^{T}\widehat{\boldsymbol{\beta }}_{\lambda }-%
\boldsymbol{h}=\boldsymbol{L}^{T}(\widehat{\boldsymbol{\beta }}_{\lambda }-%
\boldsymbol{\beta }_{0})$ and 
\begin{equation*}
\sqrt{N}(\widehat{\boldsymbol{\beta }}_{\lambda }-\boldsymbol{\beta }_{0})%
\overset{%
	\mathcal{L}%
}{\underset{N\rightarrow \infty }{\longrightarrow }}%
\mathcal{N}%
\left( \boldsymbol{0}_{d(k+1)},\boldsymbol{\Sigma }_{\lambda }\left( 
\boldsymbol{\beta }_{0}\right) \right) ,
\end{equation*}%
with $\boldsymbol{\Sigma }_{\lambda }\left( \boldsymbol{\beta }_{0}\right)
=\lim_{N\rightarrow \infty }\boldsymbol{M}_{N,\lambda }(\widehat{\beta }%
_{\lambda })=\boldsymbol{J}_{\lambda }^{-1}\left( \boldsymbol{\beta }%
_{0}\right) \boldsymbol{V}_{\lambda }\left( \boldsymbol{\beta }_{0}\right) 
\boldsymbol{J}_{\lambda }^{-1}\left( \boldsymbol{\beta }_{0}\right) $.
Therefore 
\begin{equation*}
\sqrt{N}\left( \boldsymbol{L}^{T}\widehat{\boldsymbol{\beta }}_{\lambda }-%
\boldsymbol{h}\right) \overset{%
	\mathcal{L}%
}{\underset{N\rightarrow \infty }{\longrightarrow }}%
\mathcal{N}%
\left( \boldsymbol{0}_{r},\boldsymbol{L^{T}\Sigma }_{\lambda }\left( 
\boldsymbol{\beta }_{0}\right) \boldsymbol{L}\right)
\end{equation*}%
and the asymptotic distribution of $W_{N}(\widehat{\boldsymbol{\beta }}%
_{\lambda })$ is a chi-square distribution with $r$ degrees of freedom
because 
\begin{equation*}
\boldsymbol{L}^{T}\boldsymbol{\Sigma }_{\lambda }\left( \boldsymbol{\beta }_{0}\right) 
\boldsymbol{L}\left\{ \boldsymbol{L}^{T}\boldsymbol{J}_{\lambda }^{-1}\left( 
\boldsymbol{\beta }_{0}\right) \boldsymbol{V}_{\lambda }\left( \boldsymbol{%
	\beta }_{0}\right) \boldsymbol{J}_{\lambda }^{-1}\left( \boldsymbol{\beta }%
_{0}\right) \boldsymbol{L}\right\} ^{-1}=\boldsymbol{I}_{r}.
\end{equation*}%

\linespread{1.2} \clearpage

\newpage
\subsection{Tables}\label{tables}
\setlength\tabcolsep{5pt}
\renewcommand{\arraystretch}{0.9}

\begin{table}[th]
	\caption{Minimum density power divergence estimators for the mammography data}
	\label{table:mamo_1}\centering\vspace{0.5cm} 
	\begin{tabular}{lrrrrrr}
		\hline
		$\lambda $ & $\widehat{\beta }_{01,\lambda }$ & $\widehat{\beta }%
		_{SYMPT1,1,\lambda }$ & $\widehat{\beta }_{SYMPT2,1,\lambda }$ & $\widehat{%
			\beta }_{SYMPT3,1,\lambda }$ & $\widehat{\beta }_{HIST,1,\lambda }$ & $%
		\widehat{\beta }_{BSE,1,\lambda }$ \\ \hline
		0 & -1.5787 & 1.1096 & 1.4157 & 0.3151 & 1.0685 & 1.0620 \\ 
		0.1 & -1.6511 & 1.1699 & 1.3808 & 0.3249 & 1.0978 & 1.0706 \\ 
		0.2 & -1.7270 & 1.2352 & 1.3607 & 0.3482 & 1.1193 & 1.1048 \\ 
		0.3 & -1.7944 & 1.3256 & 1.3284 & 0.3585 & 1.1653 & 1.1745 \\ 
		0.4 & -1.9113 & 1.3119 & 1.2913 & 0.3458 & 1.2427 & 1.2872 \\ 
		0.5 & -2.0629 & 1.3763 & 1.2666 & 0.3371 & 1.3288 & 1.3704 \\ 
		0.6 & -2.1549 & 1.4995 & 1.2402 & 0.3293 & 1.3743 & 1.3928 \\ 
		0.7 & -2.2114 & 1.6137 & 1.2164 & 0.3188 & 1.4001 & 1.3812 \\ 
		0.8 & -2.2511 & 1.7101 & 1.1955 & 0.3044 & 1.4186 & 1.3456 \\ 
		0.9 & -2.2846 & 1.786 & 1.1754 & 0.2858 & 1.4294 & 1.301 \\ 
		1 & -2.3139 & 1.8401 & 1.156 & 0.2646 & 1.4343 & 1.2574 \\ \hline
		$\lambda $ & $\widehat{\beta }_{DETC1,1,\lambda }$ & $\widehat{\beta }%
		_{DETC2,1,,\lambda }$ & $\widehat{\beta }_{PB,1,\lambda }$ & $\widehat{\beta 
		}_{02,\lambda }$ & $\widehat{\beta }_{SYMPT1,2,\lambda }$ & $\widehat{\beta }%
		_{SYMPT2,2,\lambda }$ \\ \hline
		0 & -0.6689 & 0.236 & 0.1485 & 1.4446 & -1.3661 & -0.9276 \\ 
		0.1 & -0.5903 & 0.2141 & 0.1548 & 1.3131 & -1.4693 & -0.9747 \\ 
		0.2 & -0.5265 & 0.1889 & 0.1616 & 1.1912 & -1.6064 & -1.0177 \\ 
		0.3 & -0.4563 & 0.1729 & 0.1650 & 1.0880 & -1.7353 & -1.0571 \\ 
		0.4 & -0.3068 & 0.1688 & 0.1724 & 1.0336 & -1.7867 & -1.0618 \\ 
		0.5 & 0.0080 & 0.1472 & 0.1832 & 0.9214 & -1.7641 & -1.0453 \\ 
		0.6 & 0.2614 & 0.1151 & 0.1919 & 0.8251 & -1.7182 & -1.0412 \\ 
		0.7 & 0.4167 & 0.0842 & 0.1986 & 0.7629 & -1.6787 & -1.0418 \\ 
		0.8 & 0.5100 & 0.0523 & 0.2041 & 0.7219 & -1.6453 & -1.0422 \\ 
		0.9 & 0.5619 & 0.0186 & 0.2101 & 0.6907 & -1.6174 & -1.0393 \\ 
		1 & 0.5878 & -0.0137 & 0.2164 & 0.6675 & -1.5848 & -1.0309 \\ \hline
		$\lambda $ & $\widehat{\beta }_{SYMPT3,2,\lambda }$ & $\widehat{\beta }%
		_{HIST,2,\lambda }$ & $\widehat{\beta }_{BSE,2,\lambda }$ & $\widehat{\beta }%
		_{DETC1,2,\lambda }$ & $\widehat{\beta }_{DETC2,2,\lambda }$ & $\widehat{%
			\beta }_{PB,2,\lambda }$ \\ \hline
		0 & -0.2155 & -0.3005 & -0.2298 & -1.5812 & -0.6534 & -0.0710 \\ 
		0.1 & -0.1978 & -0.2761 & -0.2517 & -1.7736 & -0.7238 & -0.0540 \\ 
		0.2 & -0.1724 & -0.2540 & -0.2386 & -1.8798 & -0.7922 & -0.0392 \\ 
		0.3 & -0.1610 & -0.1980 & -0.1797 & -1.9429 & -0.8414 & -0.0310 \\ 
		0.4 & -0.1711 & -0.1776 & -0.0967 & -1.9037 & -0.8815 & -0.0243 \\ 
		0.5 & -0.1819 & -0.1364 & -0.0562 & -1.6614 & -0.9420 & -0.0119 \\ 
		0.6 & -0.1949 & -0.0954 & -0.0709 & -1.4194 & -1.0084 & -0.0012 \\ 
		0.7 & -0.2120 & -0.0677 & -0.0914 & -1.2394 & -1.0703 & 0.0065 \\ 
		0.8 & -0.2337 & -0.0484 & -0.1127 & -1.0941 & -1.1362 & 0.0125 \\ 
		0.9 & -0.2594 & -0.0387 & -0.1358 & -0.9758 & -1.2097 & 0.0186 \\ 
		1 & -0.2876 & -0.0362 & -0.1559 & -0.8796 & -1.2855 & 0.0245 \\ \hline
	\end{tabular}
\end{table}

\clearpage
\begin{table}[th]
	\caption{Minimum density power divergence estimators for the mammography data in
		absence of outliers (distinct covariate indices 1, 3, 17, 35, 75, 81, and 102)}
	\label{table:mamo_2}\centering\vspace{0.5cm} 
	\begin{tabular}{lrrrrrr}
		\hline
		$\lambda $ & $\widehat{\beta }_{01,\lambda }$ & $\widehat{\beta }%
		_{SYMPT1,1,\lambda }$ & $\widehat{\beta }_{SYMPT2,1,\lambda }$ & $\widehat{%
			\beta }_{SYMPT3,1,\lambda }$ & $\widehat{\beta }_{HIST,1,\lambda }$ & $%
		\widehat{\beta }_{BSE,1,\lambda }$ \\ \hline
		0 & -1.8925 & 1.4266 & 1.3669 & 0.3256 & 1.1116 & 1.6248 \\ 
		0.1 & -1.9181 & 1.4846 & 1.3406 & 0.3303 & 1.1473 & 1.6814 \\ 
		0.2 & -1.9530 & 1.4341 & 1.3124 & 0.3184 & 1.2141 & 1.7322 \\ 
		0.3 & -2.0763 & 1.3997 & 1.3008 & 0.3252 & 1.299 & 1.8002 \\ 
		0.4 & -2.1994 & 1.4703 & 1.2857 & 0.334 & 1.3638 & 1.8316 \\ 
		0.5 & -2.2740 & 1.5918 & 1.2690 & 0.3386 & 1.4018 & 1.8104 \\ 
		0.6 & -2.3082 & 1.7108 & 1.2544 & 0.3381 & 1.4271 & 1.7601 \\ 
		0.7 & -2.3281 & 1.7978 & 1.2461 & 0.3338 & 1.4402 & 1.6934 \\ 
		0.8 & -2.2511 & 1.7101 & 1.1955 & 0.3044 & 1.4186 & 1.3456 \\ 
		0.9 & -2.3616 & 1.8963 & 1.2391 & 0.3148 & 1.5180 & 1.5919 \\ 
		1 & -2.3914 & 1.9706 & 1.2271 & 0.2943 & 1.5283 & 1.4638 \\ \hline
		$\lambda $ & $\widehat{\beta }_{DETC1,1,\lambda }$ & $\widehat{\beta }%
		_{DETC2,1,,\lambda }$ & $\widehat{\beta }_{PB,1,\lambda }$ & $\widehat{\beta 
		}_{02,\lambda }$ & $\widehat{\beta }_{SYMPT1,2,\lambda }$ & $\widehat{\beta }%
		_{SYMPT2,2,\lambda }$ \\ \hline
		0 & -0.3568 & 0.2915 & 0.1793 & 1.5959 & -5.4674 & -1.3728 \\ 
		0.1 & -0.3155 & 0.2298 & 0.1806 & 1.5181 & -4.5081 & -1.2968 \\ 
		0.2 & -0.2328 & 0.1896 & 0.1801 & 1.4976 & -3.7857 & -1.2253 \\ 
		0.3 & -0.0204 & 0.1807 & 0.1862 & 1.4111 & -3.1953 & -1.1531 \\ 
		0.4 & 0.2309 & 0.1591 & 0.195 & 1.2880 & -2.7678 & -1.0862 \\ 
		0.5 & 0.4338 & 0.1286 & 0.2015 & 1.1915 & -2.4453 & -1.0338 \\ 
		0.6 & 0.5676 & 0.0986 & 0.2045 & 1.1360 & -2.2082 & -0.995 \\ 
		0.7 & 0.6432 & 0.0721 & 0.207 & 1.1174 & -2.0422 & -0.9631 \\ 
		0.8 & 0.6927 & 0.0440 & 0.2094 & 1.1162 & -1.9315 & -0.9355 \\ 
		0.9 & 0.7257 & 0.0441 & 0.2038 & 1.1455 & -1.8532 & -0.8839 \\ 
		1 & 0.7227 & 0.0063 & 0.2089 & 1.1462 & -1.7657 & -0.8674 \\ \hline
		$\lambda $ & $\widehat{\beta }_{SYMPT3,2,\lambda }$ & $\widehat{\beta }%
		_{HIST,2,\lambda }$ & $\widehat{\beta }_{BSE,2,\lambda }$ & $\widehat{\beta }%
		_{DETC1,2,\lambda }$ & $\widehat{\beta }_{DETC2,2,\lambda }$ & $\widehat{%
			\beta }_{PB,2,\lambda }$ \\ \hline
		0 & -0.3246 & -0.4810 & 0.1545 & -4.4961 & -0.5605 & -0.0756 \\ 
		0.1 & -0.3105 & -0.4198 & 0.2659 & -3.6716 & -0.6706 & -0.0711 \\ 
		0.2 & -0.3158 & -0.3607 & 0.3383 & -3.0346 & -0.7600 & -0.0736 \\ 
		0.3 & -0.3074 & -0.3102 & 0.4025 & -2.4822 & -0.8003 & -0.0681 \\ 
		0.4 & -0.3004 & -0.2556 & 0.4036 & -2.0631 & -0.8515 & -0.0573 \\ 
		0.5 & -0.2997 & -0.2048 & 0.3459 & -1.7300 & -0.9100 & -0.049 \\ 
		0.6 & -0.3054 & -0.1604 & 0.2685 & -1.4688 & -0.9677 & -0.0454 \\ 
		0.7 & -0.3175 & -0.1420 & 0.2092 & -1.2732 & -1.0205 & -0.0437 \\ 
		0.8 & -0.3369 & -0.1310 & 0.1663 & -1.1216 & -1.0774 & -0.0434 \\ 
		0.9 & -0.3511 & -0.0924 & 0.1685 & -1.0401 & -1.093 & -0.0527 \\ 
		1 & -0.3806 & -0.1013 & 0.1065 & -0.8946 & -1.1813 & -0.0492 \\ \hline
	\end{tabular}
\end{table}

\clearpage
\begin{table}[h]
	\caption{Estimated mean deviations of the predicted probabilities with
		respect to the relative frequencies for the mammography data in presence and
		the absence of outliers}
	\label{tab:mamo_ef}\centering\vspace{0.5cm} 
	\begin{tabular}{lllllllll}\hline
		& \multicolumn{4}{c}{With Outliers} & \multicolumn{4}{c}{Without Outliers}
		\\ \hline
		$\lambda $ & $\widehat{md}_{1}(\widehat{\beta }_{\lambda }^{pres})$ & $%
		\widehat{md}_{2}(\widehat{\beta }_{\lambda }^{pres})$ & $\widehat{md}_{3}(%
		\widehat{\beta }_{\lambda }^{pres})$ & $\overline{\widehat{md}}(\widehat{%
			\beta }_{\lambda }^{pres})$ & $\widehat{md}_{1}(\widehat{\beta }_{\lambda
		}^{abs})$ & $\widehat{md}_{2}(\widehat{\beta }_{\lambda }^{abs})$ & $%
		\widehat{md}_{3}(\widehat{\beta }_{\lambda }^{abs})$ & $\overline{\widehat{md%
		}}(\widehat{\beta }_{\lambda }^{abs})$ \\ \hline
		0 & 0.2497 & 0.1627 & 0.2244 & 0.2123 & 0.2333 & 0.1424 & 0.2130 & 0.1962 \\ 
		0.1 & 0.2481 & 0.1611 & 0.2241 & 0.2111 & 0.2331 & 0.1443 & 0.2126 & 0.1967
		\\ 
		0.2 & 0.2465 & 0.1600 & 0.2234 & 0.2100 & 0.2333 & 0.1459 & 0.2128 & 0.1973
		\\ 
		0.3 & 0.2446 & 0.1592 & 0.2221 & 0.2086 & 0.2325 & 0.1479 & 0.212 & 0.1975
		\\ 
		0.4 & 0.2424 & 0.1590 & 0.2202 & 0.2072 & 0.2311 & 0.1497 & 0.2105 & 0.1971
		\\ 
		0.5 & 0.2393 & 0.1596 & 0.2174 & 0.2054 & 0.2299 & 0.1511 & 0.2097 & 0.1969
		\\ 
		0.6 & 0.2368 & 0.1600 & 0.2156 & 0.2041 & 0.2293 & 0.1521 & 0.2096 & 0.1970
		\\ 
		0.7 & 0.2354 & 0.1604 & 0.2147 & 0.2035 & 0.2293 & 0.1530 & 0.2097 & 0.1973
		\\ 
		0.8 & 0.2346 & 0.1609 & 0.2144 & 0.2033 & 0.2293 & 0.1539 & 0.2102 & 0.1978
		\\ 
		0.9 & 0.2342 & 0.1614 & 0.2145 & 0.2034 & 0.2296 & 0.1546 & 0.211 & 0.1984
		\\ 
		1 & 0.2341 & 0.1621 & 0.2147 & 0.2036 & 0.2302 & 0.1559 & 0.2121 & 0.1994 \\ 
		\hline
	\end{tabular}%
\end{table}
\clearpage
\begin{table}[h]
	\caption{Mean deviations between predicted probabilities in the presence and
		the absence of outliers, mammography data}
	\label{tab:mamo_rob}\centering\vspace{0.5cm} 
	\begin{tabular}{lllll}
		\hline
		$\lambda $ & $\widehat{md}_{1}(\widehat{\beta }_{\lambda }^{pres},\widehat{%
			\beta }_{\lambda }^{abs})$ & $\widehat{md}_{2}(\widehat{\beta }_{\lambda
		}^{pres},\widehat{\beta }_{\lambda }^{abs})$ & $\widehat{md}_{3}(\widehat{%
			\beta }_{\lambda }^{pres},\widehat{\beta }_{\lambda }^{abs})$ & $\overline{%
			\widehat{md}}(\widehat{\beta }_{\lambda }^{pres},\widehat{\beta }_{\lambda
		}^{abs})$ \\ \hline
		0 & 0.0423 & 0.0295 & 0.0244 & 0.0321 \\ 
		0.1 & 0.0382 & 0.0254 & 0.0231 & 0.0289 \\ 
		0.2 & 0.0344 & 0.0231 & 0.0218 & 0.0264 \\ 
		0.3 & 0.0329 & 0.0205 & 0.0216 & 0.0250 \\ 
		0.4 & 0.0313 & 0.0177 & 0.0209 & 0.0233 \\ 
		0.5 & 0.0265 & 0.0165 & 0.0164 & 0.0198 \\ 
		0.6 & 0.0224 & 0.0159 & 0.0129 & 0.0171 \\ 
		0.7 & 0.0194 & 0.0157 & 0.0108 & 0.0153 \\ 
		0.8 & 0.0177 & 0.0159 & 0.0101 & 0.0146 \\ 
		0.9 & 0.0167 & 0.0170 & 0.011 & 0.0149 \\ 
		1 & 0.0151 & 0.0168 & 0.0100 & 0.0139 \\ \hline
	\end{tabular}
	\vspace{-0.5cm}
\end{table}

\clearpage
\begin{table}[h]
	\caption{ Minimum density power divergence estimators for the mammography data
		in the presence and absence of outliers (observations $3,101,108,116,131$ and $%
		136$) for the liver enzyme data}\centering\vspace{0.5cm} 
	\begin{tabular}{lrrrrrrrr}\hline
		& \multicolumn{4}{c}{With Outliers} & \multicolumn{3}{c}{Without Outliers}
		&  \\ \hline
		$\lambda $ & $\widehat{\beta }_{01,\lambda }$ & $\widehat{\beta }%
		_{11,\lambda }$ & $\widehat{\beta }_{21,\lambda }$ & $\widehat{\beta }%
		_{31,\lambda }$ & $\widehat{\beta }_{01,\lambda }$ & $\widehat{\beta }%
		_{11,\lambda }$ & $\widehat{\beta }_{21,\lambda }$ & $\widehat{\beta }%
		_{31,\lambda }$ \\ \hline
		0 & -11.3562 & -5.6253 & 8.7928 & -2.6974 & -13.7400 & -6.0662 & 9.8171 & 
		-2.9305 \\ 
		0.1 & -11.9074 & -8.2968 & 12.0207 & -3.7737 & -13.7704 & -11.9607 & 17.4944
		& -6.8940 \\ 
		0.2 & -13.9013 & -8.8064 & 13.0695 & -4.1168 & -14.3529 & -12.2444 & 17.8452
		& -6.8712 \\ 
		0.3 & -12.2642 & -9.7634 & 13.6570 & -3.9701 & -15.3101 & -11.7853 & 16.7262
		& -5.0802 \\ 
		0.5 & -19.7314 & -8.4364 & 13.9747 & -4.3414 & -32.3112 & -10.4116 & 19.7954
		& -7.8408 \\ 
		0.7 & -20.2738 & -8.2416 & 13.8016 & -4.1064 & -21.8045 & -9.2977 & 16.1602
		& -6.3050 \\ 
		0.9 & -15.9960 & -9.8613 & 14.085 & -3.1146 & -25.5161 & -9.3104 & 16.2868 & 
		-5.1840 \\ \hline
		$\lambda $ & $\widehat{\beta }_{02,\lambda }$ & $\widehat{\beta }%
		_{12,\lambda }$ & $\widehat{\beta }_{22,\lambda }$ & $\widehat{\beta }%
		_{32,\lambda }$ & $\widehat{\beta }_{02,\lambda }$ & $\widehat{\beta }%
		_{12,\lambda }$ & $\widehat{\beta }_{22,\lambda }$ & $\widehat{\beta }%
		_{32,\lambda }$ \\ \hline
		0 & 6.0838 & -6.6832 & 6.2269 & -2.3655 & 5.7237 & -7.0629 & 6.6722 & -2.3064
		\\ 
		0.1 & 6.7921 & -8.9577 & 8.7550 & -3.1893 & 7.9148 & -12.4001 & 13.1958 & 
		-5.6809 \\ 
		0.2 & 7.7503 & -9.7336 & 9.4087 & -3.4241 & 8.0888 & -12.9810 & 13.2378 & 
		-4.8349 \\ 
		0.3 & 7.5506 & -10.4368 & 10.1915 & -3.4620 & 8.1183 & -12.1101 & 11.7976 & 
		-3.5897 \\ 
		0.5 & 10.1578 & -10.1466 & 9.1995 & -3.3291 & 12.4949 & -11.5529 & 10.2186 & 
		-3.6970 \\ 
		0.7 & 10.3416 & -9.9920 & 8.9252 & -3.145 & 9.4703 & -9.9344 & 9.3999 & 
		-3.6731 \\ 
		0.9 & 8.8516 & -10.7851 & 9.8726 & -2.7130 & 11.0658 & -10.3511 & 8.8947 & 
		-2.6976 \\ \hline
		$\lambda $ & $\widehat{\beta }_{03,\lambda }$ & $\widehat{\beta }%
		_{13,\lambda }$ & $\widehat{\beta }_{23,\lambda }$ & $\widehat{\beta }%
		_{33,\lambda }$ & $\widehat{\beta }_{03,\lambda }$ & $\widehat{\beta }%
		_{13,\lambda }$ & $\widehat{\beta }_{23,\lambda }$ & $\widehat{\beta }%
		_{33,\lambda }$ \\ \hline
		0 & -5.8146 & -1.4866 & 2.0449 & 1.1873 & -6.7381 & -2.1027 & 2.6535 & 1.5256
		\\ 
		0.1 & -5.9428 & -2.1120 & 2.7574 & 1.1689 & -6.8763 & -2.8613 & 3.6163 & 
		1.3250 \\ 
		0.2 & -6.0419 & -2.2494 & 2.9226 & 1.1696 & -6.9709 & -2.8159 & 3.5948 & 
		1.3484 \\ 
		0.3 & -6.0837 & -2.7485 & 3.3916 & 1.2724 & -6.4149 & -3.5845 & 4.1865 & 
		1.5139 \\ 
		0.5 & -7.2092 & -2.7256 & 3.5795 & 1.3007 & -8.5194 & -3.1961 & 4.2202 & 
		1.4999 \\ 
		0.7 & -7.9578 & -2.9992 & 3.9547 & 1.3981 & -6.8510 & -2.9633 & 3.6638 & 
		1.4231 \\ 
		0.9 & -6.7793 & -3.4111 & 4.1136 & 1.4402 & -8.4754 & -3.6218 & 4.5947 & 
		1.5834 \\ \hline
	\end{tabular}%
	\label{table:enzyme_estimator}
\end{table}

\clearpage
\begin{table}[h]
	\caption{ Estimated mean deviations of the predicted probabilities with
		respect to the relative frequencies in the presence and the absence of outliers, for
		the liver enzyme data}\centering\vspace{0.5cm} 
	\begin{tabular}{llllll}\hline
		& \multicolumn{5}{c}{With Outliers} \\ \hline
		$\lambda $ & $\widehat{md}_{1}(\widehat{\beta }_{\lambda }^{pres})$ & $%
		\widehat{md}_{2}(\widehat{\beta }_{\lambda }^{pres})$ & $\widehat{md}_{3}(%
		\widehat{\beta }_{\lambda }^{pres})$ & $\widehat{md}_{4}(\widehat{\beta }%
		_{\lambda }^{pres})$ & $\overline{\widehat{md}}(\widehat{\beta }_{\lambda
		}^{pres})$ \\ \hline
		0 & 0.0810 & 0.1041 & 0.1922 & 0.1867 & 0.1410 \\ 
		0.1 & 0.0732 & 0.0899 & 0.1831 & 0.1694 & 0.1289 \\ 
		0.2 & 0.0677 & 0.0834 & 0.1791 & 0.1656 & 0.1240 \\ 
		0.3 & 0.0702 & 0.0831 & 0.1753 & 0.1586 & 0.1218 \\ 
		0.5 & 0.0596 & 0.0750 & 0.1653 & 0.1548 & 0.1137 \\ 
		0.7 & 0.0597 & 0.0748 & 0.1603 & 0.1497 & 0.1111 \\ 
		0.9 & 0.0635 & 0.0761 & 0.1657 & 0.1484 & 0.1135 \\ \hline
		& \multicolumn{5}{c}{Without Outliers} \\ \hline
		$\lambda $ & $\widehat{md}_{1}(\widehat{\beta }_{\lambda }^{abs})$ & $%
		\widehat{md}_{2}(\widehat{\beta }_{\lambda }^{abs})$ & $\widehat{md}_{3}(%
		\widehat{\beta }_{\lambda }^{abs})$ & $\widehat{md}_{4}(\widehat{\beta }%
		_{\lambda }^{abs})$ & $\overline{\widehat{md}}(\widehat{\beta }_{\lambda
		}^{abs})$ \\ \hline
		0 & 0.0739 & 0.0970 & 0.1759 & 0.1704 & 0.1293 \\ 
		0.1 & 0.0656 & 0.0796 & 0.1643 & 0.1512 & 0.1152 \\ 
		0.2 & 0.0651 & 0.0772 & 0.1663 & 0.151 & 0.1149 \\ 
		0.3 & 0.0622 & 0.0748 & 0.1657 & 0.1479 & 0.1126 \\ 
		0.5 & 0.0501 & 0.0667 & 0.153 & 0.1447 & 0.1036 \\ 
		0.7 & 0.0557 & 0.0758 & 0.1609 & 0.154 & 0.1116 \\ 
		0.9 & 0.0540 & 0.0707 & 0.1540 & 0.1431 & 0.1054 \\ \hline
	\end{tabular}%
	\label{table:enzyme_efficiency}
\end{table}

\clearpage
\begin{table}[h]
	\caption{ Mean deviations between predicted probabilities in the presence and
		the absence of outliers, for the liver enzyme data}
	\label{table:enzyme_robust}\centering\vspace{0.5cm} 
	\begin{tabular}{llllll}
		\hline
		$\lambda $ & $\widehat{md}_{1}(\widehat{\beta }_{\lambda }^{pres},\widehat{%
			\beta }_{\lambda }^{abs})$ & $\widehat{md}_{2}(\widehat{\beta }_{\lambda
		}^{pres},\widehat{\beta }_{\lambda }^{abs})$ & $\widehat{md}3(\widehat{\beta 
		}_{\lambda }^{pres},\widehat{\beta }_{\lambda }^{abs})$ & $\widehat{md}_{4}(%
		\widehat{\beta }_{\lambda }^{pres},\widehat{\beta }_{\lambda }^{abs})$ & $%
		\overline{\widehat{md}}(\widehat{\beta }_{\lambda }^{pres},\widehat{\beta }%
		_{\lambda }^{abs})$ \\ \hline
		0 & 0.0115 & 0.0103 & 0.0256 & 0.0242 & 0.0179 \\ 
		0.1 & 0.0130 & 0.0223 & 0.0316 & 0.0306 & 0.0244 \\ 
		0.2 & 0.0146 & 0.0187 & 0.0222 & 0.0226 & 0.0195 \\ 
		0.3 & 0.0126 & 0.0139 & 0.017 & 0.0177 & 0.0153 \\ 
		0.5 & 0.0138 & 0.0121 & 0.0193 & 0.0152 & 0.0151 \\ 
		0.7 & 0.0120 & 0.0134 & 0.0129 & 0.0113 & 0.0124 \\ 
		0.9 & 0.0167 & 0.0170 & 0.0228 & 0.0128 & 0.0173 \\ \hline
	\end{tabular}%
\end{table}

\newpage
\section{Supplementary Materials: R codes used for the computations in Section \ref{S6.2}}
The following R codes are is provided to help reader to implementation the proposed MDPDE and the corresponding   Wald-type tests for any practical application. These codes were used for our simulation studies presented in Section 6.2 of the main paper.

\begin{lstlisting}


############################################### 

#DEFINITION OF FUNCTIONS

PLRM<-function(x,beta){
###############################################
#INPUTS
#x: vector of explanatory variables, dim--1xk
#beta: vector of unknown parameters: dxk
#OUTPUT
#pisol: vector of probabilities of all categories, dim--1x(d+1)
###############################################
k=length(x)
d=length(beta)/k
betat=matrix(beta,k,d)
pisol=exp(x@%*%@betat)/(1+sum(exp(x@%*%@betat)))
pisol=c(pisol,1-sum(pisol))
return(pisol)
}


DISTANCE<-function(beta,lambda,X,Y,n){
###############################################
#INPUTS
#beta: vector of unknown parameters: dxk
#lambda: tuning parameter of power divergences
#X: matrix of explanatory variables, dim--Nxk
#Y: matrix of responses, dim--kx(d+1)
#n:vector of sizes n(x), dim--1xN
#OUTPUT
#divsol: power divergence distance
############################################### 
N=dim(X)[1]
d=length(beta)/(dim(X)[2])
#initialize
divsol=0 
#DPD (lambda!=0)
if (lambda !=0){
for (i in 1:N){
pis=PLRM(X[i,],beta)
for (l in 1:(d+1)){
divsol=divsol+ (n[i]*pis[l])^(1+lambda)-(1+1/lambda)*Y[i,l]*(n[i]*pis[l])^(lambda)
}
} 
}
#MLE (lambda==0)
else{
for (i in 1:N){
pis=PLRM(X[i,],beta)
for (l in 1:(d+1)){
if (Y[i,l]==0){divsol=divsol+0}
else{divsol=divsol+ Y[i,l]*log(Y[i,l]/(n[i]*pis[l]))}
}
}
}
return(divsol)
}


TEST<-function(beta,lambda,Y,X, PIS,L){
###############################################
#INPUTS
#beta: vector of unknown parameters: dxk
#lambda: tuning parameter of power divergences
#Y: matrix of responses, dim--kx(d+1)
#X: matrix of explanatory variables, dim--Nxk
#PIS: matrix of probabilities under the model
#L: matrix of the contrast
#OUTPUT
#M: matrix of the Wald-type test
############################################### 
d=dim(Y)[2]-1
N=dim(X)[1]
tol= 1e-12 #to discard non-invertible matrices
#initialize
PHIM=matrix(0,length(beta),length(beta))
OMEGAM=matrix(0,length(beta),length(beta))
for (i in 1:N){
#definition of auxiliar variables
clambda=sum(PIS[i,]^(1+lambda))
cdoslambda=sum(PIS[i,]^(1+2*lambda))
etai=kronecker(PIS[i,1:d]^(1+lambda)-clambda*PIS[i,1:d],X[i,])
#computation of matrices Phi and Omega
PHIM=PHIM+kronecker(diag(PIS[i,1:d]^(1+lambda))-PIS[i,1:d]@%*%@t(PIS[i,1:d]^(1+lambda))-PIS[i,1:d]^(1+lambda)@%*%@t(PIS[i,1:d])+ clambda*PIS[i,1:d]@%*%@t(PIS[i,1:d]),X[i,]@%*%@t(X[i,]))
OMEGAM=OMEGAM+kronecker(diag(PIS[i,1:d]^(1+2*lambda))-PIS[i,1:d]@%*%@t(PIS[i,1:d]^(1+2*lambda))-PIS[i,1:d]^(1+2*lambda)@%*%@t(PIS[i,1:d])+ cdoslambda*PIS[i,1:d]@%*%@t(PIS[i,1:d]),X[i,]@%*%@t(X[i,]))-etai@%*%@t(etai)
}
if (det(PHIM)<tol){M=NULL}
else{
aux=t(L)@%*%@solve(PHIM)@%*%@OMEGAM@%*%@solve(PHIM)@%*%@L
if(det(aux)<tol){M=NULL}
else{M=solve(aux)}
}
return(M)
}


############################################### 

#SIMULATION

set.seed(2509) #set the seed
Samples=1000 #number of samples in the simulation

beta0=c(0,-0.9,0.1,0.6,-1.2,0.8) #true value of the variable vector
N=c(100,125,150,175,200,225,250,275,300,325) #samples sizes considered
lambda=c(0,0.1,0.2,0.3,0.5,0.7,0.9) #lambdas considered for the MDPDEs
out=0.95 #percentage of pure data

# We will contrast the null hypothesis H0:beta02=0.6 with 1 degree of freedom
L=matrix(0,6,1)
L[4,1]=1
ll=matrix(0.6,1,1)

valid_samples=matrix(Samples,length(lambda),length(N)) #initialize the number of valid samples

level_model_without_outliers=matrix(0,length(lambda),length(N)) #matrix of levels for the pure case
level_model_with_outliers=matrix(0,length(lambda),length(N)) #matrix of levels for the contaminated case
initial=c(0,0,0,0,0,0) #initialize the value of the parameter vector

#LOOP FOR THE SIMULATION
for (l in 1:length(N)){
#definition of parameters which depend on the size of N
n=vector(,N[l])+1
PIS=matrix(0,N[l],3)
PIS1=PIS
PIS2=PIS
Y=matrix(0,N[l],3)  
for (s in 1:Samples){
#definition of parameters which vary for each sample
#generation fo the matrix X
x0=vector(,N[l])+1
x1=rnorm(N[l],0,1)
x2=rnorm(N[l],0,1)
X=cbind(x0,x1,x2)
for (i in 1:N[l]){
PIS[i,]=PLRM(X[i,],beta0)
Y[i,]=rmultinom( 1,n[i],PIS[i,]) 
}
#change in the response variable for outliers
Y2=Y
for (i in floor(out*N[l]):N[l]){
aux=c(Y[i,3],Y[i,1],Y[i,2])
Y2[i,]=aux 
}
#calculation for each lambda
for (r in 1:length(lambda)){
#estimation of the vector of parameters beta
aux_model_without_outliers=tryCatch(nlm(DISTANCE,initial,lambda[r],X,Y,n ),error=function(sol){sol$code=3;return(sol)})
aux_model_with_outliers=tryCatch(nlm(DISTANCE,initial,lambda[r],X,Y2,n ),error=function(sol){sol$code=3;return(sol)})
beta_model_without_outliers=aux_model_without_outliers$estimate
beta_model_with_outliers=aux_model_with_outliers$estimate
#if the minimization does not converge: we delete this sample
if(length(beta_model_without_outliers)==0){valid_samples[r,l]=valid_samples[r,l]-1}
else if(length(beta_model_with_outliers)==0){valid_samples[r,l]=valid_samples[r,l]-1}
else{
#estimated probabilities, depend on lambda
for (i in 1:N[l]){
PIS1[i,]=PLRM(X[i,],beta_model_without_outliers)
PIS2[i,]=PLRM(X[i,],beta_model_with_outliers)
}
#finally, the estimation of the test matrices
test_without_outliers=TEST(beta_model_without_outliers,lambda[r],Y,X, PIS1,L)
test_with_outliers=TEST(beta_model_with_outliers,lambda[r],Y2,X, PIS2,L)
#what happens if it is not invertible?: we delete this sample
if (length(test_without_outliers)==0){valid_samples[r,l]=valid_samples[r,l]-1}
else if (length( test_with_outliers)==0){valid_samples[r,l]=valid_samples[r,l]-1}
#if everything's ok, let's compute the levels
else{
result_model_without_outliers=(t(L) @%*%@ beta_model_without_outliers -ll)@%*%@test_without_outliers @%*%@t((t(L) @%*%@ beta_model_without_outliers -ll))
result_model_with_outliers=(t(L) @%*%@ beta_model_with_outliers -ll)@%*%@test_with_outliers @%*%@t((t(L) @%*%@ beta_model_with_outliers -ll))
#the asymptotic distribution under the null hypothesis is a chi-square distribution with 1 degree of freedom
level_model_without_outliers[r,l]=level_model_without_outliers[r,l]+(result_model_without_outliers>qchisq(0.95, df=1))
level_model_with_outliers[r,l]=level_model_with_outliers[r,l]+(result_model_with_outliers>qchisq(0.95, df=1))
}
}
}
}  
}
#taking into account the valid samples, we obtain the final values
levels_without_outliers=level_model_without_outliers/valid_samples
levels_with_outliers=level_model_with_outliers/valid_samples









############################################### 

#RESULTS

#names of rows and cols
row.names(levels_without_outliers)=c(lambda)
row.names(levels_with_outliers)=c(lambda)
colnames(levels_without_outliers)=c(N)
colnames(levels_with_outliers)=c(N)

#print on screen
(levels_without_outliers)
(levels_with_outliers)

#plotting the results
x_axis=N

#plotting the pure case
level=levels_without_outliers 
par(mar=c(5.1, 4.1,5.1, 6.1), xpd=FALSE)
#print the levels of the MLE
plot(x_axis,level[1,],xlim=c(min(x_axis),max(x_axis)),ylim=c(0.03,0.07),xlab=expression(N),ylab=c("Level"),col=1,"o",lty=1, pch=1)
abline(a=NULL,b=NULL,v=x_axis,col="lightgray",  lty = 3)
abline(a=NULL,b=NULL,h=c(0:100)/(100),col="lightgray",  lty = 3)
abline(a=NULL,b=NULL,h=c(0.05),  lty = 3) #for the significance level alpha=0.05
#print the levels of the alternative MDPDEs
for (j in 2:length(lambda)){
lines(x_axis,level[j,],col=j,"o",lty=j, pch=j)
}
#print the legend
par(mar=c(5.1, 3.1,5.1, 6.1), xpd=TRUE)
legend("topright", title= expression(lambda),horiz=F, inset=c(-0.28,0), as.character(lambda), col = c(1:length(lambda)), lty = c(1:length(lambda)), ,pch = c(1:length(lambda)))

#plotting the contaminated case
level=levels_with_outliers 
par(mar=c(5.1, 4.1,5.1, 6.1), xpd=FALSE)
#print the levels of the MLE
plot(x_axis,level[1,],xlim=c(min(x_axis),max(x_axis)),ylim=c(0.03,max(level)),xlab=expression(N),ylab=c("Level"),col=1,"o",lty=1, pch=1)
abline(a=NULL,b=NULL,v=x_axis,col="lightgray",  lty = 3)
abline(a=NULL,b=NULL,h=c(0:10)/(10),col="lightgray",  lty = 3)
abline(a=NULL,b=NULL,h=c(0.05),  lty = 3) #for the significance level alpha=0.05
#print the levels of the alternative MDPDEs
for (j in 2:length(lambda)){
lines(x_axis,level[j,],col=j,"o",lty=j, pch=j)
}
#print the legend
par(mar=c(5.1, 3.1,5.1, 6.1), xpd=TRUE)
legend("topright", title= expression(lambda),horiz=F, inset=c(-0.28,0), as.character(lambda), col = c(1:length(lambda)), lty = c(1:length(lambda)), ,pch = c(1:length(lambda)))



\end{lstlisting}

\end{document}